\theoremstyle{theorem}
\newtheorem*{thm}{Theorem}
\newtheorem{fact}{Fact}
\newtheorem*{lem*}{Lemma}
\newtheorem*{php}{Physical expectation}
\newtheorem{que}{Question}
\newtheorem{ans}{Answer}
\newtheorem*{claim}{Claim}
\theoremstyle{definition}
\newtheorem{defn}{Definition}
\newtheorem{rem}{Remark}
\newtheorem{exe}{Example}[section]
\def\Dsl{\,\raise.15ex\hbox{/}\mkern-13.5mu D}
\def\dsl{\,\raise.25ex\hbox{/}\mkern-10.5mu \partial}
\renewcommand{\arraystretch}{1.3}
\title{Symplectic Singularities, Color Confinement,\\ and the Quantum Dirac Sheaf
}
\authors{Sergio Cecotti\footnote{e-mail: {\tt cecotti@bimsa.cn}}\vskip 9pt

\centerline{Beijing Institute of Mathematical Sciences and Applications (BIMSA)}
\centerline{Huaibei Town, Huairou District, Beijing 101408, China}
\centerline{and}
\centerline{Qiuzhen College, Tsinghua University, Beijing, China,}
\centerline{SISSA,
via Bonomea 265, Trieste, Italy}
}
\abstract{A singularity $\C^{2r}/G$, with $G$ a split symplectic reflection group, may or may not be crepant. Then the total space $\mathscr{X}$ of the Donagi-Witten integrable system
is crepant for some 4d $\cn=2$ SCFT and non-crepant for others.
Which physical mechanism controls the (dis)crepancy? Surprisingly, it is the
 detailed physics of color confinement (and its generalizations for non-Lagrangian QFT).
A 4d $\cn=2$ SCFT carries a Frobenius algebra $\mathcal{R}$, 
the quantum cohomology ring of $\mathscr{X}$ (defined via mirror symmetry), and 
$\mathscr{X}$ is crepant iff its \emph{central Witten index} $\dim\mathcal{R}$ is equal to its Euler number $\chi(\mathscr{X})$. When the SCFT has a Lagrangian, $\mathcal{R}$ is fixed by compatibility with confinement, and physics may require a discrepancy to be present.
The quantum cohomology depends on quantum-geometric data, and a classical Seiberg-Witten geometry may have several inequivalent $\mathcal{R}$:  
a relevant quantum datum is the \emph{Dirac sheaf} $\mathscr{L}$ which refines Dirac charge quantization.

We get several other results of independent interest, and 
we fully classify all special geometries of $\bigstar$-type in rank $r>6$.}
\begin{document}

\maketitle

\tableofcontents

\newpage

\section{Introduction, Summary, and Overview}

The natural higher dimensional generalization of the 
McKay correspondence \cite{mckay1,mckay2}
are the  quotient \emph{symplectic singularities}  ($\equiv$ hyperK\"ahler orbifolds)
\be\label{quoquo}
\C^{2r}/G,\qquad G\subset Sp(r)\ \text{finite.}
\ee
When $r=1$ the McKay singularity $\C^2/G$ is \emph{crepant,}
i.e.\! it has a smooth resolution which is a holomorphic symplectic manifold.
In higher dimension this is not always the case. First there is a necessary condition:
 to have a crepant resolution $G$ must be
a \emph{symplectic reflection group}  \cite{ver} (see \S.\,\ref{s:refl}).
An important special case are the \emph{split} symplectic reflection groups: unitary reflection groups $G$  \cite{ST1,ST2} which act diagonally on $\C^r\oplus\C^r$
through their two conjugate reflection representations. 
The simplest instance is $G=W(\mathfrak{g})$,
the Weyl group of a rank-$r$ simple Lie algebra $\mathfrak{g}$. 
However the above condition is \emph{not} sufficient: for ``most'' unitary reflection groups $G$ the hyperK\"ahler
orbifold \eqref{quoquo} is \emph{non}-crepant \cite{SR0,SR1,SR2,SR3,SR4,SR5,SR6,SR7,SR8}.
This implies, \emph{inter alia,} that the singular hyperK\"ahler manifold which is the total space $\mathscr{X}$ of the Donagi-Witten integrable system\footnote{\ Henceforth $\mathscr{X}$ is called a ``special geometry''. The name reflects the fact that, while such a geometry induces a ``special K\"ahler geometry'' on its base $\mathscr{C}$, we never mention or use the underlying K\"ahler structure.} for a 4d $\cn=2$ SCFT \cite{donagi1,donagi2} has no crepant resolution in general, even if we limit (as we do in this note) to the simple situations where the special K\"ahler metric is locally flat.
Thus for most $\cn=2$ SCFTs the higher codimension singularities are \emph{intrinsic} and $\mathscr{X}$ cannot be resolved to a smooth symplectic manifold. This leads to some natural questions:

\smallskip

\begin{que}
What is the \emph{physical} meaning of these \emph{intrinsic} singularities?
\end{que}
\begin{que}
Which \emph{non-perturbative} physical mechanism
makes the special geometry to be crepant\,\footnote{\ \emph{A priori} the crepant resolution (when it exists) may be non-unique. This is typically the case for the symplectic singularities $\C^{2r}/G$ \cite{number}. However in \S.\,\ref{s:compact} we shall present some evidence suggesting that for a full special geometry (of the special class we consider in this note) there is \emph{at most} one crepant resolution.} for some $\cn=2$ SCFT and non-crepant for others?
\end{que}

In this note we advocate the following 

\begin{ans} The non-perturbative phenomenon which controls (dis)crepancy of $\mathscr{X}$ is \emph{color confinement} and its
analogues
for SUSY QFTs which have \textbf{no} weakly-coupled Lagrangian formulation.
\end{ans}

Let us explain. In their seminal paper \cite{SW1} Seiberg and Witten started from their geometric  solution
of non-perturbative $\cn=2$ SYM with gauge algebra $\mathfrak{g}=\mathfrak{su}(2)$, added a  mass perturbation flowing it to $\cn=1$ SYM, and concluded that the $\cn=1$ theory must confine color. In reverse, their argument says that
the sheer fact that $\cn=1$ SYM confines for all $\mathfrak{g}$ puts strong compatibility conditions
on the geometry of $\mathscr{X}$ \emph{for all} $\cn=2$ QFTs with a $\cn=1$ SUSY-preserving relevant perturbation which flows them to $\cn=1$ SYM in some vacua. In \S\S.\,3,\,4 of \cite{donagi1}
Donagi and Witten applied this strategy to an instance of our present set-up, namely $\cn=4$ SYM
with $\mathfrak{g}=\mathfrak{su}(n)$.

When spelled out in full detail,
``$\cn=1$ SYM confines color'' is the statement that its chiral ring
$\mathcal{R}$ \cite{cRing} is the Frobenius $\C$-algebra\footnote{\ The symbol $\mathring{\simeq}$ stands for isomorphism of abstract $\C$-algebras spoiled of other structures. 
} 
\be\label{right}
\mathcal{R}\,\mathring{\simeq}\, \C[X]/(X^{h^\vee}-\Lambda^{3h^\vee}), \qquad X\propto \mathrm{tr}(W^\alpha W_\alpha),
\ee 
($h^\vee$ $\equiv$ the dual Coxeter number of $\mathfrak{g}$) endowed with a trace map
$\mathfrak{f}\colon\mathcal{R}\to\C$ of degree $\deg \mathfrak{f}=-r\bmod h^\vee$ ($r$ $\equiv$ rank of $\mathfrak{g}$).
The algebra $\mathcal{R}$ determines the spectrum of BPS domain-walls which separate regions with distinct SUSY vacua \cite{AV,on}. 
 The special geometry $\mathscr{X}$ for $\cn=4$ SCFT
 (seen as a $\cn=2$ QFT) should reproduce the exact chiral ring \eqref{right} 
 in all vacua of the mass-perturbed theory
 where the only light degrees of freedom 
 are the ones of $\cn=1$ SYM with some gauge subalgebra
 $\mathfrak{g}^\prime\subseteq \mathfrak{g}$.
 In this note we focus on the special vacua with 
 $\mathfrak{g}^\prime\equiv\mathfrak{g}$. 
They are the vacua which sit at the origin $0$ in the Coulomb branch $\mathscr{C}$ of the undeformed
$\cn=4$ theory. 

\medskip

Let $\mathscr{X}$ be the special geometry of $\cn=4$ SYM with gauge algebra $\mathfrak{g}$
(set in its ``minimal'' form  to kill the torsion of $H_\bullet(\mathscr{X},\Z)$),
and let $\mathscr{X}_0$ be its fiber over the origin  of the Coulomb branch $\mathscr{C}$.
We define the \emph{quantum cohomology}
$H^\bullet(\mathscr{X}_0)_\text{quantum}$ \emph{via} the 2d $(2,2)$ Landau-Ginzburg model
mirror to $\mathscr{X}_0$ \cite{hori}. We stress that $H^\bullet(\mathscr{X}_0)_\text{quantum}$ is neither the topological cohomology of $\mathscr{X}_0$ nor
its ``usual'' orbifold cohomology: $H^\bullet(\mathscr{X}_0)_\text{quantum}$ is the quantum cohomology of the \emph{normalization} of
the central fiber $\mathscr{X}_0$ endowed with its canonical structure of Deligne-Mumford
stack (cf.\! \S\S.\,\ref{s:two},\,\ref{s:stack}). The purpose of the physical discussion in section 4 (which may look
excessively long and pedantic for such an ``elementary'' result) is to satisfy the reader that this is the precise notion of ``quantum cohomology'' which captures the dynamical aspects we are interested in.
Other ``quantum cohomologies'' of $\mathscr{X}_0$ also give precious (but different) physical informations.

The compatibility condition between geometry and non-perturbative physics requires $H^\bullet(\mathscr{X}_0)_\text{quantum}$ to be isomorphic (as a Frobenius algebra) to the chiral ring \eqref{right} dictated by confinement.
Since the total space $\mathscr{X}$ retracts
to $\mathscr{X}_0$, we write the equality in the more suggestive form  
\be\label{ture}
\mathcal{R}=H^\bullet(\mathscr{X})_\text{quantum}
\ee
which may be valid in more general situations than the ones studied in this note;
e.g.\! it holds in the original Seiberg-Witten set-up \cite{SW1} where the QFT is asymptotically-free. 

The detailed mechanism of color confinement for a given $\mathfrak{g}$ predicts
$H^\bullet(\mathscr{X})_\text{quantum}$ to be as in \eqref{right}. The existence of a crepant resolution for $\mathscr{X}$ yields
a second prediction for $H^\bullet(\mathscr{X})_\text{quantum}$. When the two predictions do not match, $\mathscr{X}$ (and $\C^{2r}/W(\mathfrak{g})$) have \textbf{\emph{no}} crepant resolution. Conversely, the crepant resolution exists whenever there is agreement.

\medskip

In this paper we study the more general class of $\cn=2$ SCFTs/geometries directly related to split symplectic
singularities. This class -- dubbed $\bigstar$-SCFTs -- contains all the $\cn=4$ and $\cn=3$
SCFTs, higher-rank Minahan-Nemeshanski (MN) models \cite{MN1,MN2,higher MN}, and a lot more.\footnote{\ See \S.\,\ref{s:star} for their definition, characterization, structure (classical \& quantum), and complete  classification.} Most $\bigstar$-SCFTs
have no weakly-coupled Lagrangian formulation.
In the broader context of $\bigstar$-geometries, 
\textbf{Answer 1} may be decoded in 

\begin{ans}
A ($\bigstar$-class, torsionless) special geometry $\mathscr{X}$ has a crepant resolution if and only if its quantum cohomology
algebra $H^\bullet(\mathscr{X})_\text{\rm quantum}$ is a deformation of the classical one $H^\bullet(\mathscr{X},\C)$, i.e.\! iff the two algebras, while \textbf{non}-isomorphic, have the same dimension
\be\label{766zz}
\dim\mathcal{R}\equiv \dim H^\bullet(\mathscr{X})_\text{\rm quantum}=\dim H^\bullet(\mathscr{X},\C).
\ee 
The ``quantum excess'' of Euler number
\be
\delta\chi\;\overset{\rm def}{=}\;\chi(\mathscr{X})_\text{\rm quantum}-\chi(\mathscr{X})_\text{\rm topological},
\ee
 if non-zero,
is the obstruction to the existence of a crepant resolution for $\mathscr{X}$. 
\end{ans}
 
Applied to $\cn=4$ SYM with a gauge algebra $\mathfrak{g}$, eq.\eqref{766zz} says that
$\mathscr{X}$, hence $\C^{2r}/W(\mathfrak{g})$,
has a crepant resolution if and only if the rank $r$ and the dual Coxeter number $h^\vee$ of $\mathfrak{g}$ are related by the condition 
\be\label{cond1}
\delta\chi\equiv h^\vee-(r+1)=0.
\ee
Needless to say, this simple criterion exactly reproduces the math theorems about
the existence of crepant resolutions for $\C^{2r}/W(\mathfrak{g})$ (see \S.\,\ref{s:refl}).
In this paper we shall reproduce more general deep math theorems
using the elementary physical insight \eqref{766zz}. 

\medskip

When the unitary reflection group $G$ is not a Weyl group, the associated special geometry $\mathscr{X}$ describes a SCFT
with \textbf{no} weakly-coupled Lagrangian formulation (\S.\,\ref{s:star}). In this situation we have no direct field-theoretic
prediction for $\mathcal{R}$. However the \textsc{rhs} of eq.\eqref{ture} is still well-defined
and easy to compute thanks to $tt^*$ methods \cite{tt*,on},  mirror symmetry \cite{hori}, and $4d/2d$ correspondences \cite{4d2d}. \textbf{Answer 2} is still true in this more general framework. 

\medskip

The (dis)crepancy is controlled by the \emph{quantum} cohomology of the total space $\mathscr{X}$,
\emph{not} by its classical special geometry. The quantum cohomology is not uniquely determined by
the underlying complex orbifold $\mathscr{X}$ but also by \emph{quantum-geometric}
structures over it. To a single classical special orbifold $\mathscr{X}$ there may correspond
several inequivalent \emph{quantum} geometries with different values of the ``central
Witten index'' $\dim\mathcal{R}$. In this paper we shall play with the idea of quantum-geometric structures on $\mathscr{X}$,
and describe the most natural one on which $\dim\mathcal{R}$
depends: \emph{the quantum Dirac sheaf}.
Varying this quantum datum we may be lead to distinct physical interpretations
of a single classical integrable system $\mathscr{X}$. For instance in rank-1 the same Kodaira fiber may be
endowed with different quantum-geometric structures, and may have diverse
physical meanings.

\medskip
In this paper we study a large class of split symplectic reflection groups $G$,
and the corresponding special $\bigstar$-geometries.
However, to give a flavor of our arguments, let us say some more introductory words about the ``classical''
case of Weyl groups and then quickly comment on the more general case. 

\paragraph{Weyl groups.}
When $G$ is a Weyl group $W(\mathfrak{g})$, the symplectic singularity $\C^{2r}/W(\mathfrak{g})$ is
a local model of the singularities of $\mathscr{X}$ for $\cn=4$ SYM 
with gauge algebra $\mathfrak{g}$ . The orbifold $\C^{2r}/W(\mathfrak{g})$ has a crepant resolution iff  there is a simply-connected Lie group $\cg$, with Weyl group $W(\mathfrak{g})$, such that the space of \emph{commuting
pairs} in $\cg$, modulo conjugacy,
\be\label{pairs}
\cp(\cg)=\big\{g_1,g_2\in \cg,\ g_1g_2=g_2g_1\big\}\big/\big((g_1,g_2)\sim (hg_1h^{-1},hg_2h^{-1}),\ h\in \cg\big)
\ee
admits the structure of a \emph{smooth}
complex projective variety.
Roughly speaking, the non-resolvable singularities of the $\cn=4$ geometry
are captured by the singularities of the space $\cp(\cg)$ which are
described by a theorem of Looijenga \cite{loo}: $\cp(\cg)$ is the weighted projective space  
$\mathbb{P}(d_0,d_1,\dots,d_r)$ \cite{delorme,dolgachev} where the weights $d_i$ are the dual Coxeter labels
of the Lie algebra $\mathfrak{g}$. $\cp(\cg)$ is smooth iff $d_i=1$ for all $i$,
i.e.\! iff the rank $r$ and the dual Coxeter number $h^\vee$ of $\cg$ satisfy the relation \eqref{cond1}.

Geometrically the connection between the symplectic singularity \eqref{quoquo}
and the space \eqref{pairs} 
is not surprising.
In terms of non-perturbative physics, however, the condition \eqref{cond1} is quite suggestive.
The Looijenga theorem \cite{loo} played a crucial role in Witten's non-perturbative analysis of $\cn=1$ SYM with gauge group $\cg$ \cite{WittenN1,kac1} (and also in \cite{AV}). We know that $\cn=1$ SYM is realized in a confining phase, with a mass gap, and 
its $\Z_{2h^\vee}$ chiral symmetry is broken down to $\Z_2$. The physics of color confinement then predicts $h^\vee$ isolated SUSY invariant\footnote{\ SUSY is unbroken since the mass gap rules out a goldstino.} gapped vacua, and
hence a Witten index $h^\vee$: its chiral ring should be as in eq.\eqref{right}. A naive computation of the index would instead return the Euler number $r+1$ of
$\cp(\cg)$, a result which is consistent with confinement only when \eqref{cond1} holds, that is,
only for $\cg=SU(r+1)$ or $Sp(r)$.
The apparent contradiction is solved \cite{WittenN1} by noticing that when $\cp(\cg)$ is smooth
eq.\eqref{cond1} holds, and there is agreement between the naive index and confinement, 
whereas when $h^\vee>r+1$ the space $\cp(\cg)$ has singularities,
and there are further contributions to the index from the singular loci in $\cp(\cg)$:
a careful computation of the index then produces full agreement with confinement \cite{WittenN1,kac1,borel}.
In \S.\,\ref{s:cartoon} we shall revisit the computation and propose an alternative interpretation of it (previously formulated in \cite{AV}) in terms of \emph{quantum cohomology} and \emph{quantum Euler numbers} as suggested by the equalities
\be\label{716cc}
\begin{aligned}
\chi(\cp(\cg))_\text{topological}=r+1&\equiv\big[\text{naive (wrong!) $\cn=1$ Witten index}\big]\\
\chi(\cp(\cg)^\text{can})_\text{stacky}=h^\vee&\equiv\big[\text{correct $\cn=1$ Witten index}\big],
\end{aligned}
\ee
where $\cp(\cg)^\text{can}$ is $\cp(\cg)$ endowed  with its canonical Mumford-Deligne (MD) stack structure (\S.\,\ref{s:quantum}).
Here we think of the stack $\cp(\cg)^\text{can}$ as a ``quantum'' version of its coarse moduli space $\cp(\cg)$.
We shall check that this quantum-geometric interpretation is the physically correct one using mirror symmetry (see also \cite{AV}).
We conclude that the singularities of $\cp(\cg)$ are precisely the ones required
-- in the light of quantum ($\equiv$ stacky) cohomology -- to reproduce the correct physics of color confinement: thus
confinement is the physical mechanism which forces the spaces $\cp(\cg)$ to have singularities.
While this non-perturbative mechanism is most easily seen in $\cn=1$ SYM which is a gapped confining theory,
it has also implications for $\cn=4$ SYM which is realized in a Coulomb phase (see \S.\,\ref{s:cartoon}).
One concludes that a (simply-connected) special geometry $\mathscr{X}$
is crepant iff  the ``central Witten index'' $\dim\mathcal{R}$ has its naive value $\chi(\mathscr{X})$.

\paragraph{General unitary reflection groups: $\bigstar$-SCFTs.}
The Looijenga theorem has been generalized in various directions \cite{kac2,jap,rus1,rus2,rains}. The first generalization is to arbitrary
unitary reflection groups $G$ acting on suitable Abelian varieties $A$. This result
 allows us to extend the argument from $\cn=4$ SYM to the larger class of $\cn=2$ $\bigstar$-SCFTs. 

The second generalization \cite{rus2,rains} involves the choice of an ample line bundle.
The general theorem states that for \emph{special}\footnote{\ See \S.\,\ref{s:strong} for the math characterizations of the special pairs  $(A,\mathscr{L})$ for which the statement in the text holds.
A short version is that these are the pairs such that $\mathsf{Proj}\,\ca_G$ is simply-connected.
The equivalent physical characterization in terms of IR dynamics at special points in the Coulomb branch is given in \S.\,\ref{s:root}.} pairs
 $(A,\mathscr{L})$ --- where $A$ is a $G$-invariant Abelian variety
and $\mathscr{L}\to A$ is a $G$-equivariant ample line bundle --- the 
 algebra of $G$-invariant sections
\be\label{ggral}
\ca_G:=\bigoplus\nolimits_{k\geq0}\Gamma(A,\mathscr{L}^k)^G
\ee  
is a graded polynomial algebra. The Chern class $c_1(\mathscr{L})$
yields the polarization of $A$.
 
The generalization of the variety $\cp(\cg)$  in eq.\eqref{pairs} is the projective scheme $\mathsf{Proj}\,\ca_G$
which is again a weighted projective space. This scheme is the coarse moduli space
of a MD stack $[A/G]$ defined by the action 
$G\curvearrowright (A,\mathscr{L})$.
While the \emph{scheme} $\mathsf{Proj}\,\ca_G$ depends on $\mathscr{L}$ only through its
Chern class $c_1(\mathscr{L})$, the \emph{stack} $[A/G]$ is sensitive to the particular $\mathscr{L}$:
hence the stack carries more information than the complex variety.
We may see the stack as a refinement of the (generalized) Looijenga space $\mathsf{Proj}\,\ca_G$.
In view of the mirror symmetry arguments in \S.\,\ref{s:cartoon}, it is natural to see
the underlying weighted projective space as the ``classical'' geometry while
its stacky structure is a ``quantum'' structure over it. As suggested by eq.\eqref{716cc}, the non-perturbative physics is controlled by ``quantum'' geometry, so the subtle stacky structure may be physically important. 

\smallskip

The relevance of the general theorem for $\bigstar$-geometries is obvious:
as \emph{schemes} these integrable systems are the \emph{relative} $\mathsf{\underline{Proj}}_\mathscr{C}$ over the Coulomb branch
$\mathscr{C}$ of the sheaf
of graded $\mathscr{O}_\mathscr{C}$-algebras
\be\label{relal}
\mathscr{A}_G=\bigoplus_{k\geq0}\Gamma(A\times \C^r,\mathscr{L}^k)^G,
\ee 
that is, they are the relative version $\mathsf{\underline{Proj}}_\mathscr{C}\mathscr{A}_G$ of the generalized Looijenga varieties $\mathsf{Proj}\,\ca_G$. 
Again, their classical geometry depends on the bundle $\mathscr{L}$ only through its 
Chern class $c_1(\mathscr{L})$, which yields the polarization of the Lagrangian fibers. 
The ``quantum'' special geometry --- when identified with its natural stack structure ---
depends on the choice of the ample line bundle
$\mathscr{L}$ in its
 N\'eron-Severi (NS) class.

%
%

\paragraph{The ``quantum'' Dirac sheaf.}
The issue with quantum geometry is best rephrased in terms of the \emph{Dirac sheaf}. A 4d field theory with gauge group $U(1)^r$
is encoded in a polarized Abelian $r$-variety $A$: the electromagnetic charges
take value in the  lattice $H_1(A,\Z)$, the Dirac pairing $H_1(A,\Z)\wedge H_1(A,\Z)\to\Z$
yields the polarization of $A$, while 
 its complex structure
  \be
  \tau\in Sp(2r,\Z)\backslash Sp(2r,\Z)/U(r)\quad \text{(for a principal Dirac polarization)}
  \ee
 captures the
 gauge couplings and $\theta$-angles. The Dirac pairing is the Chern class $c_1(\mathscr{L})$
 of an ample invertible sheaf $\mathscr{L}\to A$ (the \emph{Dirac sheaf}). Therefore we may write the gauge theory
 in term of the Dirac sheaf $\mathscr{L}$, producing a more elegant formulation.
 The relation Dirac sheaf vs.\! Dirac pairing is not one-to-one: two (non-isomorphic) sheaves $\mathscr{L}$ and $\mathscr{L}^\prime$ in the same NS
class produce the same Dirac pairing,
 so \emph{a priori} the datum of the Dirac sheaf is \emph{finer} than the datum of
 a Dirac pairing. The Appell-Humbert theorem \cite{cristine} identifies the sheaves $\mathscr{L}$
 in the class of the Dirac pairing $\langle-,-\rangle$ with the semi-characters
 \be
 \chi\colon\Lambda\to U(1),\quad \text{such that}\quad \chi(\gamma_1)\,\chi(\gamma_2)=(-1)^{\langle\gamma_1,\gamma_2\rangle} \chi(\gamma_1+\gamma_2)
 \ee
i.e.\! a Dirac sheaf is a (complex) \emph{quadratic refinement} in the sense of \cite{GMN1,GMN2}.

In our set-up $\mathscr{X}$ \emph{as a complex variety}  is independent of $\mathscr{L}$  in its NS class,
but depends on it as a DM stack.   
 The main question is whether the Dirac sheaf
 $\mathscr{L}$ is merely a convenient addition to the formalism or it has an intrinsic physical meaning
 (i.e.\! it affects the observables).
 In the standard QFT set-up the precise $\mathscr{L}$ is immaterial,
but there is circumstantial evidence that the exact Dirac sheaf may be important in presence of defects or couplings to a subtle topological sector.
In these situations the ``classical'' geometry should be replaced by the appropriate ``quantum''
one. Since a classical special geometry typically admits several Dirac sheaves,
leading to different quantum-topological invariants, this opens the possibility  
that a number of physically inequivalent $\cn=2$ SCFT share
the same classical Seiberg-Witten geometry.\footnote{\ We stress that the ``quantum'' Dirac sheaf is not the only
potential source of such multiplicity: another one is 
the possibility of different partial resolutions of the singularities (when they exist).}

In this note we address these issues in the simple context of SCFT whose integrable system is a $\bigstar$-geometry. We study in detail their higher codimension singularities armed with the generalized
Looijenga theorems. The extension of the analysis to a broader class of $\cn=2$ theories
is very welcomed.

\paragraph{Other Results \& Applications.}
In the course of the analysis we get several results of independent interest.
We discuss in detail the $A$-stratification of $\bigstar$-geometries.
We classify all irreducible $\bigstar$-geometries of ranks $r\geq7$.
This is a first step in the on-going program of
classifying all isotrivial special geometries. In doing this we clarify
 the \emph{physical} characterization of the Abelian varieties $A$
with $G\subset \mathsf{Aut}(A)$ ($G$ a unitary reflection group) which have the property that
the algebra $\ca_G$ in eq.\eqref{ggral} is polynomial (equivalently, with
$\pi_1(A/G)$ trivial).
It turn out that they correspond to $\cn=2$ effective IR field
whose low-energy dynamics has a very simple form.
Their characterization in terms of their algebra of line operators is rather elegant,
see \S.\,\ref{s:root}. The effective $\cn=2$ theories
with these property will be called \emph{root theories}. Often they are not fully-fledged non-perturbative QFTs but rather truncations to a subset of their topological/superselection sectors.\footnote{\ Examples of roots theories which
happen to be the complete story, are the higher-rank Minahan-Nemeshenski
models.}

\paragraph{Organization of the note.} In \S.\,2 we review some basic facts about
symplectic singularities for general background.
In \S.\,3 we review the basics of special geometry and then focus on the
 subclass of $\bigstar$-geometries which we construct in various ways,
study their properties, and classify in ranks $>6$. We introduce their stacky version.
The math theory of $G$-invariant lattices is physically reinterpreted in term of
 IR effective physics. In \S.\,4 we review Witten computation
of the index for $\cn=1$ SYM, give a 3d and 2d interpretation of it, and use 
mirror symmetry to compute the Euler number in ``quantum'' cohomology.
This part has some overlap with \cite{AV}.
Then we compare this computation with the central fiber of a $\bigstar$-geometry.
In \S.\,5 we discuss the general version of the Looijenga theorem, due to Rains,
and present its implications for special geometry. 
Side material is confined in two appendices.

\section{Symplectic reflections and singularities}\label{s:refl}

One of most elegant results in invariant theory is the Chevalley-Shephard-Todd (CST) theorem
\cite{ST1,ST2}:
 the quotient
$\C^n/G$, where $G\subset U(n)$ is a finite group, is
smooth if and only if $G$ is a \emph{complex} reflection group, i.e.\!
a group generated by reflections.\footnote{\ $g\in GL(n,\C)$ is a (complex) \emph{reflection} iff
$\mathsf{im}(g-1)$ has dimension $1$.} 
The ``next best case'' is when $G$ is a finite \emph{symplectic}  reflection group
(also called  \emph{quaternionic} reflection group \cite{qua}), that is, a finite subgroup
\be
G\subset Sp(2r,\C)\qquad n\equiv 2r
\ee
which is generated by symplectic reflections i.e.\! by elements $g$ such that the range of $(g-1)$
has dimension $2$. For $n=2$ this yields all the finite subgroups of $SU(2)$,
and the quotients $\C^2/G$ are described by the McKay correspondence \cite{mckay1,mckay2}.
$\C^2$ has a $G$-invariant holomorphic symplectic form $\Omega$, and
$\C^2/G$ has a crepant resolution, that is, a smooth resolution over which
$\Omega$ extends as a symplectic form.

Let $G\subset Sp(2r,\C)$ be a finite subgroup. It preserves the symplectic $(2,0)$-form $\Omega$ and its dual 2-vector $\pi$
\be
\Omega= dw^i\wedge dx_i,\qquad \pi=\frac{\partial}{\partial x_i}\wedge \frac{\partial}{\partial w^i},
\ee
where we write the coordinates in $\C^{2r}$ as $(w^1,\dots,w^r,x_1,\dots,x_r)$
which we grade as
\be
\deg w^i=0,\qquad \deg x_i=1.
\ee 
Hence the algebra
\be
\mathscr{A}_G=\big(\C[w^1,\dots,w^r,x_1,\dots,x_r]\big)^G
\ee
carries the structure of a graded Poisson algebra with bracket of degree $-1$.
The quotient
 \be
 \C^{2r}/G=\mathsf{Spec}\,\mathscr{A}_G
 \ee
  is necessarily singular by the CST
theorem, and we have two strategies to handle its singularities: we may look for a \emph{resolution} of the singularity or for a \emph{deformation} of it. For $r=1$ the two strategies are equivalent by Milnor's theorem. 
A \emph{resolution} of the singularity is a proper birational map
$\varpi\colon\cw\to\C^{2r}/G$ such that $\cw$ is smooth and $\varpi$ is an isomorphism away from the singularities of $\C^{2r}/G$. A resolution $\cw$ is \emph{crepant} iff the canonical class of $\cw$ is the pull-back of the canonical class of $\C^{2r}/G$. The resolution is \emph{symplectic} if
$\varpi^*\Omega$ extends to a symplectic form on $\cw$. A deformation of the singularity is
the spectrum of a deformation\footnote{\ $\hbar$ is a continuous deformation parameter i.e.\! 
$\mathscr{A}_G(0)\simeq\mathscr{A}_G$.} $\mathscr{A}_G(\hbar)$ of the algebra $\mathscr{A}_G$: it is a smooth Poisson deformation iff
it is smooth and the deformed algebra $\mathscr{A}_G(\hbar)$ is still Poisson. 
We collect a few math facts:
\begin{itemize}
\item[\rm(1)] If the singularity $\C^{2r}/G$ with $G\subset Sp(2r,\C)$ has
a crepant resolution then $G$ is a symplectic reflection group \cite{ver};
\item[\rm(2)] If $\varpi\colon X\to \C^{2r}/G$ is a resolution, $\varpi^*\Omega$ extends to a holomorphic form on $X$ (Proposition 2.4 of \cite{beauville} or Lemma 3.3 of \cite{K1}). However the extended form may fail to have maximal rank everywhere, so the resolution is \emph{not} symplectic in general;
\item[\rm(3)] If a resolution is crepant, it is automatically symplectic {\rm\cite{K1}};
\item[\rm(4)] Poisson deformations exist (in particular there is a standard one: the
so-called \emph{Calogero-Moser deformation} \cite{K2}).  
\end{itemize}

Then when $r=1$ we have a crepant resolution by Milnor's theorem; this is the usual
McKay correspondence. 

Fact (1) yields an analogy with one side of the CST theorem (the ``only if''). However the inverse implication does not hold
\be
G\ \text{a symplectic reflection group}\ \not\Rightarrow\ \C^{2r}/G\ \text{has a crepant resolution}.
\ee

Most of the math literature \cite{SR0,SR1,SR2,SR3,SR4,SR5,SR6,SR7,SR8} focus on the \emph{split} symplectic reflection groups, that
is, the rank-$r$ unitary reflection groups $G$ which act on $V\oplus V^\vee\simeq\C^{2r}$
where $V$ and $V^\vee$ are the reflection representation and its dual.
These are the only groups relevant for special geometry.
 
For some complex reflection groups the existence of a crepant resolution is rather obvious.
For instance for the Sphephard-Todd imprimitive groups $G(m,1,r)$ (the case $m=2$
is the Weyl group of $C_r$) we have
\be\label{hilbert}
\C^{2r}/G(m,1,r)\equiv \left(\C^2/\Z_m\right)^r/\mathfrak{S}_r\leadsto \Big(X_{A_{m-1}}\Big)^r/\mathfrak{S}_r\leadsto H^{[r]}(X_{A_{m-1}})
\ee
where we see $\Z_m$ as a subgroup of $SU(2)$ acting on $\C^2$, and $X_{A_{m-1}}$
is the crepant resolution of $\C^2/\Z_m$ in the McKay correspondence, which is a smooth
symplectic \emph{surface}, and $H^{[r]}(Y)$ the Hilbert scheme of $r$-points in the surface $Y$.
Since the Hilbert scheme of a smooth symplectic surface is smooth symplectic,
we got a symplectic resolution of the singularity in the \textsc{lhs} of \eqref{hilbert}.
In this case there are other resolutions: their number is given by a generalized 
Catalan number  \cite{number}
\be\label{calalan}
\frac{1}{rm+1}{rm+1\choose m}\equiv {rm \choose m}-(r-1){rm\choose m-1}.
\ee
A resolution of the $\C^{2r}/W(A_r)$
is also given by a Hilbert scheme construction, which is then crepant.
It turns out (see e.g.\! \cite{SR7} Theorem 6.8, and references therein) that, a part for the above obvious cases
\be
\mathfrak{S}_{r+1}\equiv W(A_r),\quad G(m,1,r),
\ee 
the only unitary reflection group which yields a crepant singularity is the exceptional Shephard-Todd group $G_4$. In facts $\C^4/G_4$ has two inequivalent crepant resolutions \cite{number}.
In addition there is a known example of a \emph{non-split} symplectic singularity which is
crepant \cite{bell3}. It is believed that there are no others \cite{bell4}.
Non-split symplectic reflection groups play no role for us. It is also not clear whether the group $G_4$
 may arise from a well-behaved SCFT to which standard physical arguments apply.

 \section{$\bigstar$-geometries}\label{s:star}
\subsection{Preliminaries}\label{s:prelimimaries}

In the present paper ``special geometry'' stands for the total space of the holomorphic\footnote{\ In facts \emph{algebraic}.} integrable system associated to the Seiberg-Witten geometry of a 4d $\cn=2$ QFT \cite{SW1,SW2,donagi1,donagi2} with all its defining structures. We consider only geometries of the superconformal type (i.e.\! $\C^\times$-isoinvariant), in facts the even more restricted class of $\bigstar$-geometries to be defined below.
Henceforth we shall omit the qualification ``$\C^\times$-isoinvariant'' and speak simply of ``special geometry'' to mean ``$\C^\times$-isoinvariant special geometries''. 
Preliminarily we recall some definitions and well-known facts mainly to fix notation and terminology.
Experts may skip this subsection.

\paragraph{Chiral ring and Coulomb branch.}
A (conformal) \emph{chiral ring} is a finitely-generated, graded, connected $\C$-algebra $\mathscr{R}$, with no zero-divisors nor nilpotents, that we may take to be normal without loss.
Its affine spectrum $\mathscr{C}\equiv\mathsf{Spec}\,\mathscr{R}$ is the \emph{Coulomb branch}. 
When $\mathscr{C}$ is smooth
(as it is the case for $\bigstar$-geometries),
 \be
 \mathscr{R}=\C[u_1,\dots,u_r],
 \ee
 and hence $\mathscr{C}\simeq \C^r$. In this note we consider only geometries with smooth
 Coulomb branch. The grades $\{\Delta_1,\dots,\Delta_r\}$
 of the free generators $\{u_1,\dots,u_r\}$ of $\mathscr{R}$ are rational numbers $\geq1$ called the \emph{Coulomb dimensions}.
In an \emph{interacting} $\bigstar$-SCFTs the $\Delta_i$'s are actually integers $\geq2$ with
$\gcd(\Delta_i)\in\{1,2,3,4,6\}$ \cite{char} and the multiplicity of $2$ in $\{\Delta_1,\dots,\Delta_r\}$ is at most $1$. The grading of $\mathscr{R}$
yields a linear action
$\C^\times \curvearrowright \mathscr{C}$ whose characters
 are the Coulomb dimensions $\{\Delta_i\}$. We take this $\C^\times$-action as part of the definition of the Coulomb branch.
$\mathscr{C}$ decomposes into $\C^\times$-orbits. There is only one closed orbit, 
namely the origin $\{0\}\in \C^r$, which belongs to the closure of all orbits.

When the special geometry arises from a 4d $\cn=2$ SCFT, 
the Coulomb branch $\mathscr{C}$ yields a parametrization of
 the supersymmetric quantum vacua in flat four-dimensions
with unbroken $R$-symmetry $SU(2)_R$. For SCFTs with $\cn>2$ SUSY we fix once and for all an $\cn=2$
superconformal subalgebra and ``unbroken $SU(2)_R$'' refers to the non-Abelian
$R$-symmetry of the chosen SUSY subalgebra.   
The unique closed orbit $\{0\}$ represents the 4d SUSY vacuum where the
superconformal symmetry is \emph{unbroken}. 

An alternative physical interpretation of the chiral ring $\mathscr{R}$ identifies it with the algebra of
local observables in the TFT obtained by twisting the $\cn=2$ SCFT \emph{\'a la} Witten \cite{TFT1,TFT2,TFT3}.
In particular $\mathscr{R}$ and all objects constructed out of it, such as the Coulomb branch $\mathscr{C}$,
are \emph{smooth invariants} of the Euclidean space-time $X$, that is, these objects are not affected by any smooth deformation of $X$. 

\medskip

\paragraph{Special geometry: elementary viewpoint.}
We are ready to define the main object in our discussion.
A \emph{special geometry} over the Coulomb branch $\mathscr{C}$,
\be\label{76zzz712q}
\pi\colon\mathscr{X}\to \mathscr{C}
\ee
 is a complex space $\mathscr{X}$ which carries a
holomorphic symplectic $(2,0)$-form $\Omega$, regular in codimension 2,
with a holomorphic projection $\pi$ onto the Coulomb branch $\mathscr{C}$,
such that the fibers of $\pi$ are Lagrangian submanifolds,
while the smooth fibers of $\pi$ are polarized Abelian varieties.
$r\equiv\dim\mathscr{C}\equiv\tfrac{1}{2}\dim\mathscr{X}$ is the \emph{rank} of the special geometry.

 A special geometry is, in particular, a holomorphic (algebraic)
\emph{integrable system}. In the superconformal case $\mathscr{X}$ has additional structures.
The action of $\C^\times$ on the base $\mathscr{C}$ extends to an action on the total space
$\mathscr{X}$
by automorphisms of all its defining geometric structures. In particular the fibers over any two points
of a $\C^\times$-orbit are isomorphic: when smooth they are isomorphic as polarized Abelian varieties;
more generally, the Albanese varieties of their two fibers are isomorphic Abelian varieties. 
The Lie algebra of $\C^\times$ is generated by a holomorphic \emph{Euler vector} $\ce$ on $\mathscr{X}$ which we normalize such that\footnote{\ $\mathscr{L}_v$ denotes the Lie derivative in the direction of the vector field $v$.}
\be
\mathscr{L}_\ce \Omega=\Omega.
\ee
The dual holomorphic differential $\lambda=\iota_\ce\Omega$ plays the role of the Seiberg-Witten differential, $\Omega=d\lambda$. We stress that in the SCFT situation (contrary to the
massive case\footnote{\ Indeed the class $[\Omega]$ is linear in the masses by the Duistermaat-Heckman theorem \cite{donagi2,antiaffine}.}) $\Omega$ is exact. The homogeneous coordinates $u_i$ of the base $\mathscr{C}$ then satisfy
\be
\mathscr{L}_\ce u_i= \Delta_i u_i.
\ee
We assume that the fibration
$\pi$ has a (holomorphic) \emph{zero section} $s_0$ which is preserved by the $\C^\times$ action. 
One may dispense with this assumption, and get a more general class of special geometries \cite{hoinv},
but we shall not do that in this note.
We say that a special geometry is \emph{interacting} iff $1$ is not a Coulomb dimension and
the geometry cannot be written as a product $\mathscr{X}_1\times\mathscr{X}_2$ of lower rank geometries.

\paragraph{Polarizations and Isogenies.}
 
The homology group $H_1(\mathscr{X}_u,\Z)\simeq \Z^{2r}$ of the smooth
fiber $\mathscr{X}_u$ at $u\in\mathscr{C}$ 
is physically identified
with the lattice $\Lambda$ of electromagnetic charges in the SUSY vacuum $u$ \cite{SW1,SW2}.
The Dirac electromagnetic pairing yields a polarization
\be
H_1(\mathscr{X}_u,\Z)\wedge H_1(\mathscr{X}_u,\Z)\to \Z 
\ee
to the smooth fibers. We shall give another interpretation of this polarization in the next paragraph.
In the QFT applications one usually assumes the polarization to be principal (we then say that the geometry is \emph{principal}). However for our purposes
 it is convenient to work with arbitrary polarizations
--- principal and not --- and to compare them. We say that two special geometries
over the same Coulomb branch\footnote{\ Recall that the $\C^\times$-action is part of the
definition of $\mathscr{C}$.} $\mathscr{C}$
\be
\pi_y\colon \mathscr{Y}\to\mathscr{C},\qquad \pi_x\colon \mathscr{X}\to\mathscr{C}
\ee 
are \emph{isogeneous} if there is a surjective map
\be
\alpha\colon\mathscr{Y}\to\mathscr{X},
\ee
which \textit{(i)} commutes with the projections $\pi_x,\pi_y$, \textit{(ii)} maps the zero section to the zero section, \textit{(iii)} the symplectic structure on $\mathscr{Y}$
is the pull-back of the one on $\mathscr{X}$, while \textit{(iv)} the restriction of $\alpha$
 to each smooth fiber is 
an isogeny of Abelian varieties. Isogeneous special geometries may
describe distinct SCFTs, but they may also be used to 
study slightly different versions of the ``same'' SCFT where we forget some of its
superselected sectors.
To a large extend we are free to replace our special geometry with an isogeneous
one. Our motivation for the notion of \emph{isogeneous special geometries} is mainly technical: in
many instances the computations become much simpler after we replace
the original geometry with an isogeneous one. In particular, replacing the geometry with an isogeneous one, we may kill the troublesome
torsion in the homology of $\mathscr{X}$. As we shall see, this killing has a QFT interpretation. 
\medskip

We recall the physical significance of the polarization type.
When the polarization is \emph{principal}, the only geometric restriction on the allowed electromagnetic charges
is Dirac quantization: in this situation the lattice of electromagnetic charges $\Lambda_\text{max}$
is maximal with respect to the Dirac electromagnetic pairing, i.e.\! the charge lattice is \emph{complete}.
We stress that the QFT may or may not
have actual local states (massive multi-particle states) which carry all the charges $q\in\Lambda_\text{max}$; we write $\Lambda_\text{min}$ for the sublattice generated by the charges of actual massive states.
However we may always define \emph{line operators} carrying any
charge in $\Lambda_\text{max}$.
On the contrary, when the degree of the polarization is $d>1$, we have
additional  selection rules on the geometrically allowed charges: they should take value  in a sublattice
$\Lambda_\text{geo}$ of index $d$ in some Dirac maximal lattice $\Lambda_\text{max}$.
To be consistent with local physics $\Lambda_\text{min}\subseteq \Lambda_\text{geo}$.
The line operators are now restricted to have charges in $\Lambda_\text{geo}$. 
One may wish to enlarge the charge lattice $\Lambda_\text{geo}$ to a Dirac maximal one.
To achieve this, one should construct an isogeneous special geometry with principally polarized fibers.
However this cannot be always achieved
\begin{fact} 
There exist special geometries which are not isogeneous to principally polarized ones.
\end{fact}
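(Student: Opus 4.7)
The plan is to exhibit an explicit family where the obstruction to principality comes from $G$-equivariance alone. A clean choice is $\cn=4$ SYM with simply-laced $\mathfrak{g}$ such that $\det(C_\mathfrak{g})>1$, for instance $\mathfrak{g}=\mathfrak{su}(3)$ (with $\det C=3$) or $\mathfrak{g}=E_7$ (with $\det C=2$). For these the smooth fibers of $\mathscr{X}$ are isogenous to $E\otimes_\Z Q_\mathfrak{g}$ modulo the diagonal $W(\mathfrak{g})$-action, with $E$ the Seiberg-Witten elliptic curve and the polarization of type $(1,\ldots,1,\det C_\mathfrak{g})$ induced by the Killing form on the root lattice $Q_\mathfrak{g}$, visibly non-principal.

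Next I would argue that any candidate isogenous $\mathscr{Y}$ automatically inherits a $W(\mathfrak{g})$-action and a $W(\mathfrak{g})$-invariant polarization. Since the isogeny $\alpha\colon\mathscr{Y}\to\mathscr{X}$ commutes with $\pi$, fixes the zero section, and covers $\mathsf{id}_\mathscr{C}$, the monodromy of $\mathscr{X}\to\mathscr{C}_{\mathrm{reg}}$ (which realizes $W(\mathfrak{g})$ on the general fiber) lifts through $\alpha$ to an automorphism action of $W(\mathfrak{g})$ on $\mathscr{Y}$. The universal cover of a smooth fiber of $\mathscr{Y}$ is again the complexified reflection representation $V$, and the defining lattice $\Lambda'\subset V$ is a $W(\mathfrak{g})$-invariant $\Z$-lattice commensurable with $Q_\mathfrak{g}$. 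Because $V$ is an absolutely irreducible $\Q[W(\mathfrak{g})]$-module, Schur's lemma forces the $W(\mathfrak{g})$-invariant symmetric bilinear forms on $V$ to form a one-dimensional $\Q$-line, so any $W(\mathfrak{g})$-invariant polarization is $q\cdot B|_{\Lambda'}$ for some $q\in\Q^\times$ and $B$ the Killing form.

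The final step is a discriminant computation. Principality requires the Gram matrix of $q\cdot B|_{\Lambda'}$ to have integer entries and determinant $\pm1$, hence $q^r\cdot\mathrm{disc}(\Lambda')=\pm1$. For $\mathfrak{g}=\mathfrak{su}(3)$ the only $W(A_2)$-invariant lattice on which $B$ is integer-valued is $Q_{A_2}$ itself (the weight lattice $P_{A_2}$ violates integrality since $B(\omega_1,\omega_1)=2/3$), so $\mathrm{disc}(\Lambda')=3$ and we would need $q^2=1/3$, forcing $q=1/\sqrt{3}$, irrational — incompatible with integer entries. The identical count excludes $\mathfrak{g}=E_7$: the only integer-valued $W(E_7)$-invariant lattice is $Q_{E_7}$ with discriminant $2$, and $2^{1/7}$ is irrational. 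Hence no $W(\mathfrak{g})$-equivariant principally polarized Abelian variety exists in the relevant isogeny class, and $\mathscr{X}$ cannot be isogeneous to a principal special geometry.

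The main obstacle in this plan is the structural step guaranteeing that every isogenous $\mathscr{Y}$ really does inherit a $W(\mathfrak{g})$-equivariant polarization — one must rule out an isogeny whose fiberwise kernel breaks equivariance or whose pulled-back symplectic form fails to specify the fiber polarization on the nose. Once this is granted, the Schur-plus-discriminant obstruction is elementary and uniform across the examples, and in fact extends immediately to any split reflection group whose invariant form has non-unimodular restriction to every invariant integral lattice.
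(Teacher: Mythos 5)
Your proposed examples do not work: for every Weyl group $W(\mathfrak{g})$ there \emph{do} exist $W(\mathfrak{g})$-invariant principal polarizations, so the $\cn=4$ geometries with $\mathfrak{g}=\mathfrak{su}(3)$ or $E_7$ \emph{are} isogeneous to principally polarized ones. The paper states this explicitly: the invariant lattices form families $\Lambda(\tau)=L_1+\tau L_2$ with $L_1,L_2$ two possibly \emph{different} $W$-invariant real lattices, and the polarization is principal iff $L_2=L_1^\vee$; the resulting principal geometries are the ``readings'' of $\cn=4$ SYM (cf.\ the table in Example 1 of \S.3.5.2, which lists six principal lattices for $W(C_r)$ alone). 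Concretely, for $\mathfrak{g}=\mathfrak{su}(3)$ the lattice $Q+\tau P$ (root plus $\tau$ times weight lattice) is $W(A_2)$-invariant and carries the unimodular Riemann form $E(a_1+\tau b_1,a_2+\tau b_2)=B(a_1,b_2)-B(a_2,b_1)$, because $B\colon Q\times P\to\Z$ is a perfect pairing. Your discriminant step goes wrong precisely here: the polarization is an alternating form on the rank-$2r$ lattice, and its integrality and unimodularity are governed by the pairing between the two rank-$r$ ``halves'' $L_1$ and $L_2$, not by the Gram matrix of the symmetric form on a single rank-$r$ lattice. Requiring $B(\omega_1,\omega_1)\in\Z$ is not the relevant condition, and the conclusion ``the only admissible lattice is $Q$ itself, hence $\mathrm{disc}=3$'' is false.

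The paper's own proof (Appendix A) necessarily avoids Weyl groups and instead takes the primitive rank-2 complex reflection groups $G_4$, $G_5$, $G_8$. There the obstruction is genuinely group-theoretic: by Fujiki's classification, the maximal automorphism groups of principally polarized Abelian surfaces isogeneous to $E_{e^{2\pi i/3}}\times E_{e^{2\pi i/3}}$ (resp.\ $E_i\times E_i$) are the \emph{imprimitive} groups $G(6,1,2)$ (resp.\ $G(4,1,2)$), and a primitive rank-2 reflection group cannot embed in an imprimitive one of the same rank, so no invariant principal polarization exists in the whole isogeny class. If you want to salvage a Schur-type argument, you would need to show that for your chosen $G$ \emph{no} pair $(L_1,L_2)$ of commensurable invariant lattices is dual under the invariant form — which fails for every Weyl group — so the example must come from the genuinely complex (non-Lagrangian) reflection groups.
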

For examples see \cite{char} or appendix \ref{G!2}. We call such geometries \emph{absolutely incomplete}. In these situations we cannot define a complete set of line operators in a globally consistent way.
Incompleteness leads to the existence of non-trivial 1-form symmetries, see e.g.\! \cite{1form1,1form2,1form3,1form4,1form5,1form6}.

When an isogeneous principal geometry exists, it is typically non-unique.
 For a given local physics there may be several \emph{global} QFTs
which differ in their maximal sets of line operators \cite{readings}: we shall refer to these subtle different QFTs, which
share the same local dynamics, as diverse ``readings'' of the theory. The allowed ``readings''
are in one-to-one
 correspondence with the Dirac maximal lattices containing the lattice $\Lambda_\text{min}$
spanned by the charges of the local states, i.e.\! they correspond to the set of isogeneous principal geometries. 
For instance, for $\cn=4$ SYM
the local lattice is
\be\label{loclattice}
\Lambda_\text{min}\simeq Q\oplus Q^\vee
\ee
 where $Q$ (resp.\! $Q^\vee$) is the root lattice of the gauge Lie algebra $\mathfrak{g}$ (resp.\! dual root lattice\footnote{\ Beware of the distinction between the dual root lattice and the dual lattice of the root lattice!})
 whose index in a Dirac maximal lattice is the order of the center of the simply-connected Lie group $\cg$ with algebra $\mathfrak{g}$,
 and so $\Lambda_\text{min}$ is maximal only for $E_8$, $F_4$ and $G_2$.
In the other cases there is a plurality of ``readings'' $\Lambda_\text{max}\supset \Lambda_\text{min}$.

\paragraph{The 3d viewpoint.}
Consider the corresponding topologically twisted TFT \cite{TFT1,TFT2,TFT3}
on the Euclidean space-time $\R^3\times S^1$. Since the TFT is independent of the space-time metric,
its low-energy effective theory is exact  \cite{4man}, that is, the TFT is equivalent to the twisting of the low-energy effective $\cn=2$ theory of the original model. It is also independent of the radius $R$ of $S^1$, so we may use either the
4d IR effective theory (corresponding to $R\approx\infty$) or the 3d one, corresponding to $R$ asymptotically small but \emph{non-zero}.\footnote{\ \label{SWW3} As $R\to\infty$ $\mathscr{X}$ equipped with its hyperK\"ahler metric
collapses to the special K\"ahler metric $\mathscr{C}$ in the Gromov sense \cite{Tosatti}. As $R\to0$ the fiber
decompactifies. Hence the 3d effective theories for $R$ small and $R=0$ are different, and we need to keep $R$ non-zero. See ref.\cite{SW3} for a nice discussion  in the present context of the crucial difference between ``compactification to 3d'' and ``dimensional reduction to 3d''.}
The 3d effective theory is a 8-supercharge SUSY $\sigma$-model (coupled to a topological sector) with target space $\cz$ which --- away from singularities --- is hyperK\"ahler. The physical
hyperK\"ahler metric of the good locus $\mathring{\cz}\subset\cz$ is a complicate function of $R$ \cite{GMN1}, while TFT quantities
may depend only on protected geometric structures which do not flow with $R$.
We list some of these topologically protected structures.
First: by topological invariance the 4d and 3d effective theories
have the same algebra of local topological observables $\mathscr{R}$. We call $I$ the special complex structure on $\mathring{\cz}$
in which the elements of $\mathscr{R}$ are holomorphic functions.
This peculiar complex structure is then topologically protected.
$\mathscr{R}$ is the ring of global $I$-holomorphic functions\footnote{\ We stress that this is a special property of complex structure $I$. $\mathscr{X}$ is typically Stein in a \emph{generic} complex structure (which is not topologically protected). The holomorphic functions are actually regular algebraic functions on the algebraic variety $\mathscr{X}$.} on both $\mathscr{C}$ and $\cz$,
 and this produces a canonical
(surjective) affinization map
\be\label{789zza}
\pi_\cz\colon\cz\to \mathscr{C}, 
\ee
which is also a topologically protected geometric structure.
Comparison of 4d and 3d shows that the particular symplectic structure
of the hyperK\"ahler $\mathring{\cz}$ which is holomorphic in complex structure $I$
 is also topologically protected. 
3d $R$-symmetry predicts that, away from singularities of $\cz$, the fibers are
holomorphic Lagrangian submanifolds with no non-constant
holomorphic functions. The simplest possibility is that the smooth fibers are complex tori.\footnote{\ For the most general situation, see \cite{antiaffine}.} The third protected structure is the \emph{class} of the K\"ahler form
in complex structure $I$ when measured in the proper units of $1/R$. The restriction of $R[\omega|_{\cz_u}]$ to be fiber $\cz_u$ is integral by Dirac quantization of charge, and yields a polarization to the smooth fibers which thus are Abelian varieties.
$R[\omega|_{\cz_u}]$ is then the Chern class of an ample 
line bundle $\mathscr{L}_u\to\cz_u$. The fiberwise bundles $\mathscr{L}_u$ glue in a global (orbi)bundle $\mathscr{L}$
on $\cz$. 

Then, away from singularities, the affinization TFT map
\eqref{789zza} is the same as the special geometry
\eqref{76zzz712q}. However the ``naive'' special geometries $\mathscr{X}$ people use may differ from the ``ideal'' one
\eqref{789zza} in high codimension. In facts it is not even clear in which --- possibly non-commutative --- geometric category $\cz$ is defined.
What is important for us is that the low-energy physics (hence the exact TFT) is given by
a ``$\sigma$-model'' with target $\cz$ that, neglecting subtler issues,
we may interpret as a $\sigma$-model on $\mathscr{X}$. The space $\mathscr{X}$ inherits
quantum-geometric structures from $\cz$, and we shall try to understand some of them.

\paragraph{Stratifications of $\mathscr{C}$.}
The Coulomb branch $\mathscr{C}$ of a ($\C^\times$-isoinvariant) special geometry
has two natural stratifications \cite{strat1,strat2,inverse} called respectively \emph{$A$-stratification} and \emph{$R$-stratification,} which may be combined in an $AR$-stratification whose strata are the connected components of overlaps of $A$- and $R$-strata.

The $A$-stratification \cite{strat1,strat2,inverse} is dictated by the dimension of the Albanese variety of the fiber\footnote{\ In the following sense: the normalization of the smooth locus of the connected component of the fiber $\mathscr{X}_u$ which contains the image of the zero section is an algebraic Abelian group, and we consider the Albanese variety of this group in the sense of the Barsotti-Chevalley theorem \cite{milneG}.} $\mathscr{X}_u$, equivalently, by the rank
of the lattice of charges of BPS state which are massless at the given SUSY vacuum $u\in\mathscr{C}$ \cite{inverse}.
The $R$-stratification is by the order of the subgroup of $U(1)_R$ which is unbroken at the given vacuum. The $R$-stratification yields constraints on the allowed Coulomb dimensions $\Delta_i$
see \cite{char,inverse,caorsi,allowed}. 

 The codimension-$1$ strata of the $A$-stratification are the connected components of the smooth locus $\mathscr{D}^\text{sm}$ of the discriminant $\mathscr{D}$. By definition, the \emph{discriminant} $\mathscr{D}\subset \mathscr{C}$ is the divisor
of points $u\in\mathscr{C}$ whose fiber $\mathscr{X}_u$ is not smooth. 
The irreducible components of the discriminant carry their own special geometries of rank $r-1$ \cite{strat1,strat2,inverse}.
Then in codimension-$2$ we have the connected components of the smooth locus of the discriminant of the discriminant, and so on,
recursively in the codimension, until in codimension-$r$
we get the stratum consisting of the closed orbit $\{0\}$, which is the only closed stratum in an interacting SCFT. The union of all codimension-$k$ strata is then the locus in $\mathscr{C}$
whose fibers $\mathscr{X}_u$ have Albanese varieties of dimension $r-k$ \cite{inverse}.

The geometry of codimension-$1$ strata is pretty well understood \cite{HO1,HO2,HO3}. Up to some subtler detail \cite{hoinv},
their physics is essentially controlled by the local monodromy around each irreducible component of $\mathscr{D}$ \cite{inverse}. 
Higher codimension strata
are poorly understood in general, and one of the goals of this paper is to
improve our knowledge.

\paragraph{The central fiber.}
The central fiber $\mathscr{X}_0$ over the origin -- which is the highest codimension stratum -- is the most singular fiber,
 in the sense that a maximal number of cycles have collapsed to zero and
  its Albanese variety is trivial for an interacting SCFT. 
The cohomology $H^\bullet(\mathscr{X}_0)$ 
of the central fiber is an important invariant of the special geometry.
Since in a $\C^\times$-isoinvariant geometry $\mathscr{X}$ retracts to $\mathscr{X}_0$,  $H^\bullet(\mathscr{X}_0)$ is also the cohomology of the full geometry $\mathscr{X}$.
However one should be careful with the notion of ``cohomology''. 
There is the \emph{``usual'' orbifold cohomology}  \cite{orbifold1,orbifold2,orbifold3}
which, when
the geometry has a crepant resolution, coincides with the ordinary cohomology of the resolution;
in this case
 \be
 \text{(rank of flavor group of SCFT)}\leq \dim H^2(\mathscr{X}_0)_{\!\!\text{``usual''}\atop\text{orbifold}}-1.
 \ee

In this paper we dwell on a \emph{different} notion of quantum Euler number of the central fiber,
which \emph{does not} coincide with the orbifold one even for crepant $\bigstar$-geometries.
Rather  our $\chi(\mathscr{X}_0)_\text{quan}$ is a kind of ``central Witten index'' which,
as mentioned in the introduction, computes the dimension of the central quantum chiral ring
$\mathcal{R}$.
This quantum invariant depends on the choice of ``quantum structures'' on $\mathscr{X}$.

 \paragraph{Isotrivial geometries.}
 A special geometry is said to be \emph{isotrivial} iff all smooth fibers $\mathscr{X}_u$
 are isomorphic as polarized Abelian varieties to a fixed model fiber $A$
 \be
 \mathscr{X}_u\simeq A\quad\text{for }u\in\mathscr{C}\setminus\mathscr{D}.
 \ee
  A special geometry is isotrivial if and only if
 its local monodromies have  finite order \cite{inverse}.  For \emph{any} isotrivial geometry one shows:

\begin{fact}\label{firstfact} {\bf(1)} The model fiber $A$ of an interacting isotrivial geometry
 is isogeneous to the $r$-th power of a fixed elliptic curve $E_\tau$ (of period $\tau\in\mathbb{H}$)
\be
A\to \overbrace{E_\tau\times E_\tau\times \cdots\times E_\tau}^{r\ \text{\rm factors}}.
\ee
Then we have the isomorphism of complex tori
\be\label{astori}
A \sim_\text{\rm tori} E_{\tau/n_1}\times E_{\tau/n_2}\times \cdots\times E_{\tau/n_r}
\ee 
for certain integers $n_i$'s. However \eqref{astori} is \textbf{not} an isomorphism of polarized Abelian varieties,
i.e.\! the polarization of $A$ is \textbf{not} (in general) the product polarization of the \textsc{rhs} (cf. {\bf Example \ref{noniso}}).  \\
{\bf(2)} For an \emph{irreducible}\footnote{\ A special geometry is irreducible iff it is not trivial (i.e.\! not of the form $A\times \mathscr{C}\to\mathscr{C}$) and is not the product of lower rank geometries.} isotrivial special geometry we have two possibilities:
\begin{itemize}
\item[\rm(A)]  we have a one-parameter family of special geometries parametrized by a \emph{conformal manifold} which is a modular curve\footnote{\ \textbf{Notation}:  $\mathbb{H}$ is the upper half-plane
and $\Gamma\subset SL(2,\Z)$ is a congruence subgroup.} $\mathbb{H}/\Gamma$, while $2$ is a Coulomb dimension of multiplicity exactly $1$. In this case the special geometry
describes a SCFT with a Lagrangian formulation;
\item[\rm(B)] $E_\tau$ has complex multiplication by a Gaussian imaginary quadratic field,
i.e.\! a field $\mathbb{Q}(\sqrt{-d})$ with class number $1$. 
In this case the conformal manifold is trivial (i.e.\! the geometry is \emph{rigid}), $2$ cannot be 
a Coulomb dimension, and the SCFT has no Lagrangian formulation (the SCFT is inherently strongly coupled).
\end{itemize}
\end{fact}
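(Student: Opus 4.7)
The plan is to combine the equivalence (stated above) ``isotrivial $\Longleftrightarrow$ finite-order local monodromies'' with $\C^\times$-isoinvariance to produce a polarized automorphism $\sigma\in\mathrm{Aut}(A,\mathscr{L})$ of the fiber whose order and eigenvalues are dictated by the Coulomb dimensions. Concretely, the lift of $\lambda\mapsto e^{2\pi i}\lambda\in\C^\times$ along a generic $\C^\times$-orbit in $\mathscr{C}$ is a loop encircling the origin, and its monodromy gives such a $\sigma$. The homogeneity conditions $\mathscr{L}_{\ce}\Omega=\Omega$ and $\mathscr{L}_{\ce}u_i=\Delta_i u_i$ then force the eigenvalues of $\sigma$ on $H^0(A,\Omega^1)$ to be $e^{-2\pi i(1-\Delta_i)}$, so $\sigma$ has finite order $N$ dividing the common denominator of $\{1-\Delta_i\}$, and ``interactingness'' excludes the trivial case $N=1$.

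For claim (1) I would apply Poincar\'e complete reducibility to the $\mathbb{Z}[\sigma]$-module structure on $A$. The $\langle\sigma\rangle$-isotypical decomposition is governed by the action on $T_0A$, and the $\C^\times$-homogeneity forces all eigenvalues $e^{-2\pi i(1-\Delta_i)}$ to come from a single cyclic character on the tangent space, so the isotypical components glue into one factor. Hence $A$ is isogeneous to $E_\tau^r$ with $E_\tau$ the unique (up to isogeny) elliptic curve carrying an order-$N$ polarized automorphism acting as the appropriate primitive $N$-th root on $T_0 E_\tau$. The integers $n_i$ in \eqref{astori} are then the invariant factors of the inclusion $H_1(A,\mathbb{Z})\hookrightarrow H_1(E_\tau^r,\mathbb{Z})$, and the polarization of $A$ fails to be the product polarization on the right-hand side precisely because $\sigma$ mixes the factors non-diagonally (as illustrated in Example \ref{noniso}).

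For claim (2), under the irreducibility hypothesis the dichotomy is driven by the order of $\sigma$. If $N\leq 2$, every elliptic curve admits the automorphism $\pm 1$, so $\tau$ is an unconstrained modulus ranging over a modular curve $\mathbb{H}/\Gamma$ for a congruence subgroup $\Gamma\subset SL(2,\mathbb{Z})$ determined by the level structure preserved by $\sigma$; the Coulomb operator of dimension $2$ then realizes the marginal gauge coupling, identifying the SCFT as Lagrangian, and irreducibility pins down the multiplicity to exactly one. If $N\geq 3$, the required automorphism forces $E_\tau$ to carry complex multiplication by $\mathbb{Z}[\zeta_N]$ for $N\in\{3,4,6\}$, so $\tau$ is a CM point and the conformal manifold collapses (no Lagrangian description). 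The class-number-$1$ restriction enters to guarantee that the polarized isogeny class of $A$ is uniquely determined by $(G,\{\Delta_i\})$: inequivalent ideal classes in $\mathrm{End}(E_\tau)$ would otherwise produce inequivalent polarized extensions of $E_\tau^r$, contradicting the rigidity implicit in ``interacting isotrivial''. A separate eigenvalue check excludes $\Delta=2$, since this would require a $+1$-eigenvalue of $\sigma$ incompatible with the primitive $N$-th root action on the relevant isotypical component.

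The main obstacle I expect is the class-number-$1$ argument in (2)(B): this is a genuinely arithmetic condition, and one must reduce the existence and uniqueness of the global polarized fibration $\mathscr{X}\to\mathscr{C}$ to the PID property of $\mathrm{End}(E_\tau)$ (namely, to the freeness of every projective module appearing in the isotypical decomposition). The identification of the precise congruence subgroup $\Gamma$ in (A) is more routine but still requires tracking the action of $\sigma$ on level structures; the rest is a careful bookkeeping of the finite-monodromy data together with classical polarized-Abelian-variety structure theory.
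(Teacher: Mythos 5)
Your central construction does not produce the automorphism you need, and this undermines both halves of the argument. The monodromy of the loop traced out by $\lambda\cdot u$ as $\lambda$ runs once around the unit circle is \emph{trivial} whenever the Coulomb dimensions are integers: for a $\bigstar$-geometry $(A\times V^\vee)/G$ this loop lifts to the closed path $t\mapsto e^{2\pi it}x$ in the cover $V^\vee$, so its deck transformation --- hence its monodromy --- is the identity, consistently with your own eigenvalue formula $e^{-2\pi i(1-\Delta_i)}=1$ for $\Delta_i\in\Z$. Since interacting isotrivial geometries with all $\Delta_i$ integral certainly exist ($\cn=4$ SYM, every interacting $\bigstar$-geometry), your claim that ``interactingness excludes $N=1$'' is false, and the $\langle\sigma\rangle$-isotypical decomposition carries no information in precisely the cases of interest. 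The finite-order automorphisms that actually drive the statement are the \emph{local} monodromies around the irreducible components of the discriminant (primitive loops, not the $\Delta_i$-fold covers swept out by a full $\C^\times$-orbit): these are complex reflections of starred Kodaira type $I_0^*,IV^*,III^*,II^*$, of order $m\in\{2,3,4,6\}$. Part (1) is then obtained not by an unproved ``isotypical gluing'' assertion but via the root lines of \S.\,\ref{s:root}: each reflection determines a rank-$2$ sublattice $\Lambda_\ell=\Lambda\cap\ell$ on which it acts with order $m$, so $\ell/\Lambda_\ell$ is an elliptic curve; the $\Lambda_\ell$ span $\Lambda\otimes\mathbb{Q}$ and are identified with one another up to isogeny by the irreducible action of the monodromy group, giving $A\to E_\tau^r$. (The paper itself defers the formal proof to the reference on inverse problems, but this is the mechanism it sets up with $\Lambda^0=\sum_\ell\Lambda_\ell$; your route would not recover it.)

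Part (2) has two further problems. First, your dichotomy ``$N\le2$ versus $N\ge3$'' forces $\mathrm{End}(E_\tau)\supseteq\Z[\zeta_N]$ with $N\in\{3,4,6\}$ in case (B), i.e.\ $\tau=i$ or $e^{2\pi i/3}$; but the statement (and table \ref{tab}, {\bf Example \ref{noniso}}, appendix \ref{G!3}) explicitly allows $\mathbb{Q}(\sqrt{-2})$ (the $G_{12}$ geometry) and $\mathbb{Q}(\sqrt{-7})$ ($G_{24}$), whose elliptic curves have automorphism group $\{\pm1\}$ only, so the CM field cannot be read off from an automorphism of a single $E_\tau$. The correct invariant is the trace field $\mathbb{K}_G$ of the monodromy reflection group: case (A) is $\mathbb{K}_G=\mathbb{Q}$ (equivalently $G$ a Weyl group, equivalently $2$ a degree of $G$ with multiplicity exactly one, by the invariant-theory fact that an irreducible reflection group has $2$ among its degrees iff it is defined over a totally real field), and case (B) is $\mathbb{K}_G$ imaginary quadratic; this also replaces your eigenvalue-based exclusion of $\Delta=2$ in case (B). Second, the class-number-$1$ condition is not deduced from a rigidity or uniqueness argument about polarized extensions; it is an output of the classification (existence of a full $G$-invariant lattice forces $\mathbb{K}_G$ to be $\mathbb{Q}$ or imaginary quadratic, and the Shephard--Todd groups that survive have $\Z[\lambda_G]\in\{\Z[i],\Z[e^{2\pi i/3}],\Z[\sqrt{-2}],\Z[\tfrac{1+\sqrt{-7}}{2}]\}$, all PIDs), and the paper uses it in the opposite direction --- to conclude that torsion-free $\Z[\lambda_G]$-modules are free.
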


\begin{rem}
The special isotrivial special geometries where eq.\eqref{astori} is actually an isomorphism of
polarized varieties were dubbed ``diagonal'' in \cite{char}. This happens e.g.\! 
for the higher rank Minahan-Nemeshanski
theories.
\end{rem}

\begin{rem}
Suppose that a rank $>1$ isotrivial special geometry has a SW curve:
then $A$ is the Jacobian of a curve $\Sigma$ which is isomorphic, as a complex torus,
to a product of elliptic curves. Then $\Sigma$ is a solution to the 
Ekedahl-Serre problem \cite{ESerre}; the Ekedahl-Serre curves are quite rare animals.
So isotrivial geometries typically don't have SW curves. 
\end{rem}

\begin{exe}\label{noniso} A geometry where \eqref{astori} is not a polarized isomorphism is the 
principal rank-2 isotrivial geometry based on the Shephard-Todd exceptional group $G_{12}$
\cite{ST1,ST2}. In this case  (see appendix \ref{G!3} or \cite{fuji}) 
\be
A=E_{\sqrt{-2}}\times E_{\sqrt{-2}}\quad\text{as a torus,}
\ee
but the geometry is not diagonal since
 the period matrix (in the unique $G_{12}$-invariant principal polarization) is
\be
\tau=\frac{1}{2}\begin{pmatrix}1+2\sqrt{-2} & \sqrt{-2}\\
\sqrt{-2} & 1+2\sqrt{-2}\end{pmatrix}.
\ee
\end{exe}

The general theory of isotrivial special geometries will be described elsewhere.
Here we limit ourselves to the simpler subclass of $\bigstar$-geometries.

\paragraph{$\bigstar$-geometries.}

An isotrivial special geometry is a $\bigstar$-geometry iff it is a global quotient of a \emph{trivial}
geometry $A\times \C^r$.  The dimensions $\Delta_i$ of an interacting $\bigstar$-geometry
are integers $\geq2$ ($\geq3$ for rigid $\bigstar$-geometries), see \S.\,\ref{s:classstr}.
The ``$\bigstar$-geometries'' owe their name to their monodromy characterization:
\begin{fact} A special geometry is a $\bigstar$-geometry if and only if all its local monodromies
around the irreducible components of the discriminant have a semisimple \emph{starred}  Kodaria type:
$I_0^*$, $II^*$, $III^*$ or $IV^*$.
\end{fact}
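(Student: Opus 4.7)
I would establish the equivalence by local analysis at generic points of codim-$1$ strata of the discriminant.

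\emph{Forward direction.} Write $\mathscr{X} = (A \times \mathbb{C}^r)/G$, with $G$ a split unitary reflection group acting on $\mathbb{C}^r$ by the reflection representation and on $A$ by the conjugate reflection representation. At a generic $u_0 \in \mathscr{D}_i$ the stabilizer $G_{u_0} = \langle s\rangle$ is cyclic, generated by a reflection $s$ of order $m$. Invariance of $\Omega = \sum dw^i \wedge dx_i$ forces $s$ to act on the transverse base coordinate by $\zeta_m$ and on the conjugate fiber coordinate by $\zeta_m^{-1}$. Since this latter is a finite-order automorphism of the polarized Abelian variety $A$, the classification of automorphisms of elliptic curves forces $m \in \{2,3,4,6\}$, with $\tau$ pinned to $i$ when $m=4$ and to $e^{2\pi i/3}$ when $m \in \{3,6\}$. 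The rank-one local germ of the fibration at $u_0$ is $(E_\tau \times \mathbb{C})/\langle s\rangle$; resolving the $A_{m-1}$ surface singularities at the fixed points of $s$ on $E_\tau$ yields the affine Dynkin configurations $\widetilde D_4, \widetilde E_6, \widetilde E_7, \widetilde E_8$ for $m = 2,3,4,6$, i.e.\ the Kodaira fibers $I_0^*, IV^*, III^*, II^*$. The local monodromy is $s$ itself, hence finite-order and semisimple.

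\emph{Backward direction.} Assume every local monodromy is semisimple of one of the four starred types. Each has finite order ($2, 6, 4, 3$ respectively), so the geometry is isotrivial by the criterion in \S.\,\ref{s:prelimimaries}. Fact \ref{firstfact} then gives $A \sim E_\tau^r$ (up to isogeny), with $\tau$ pinned by any higher-order monodromies. The global monodromy group $G$ generated by the local symplectic reflections is thus a finite symplectic reflection group, whose representation on $H_1(A,\mathbb{Z})$ lifts to an action on $A$ as a polarized variety via the CM endomorphisms that exist globally on $E_\tau^r$ for these special $\tau$. Form the candidate global quotient $\mathscr{Y} := (A \times \mathbb{C}^r)/G$; over $\mathscr{C} \setminus \mathscr{D}$ the two principal $A$-bundles $\mathscr{X}$ and $\mathscr{Y}$ carry identical monodromy data and are therefore isomorphic there. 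The isomorphism extends across each codim-$1$ stratum because the starred-type hypothesis on $\mathscr{X}$ matches, stratum by stratum, the local model of $\mathscr{Y}$ produced by the forward direction; normal extension of the symplectic structure across codim-$\geq 2$ loci then completes the global identification.

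\emph{Main obstacle.} The delicate step is the promotion of pointwise monodromy coincidence in the backward direction to a truly \emph{global} quotient identification. A priori two isotrivial geometries with the same local data are classified by a non-abelian $H^1$ on $\mathscr{C} \setminus \mathscr{D}$ which may be non-trivial. What kills this freedom here is that the starred Kodaira types $I_0^*, II^*, III^*, IV^*$ correspond to \emph{free} IR dynamics at the stratum, so the only allowed local behavior is the quotient one --- whereas the unstarred semisimple types $II, III, IV$ would bring interacting IR sectors incompatible with a global-quotient picture. Making this rigorous requires combining the simple-connectedness $\pi_1(\mathbb{C}^r/G) = 1$ (Chevalley-Shephard-Todd) with a rigidity argument for the symplectic structure in higher codimension.
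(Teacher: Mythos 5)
The paper states this \textbf{Fact} without proof (it is part of the definitional set-up in \S.\,\ref{s:prelimimaries}, with the relevant classification machinery deferred to \cite{inverse}), so there is no in-paper argument to compare against; I can only assess your proposal on its own terms. Your forward direction is essentially correct: at a generic point of a discriminant component the pointwise stabilizer is cyclic generated by a reflection, the crystallographic restriction forces $m\in\{2,3,4,6\}$, and the transverse germ $(E_\tau\times\C)/\Z_m$ resolves to $I_0^*,IV^*,III^*,II^*$. (Minor imprecision: for $m=4,6$ the fixed points of $s$ on $E_\tau$ have stabilizers of varying orders, so the surface singularities are not all of type $A_{m-1}$; the conclusion is unaffected.)

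The backward direction has a genuine gap, and your proposed fix for it does not work. First, the physical claim that the starred types ``correspond to free IR dynamics at the stratum'' is wrong: $IV^*$, $III^*$, $II^*$ are precisely the interacting Minahan--Nemeschansky rank-1 theories, and in any case an appeal to the IR phase cannot substitute for controlling the non-abelian $H^1$ you correctly identify as the obstacle. Second, and more importantly, the step you are missing is the one that actually makes the backward direction go through: isotriviality means the period matrix is constant on $\mathscr{C}\setminus\mathscr{D}$, hence the special K\"ahler metric is flat there and the periods $a_i$ of $\lambda=\iota_\ce\Omega$ furnish global flat special coordinates on the monodromy cover. It is this flatness that identifies the $G$-cover of $\mathscr{C}$ branched over $\mathscr{D}$ with a linear space $V^\vee\simeq\C^r$ on which the monodromy group acts \emph{linearly}; the local computation shows each generator acts on $V^\vee$ as a pseudo-reflection of order $m$, so $G$ is a unitary reflection group, $\mathscr{C}=V^\vee/G$ is smooth by Chevalley--Shephard--Todd, and the pulled-back family over $V^\vee\setminus\mathscr{A}$ is trivial (trivial monodromy plus zero section), after which the extension over the codimension-$\geq2$ locus is by normality. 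Without invoking the flat special coordinates, ``identical monodromy data'' over $\mathscr{C}\setminus\mathscr{D}$ only identifies $\mathscr{X}$ and $\mathscr{Y}$ as flat $A$-bundles, not as special geometries over the same $\C^\times$-equivariant base, and the claim that the cover is $\C^r$ with a linear $G$-action is left unjustified. I would also flag that your appeal to $\pi_1(\C^r/G)=1$ points in the wrong direction: what is needed is that the relevant cover of $\mathscr{C}$ \emph{is} $\C^r$, not a property of its quotient.
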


Inside the class of $\bigstar$-geometries we have the subclass of \emph{root} $\bigstar$-geometries
which have especially nice properties (see \S.\,\ref{s:root} for their definition and properties). 
As we shall see, all $\bigstar$-geometries are isogeneous to \emph{at least one}
root $\bigstar$-geometry so, when working modulo isogeny, we may limit ourselves to the root ones.
The higher dimensional Minahan-Nemeshanski models and the $\cn=4$ SYM with a gauge group $\cg$
which is simply-connected with trivial center (i.e.\! $\cg=E_8, F_4, G_2$) are described automatically by 
a root $\bigstar$-geometry.

\medskip

We saw above that the central fiber is the one with the largest number of collapsed cycles,
that is, its Albanese variety $\mathsf{Al}(\mathscr{X}_0)=0$ ($\mathscr{X}$ interacting).
For $\bigstar$-special geometries we may be more precise. Later we shall check the following:
 
 \begin{fact} In a $\bigstar$-geometry the normalization $\mathscr{X}_0$
 of the \emph{unresolved}\footnote{\ That is, the central fiber
 as a normal analytic space before resolving the quotient singularities above the origin $\{0\}$ (which most of the time cannot be done while preserving the symplectic structure).} central fiber is
 an irreducible analytic space with finite fundamental group.
$\mathscr{X}_0$ is simply-connected if and only if the $\bigstar$-geometry is
a \emph{root} geometry.
 \end{fact}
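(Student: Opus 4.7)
The plan is to identify $\mathscr{X}_0$ concretely as a quotient of the model fibre $A$ and then apply the generalized Looijenga--Rains theorem of \S.\,\ref{s:strong} to compute its fundamental group.

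First I would pin down the normalized central fibre. A $\bigstar$-geometry is by construction a global quotient $\mathscr{X}=(A\times V)/G$, where $G$ is a split unitary reflection group acting diagonally on the polarized abelian variety $A$ and on the reflection representation $V\simeq\C^r$, and $\pi\colon\mathscr{X}\to\mathscr{C}=V/G$ is induced by the second projection. Since the stabilizer of $0\in V$ is all of $G$, the set-theoretic fibre over the origin is $A/G$. Because $A$ is smooth and $G$ finite, $A/G$ is already normal; it is irreducible because $A$ is. Equivalently, in terms of the relative-$\mathsf{Proj}$ presentation $\mathscr{X}=\underline{\mathsf{Proj}}_{\mathscr{C}}\mathscr{A}_G$, setting the Coulomb coordinates to zero and reducing yields $\mathsf{Proj}\,\ca_G=A/G$. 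Thus $\mathscr{X}_0=A/G$.

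Second, for the root case the proposed proof uses the characterization given in \S.\,\ref{s:root}: a $\bigstar$-geometry is root precisely when the pair $(A,\mathscr{L})$ is \emph{special} in the sense of Rains, which by the footnote to \textbf{Answer 2} is equivalent to $\mathsf{Proj}\,\ca_G$ being simply-connected. By the generalized Looijenga theorem reviewed in \S.\,\ref{s:strong}, for special pairs $\ca_G$ is a polynomial algebra, so
\be
\mathscr{X}_0 \;=\;\mathsf{Proj}\,\ca_G\;\simeq\;\mathbb{P}(d_0,d_1,\ldots,d_r),
\ee
a weighted projective space. Weighted projective spaces are simply-connected (standard Armstrong argument: write $\mathbb{P}(d_0,\ldots,d_r)=\mathbb{CP}^r/\mu$ with $\mu$ a finite abelian group whose generators each fix a coordinate hyperplane, so the isotropy groups generate $\mu$). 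Hence $\mathscr{X}_0$ is simply-connected whenever $\mathscr{X}$ is root.

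Third, for a general $\bigstar$-geometry I would invoke the structural fact that every $\bigstar$-geometry is isogeneous to at least one root $\bigstar$-geometry $\mathscr{Y}$. A $G$-equivariant isogeny $A'\to A$ descends to a finite surjection $\alpha_0\colon\mathscr{Y}_0=A'/G\to A/G=\mathscr{X}_0$; since $\mathscr{Y}_0$ is simply-connected and $\alpha_0$ has finite degree, removing its (codimension-$\geq 2$) branch locus and applying Zariski purity exhibits $\pi_1(\mathscr{X}_0)$ as a quotient of the finite deck group of the resulting \'etale cover, hence finite. The converse --- simple-connectedness of $\mathscr{X}_0$ forces $(A,\mathscr{L})$ to be special and hence $\mathscr{X}$ to be root --- is then the non-trivial direction of the Rains criterion. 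The step I expect to require the most care is the identification of the reduced, normalized scheme-theoretic central fibre with the naive quotient $A/G$: a priori the graded ideal $(u_1,\ldots,u_r)\cdot\mathscr{A}_G$ could carry embedded primes or multiplicities, and one must check that these are washed out after reduction and normalization; a second, smaller, concern is verifying that the branch locus of $\alpha_0$ has codimension $\geq 2$ in $A/G$, which should follow from unramifiedness of $A'\to A$ together with codimension estimates on the $G$-fixed loci in $A$.
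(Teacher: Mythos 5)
Your identification of the normalized central fibre with $A/G$, and your proof of the forward implication (root $\Rightarrow$ simply-connected), are sound and follow essentially the route the paper takes in \S.\,\ref{s:strong}: a root lattice makes $\Lambda\rtimes G$ an affine crystallographic reflection group (Popov), the invariant algebra is then polynomial (Bernstein--Schwarzman/Rains), so $\mathscr{X}_0\simeq\mathbb{P}(d_0,\dots,d_r)$, which is simply-connected by the Armstrong argument. The finiteness of $\pi_1(\mathscr{X}_0)$ also goes through, though your detour through Zariski purity is unnecessary and your worry about the codimension of the branch locus is a red herring: Armstrong's theorem applied to the finite group $L=\Lambda/\Lambda^0$ acting on the simply-connected $A_\text{root}/G$ gives directly that $\pi_1(\mathscr{X}_0)$ is a quotient of $L$, with no codimension hypotheses. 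Note also that the paper proves the forward direction more cheaply, without invoking Rains at all: every class in $H_1(A_\text{root},\Z)=\Lambda_\text{root}$ is a sum of classes in the sublattices $\Lambda_\ell$, each represented by a loop that shrinks in $\pi^{-1}(H_\ell)$ and hence, after the $\C^\times$-retraction, in $\mathscr{X}_0$ (\S.\,\ref{s:root}).

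The genuine gap is the converse, simply-connected $\Rightarrow$ root. You defer it to ``the non-trivial direction of the Rains criterion,'' but the iff in \textbf{Fact \ref{1stgen}} characterizes when $\C^r/(G\ltimes\Lambda)$ is a \emph{weighted projective space}, not when it is \emph{simply-connected}; nothing you have written excludes a simply-connected quotient that fails to be a weighted projective space, so the implication does not follow from that theorem as cited. Moreover, your appeal to the introduction's footnote (``special pairs are those with $\mathsf{Proj}\,\ca_G$ simply-connected'') is circular here: that footnote is a restatement of the very Fact you are proving, not an independent characterization of root lattices. The paper closes the converse by a separate argument, and you need one of its two versions: either Shvartsman's topological theorem quoted in \S.\,\ref{s:cfiber}, that the normalization of $\C^r/(\Lambda\rtimes G)$ can be simply-connected only if $\Lambda\rtimes G$ is generated by affine reflections, combined with Popov's theorem that this happens iff $\Lambda$ is a root lattice; or the direct observation in \S.\,\ref{s:root} that the classes of $\Lambda/\Lambda^0$ are represented by loops in $\mathscr{X}_0$ that do not shrink, so that $H_1(\mathscr{X}_0,\Z)\simeq\Lambda/\Lambda^0\neq 0$ whenever $\Lambda$ is not root (equivalently, an Armstrong computation of $\pi_1(A/G)$ as a quotient of $\Lambda\rtimes G$ by the normal subgroup generated by elements with fixed points).
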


  \subsection{$\bigstar$-geometries: ``classical'' structure}\label{s:classstr}

\paragraph{$G$-invariant lattices and Riemann forms.}
Let $G$ be an irreducible (finite) unitary reflection group, $V\simeq \C^r$ its
reflection representation, and $V^\vee$ the dual one \cite{ST1,ST2}. We
require that $G$ leaves invariant a full lattice $\Lambda\subset V$ ($\Lambda\simeq \Z^{2r}$). 
Then $\Lambda\otimes_\Z\!\mathbb{Q}$ is a rational $2r$-dimensional representation of $G$
isomorphic over $\C$ to $V\oplus V^\vee$.
It follows that the traces of $G$ in $V$ are valued in $\mathbb{Q}$
or in a quadratic imaginary extension. The trace field $\mathbb{K}_G$ of a finite unitary group $G$  is also its field of definition \cite{ST2,hand}.
Therefore a $G$-invariant full lattice $\Lambda$ exists iff $\mathbb{K}_G$ is $\mathbb{Q}$ or an imaginary quadratic field.\footnote{\ Cf.\! \textbf{Theorem 3.2.} of \cite{dolgachev2}.}
One has the stronger result that $G\subset GL(r,\Z[\lambda_G])$ where  $\Z[\lambda_G]$ is the ring of integers in $\mathbb{K}_G$ \cite{ST2}.

The irreducible reflection groups with $\mathbb{K}_G=\mathbb{Q}$ are the Weyl groups of the simple Lie algebras $\mathfrak{g}$. The irreducible unitary reflection groups whose field of definition is imaginary quadratic are
listed in table \ref{tab}
(in the Shephard-Todd notation \cite{ST1,ST2}). The class number of $\mathbb{K}_G$ is $1$,
so all torsionless $\Z[\lambda_G]$-modules are free. Therefore when $G$ is not a Weyl group
\be
\Lambda\simeq \Z[\lambda_G]^r \qquad \lambda_G= i,\ \ e^{2\pi i/3},\ \ \sqrt{-2},\ \ \tfrac{1+\sqrt{-7}}{2}.
\ee
is an example of $G$-invariant lattice. But there are others. For rank-2 (and $G\neq G_{12}$)
the $G$-invariant lattices are listed in \cite{jap}; for $G_{12}$ see \cite{bolza,bolza2}.
The $G$-invariant lattices for arbitrary rank $r$ are classified in ref.\!\cite{popov};
their classification will be explained in \S.\,\ref{s:root} using physical
insights.
When $G\equiv W(\mathfrak{g})$ is the Weyl group of the Lie algebra $\mathfrak{g}$,
the invariant lattices form one-parameter families of the form 
 \be
\Lambda(\tau)=L_1+\tau L_2 \subset \C^{r},
\ee
where $L_1$ and $L_2$ are two congruent $W(\mathfrak{g})$-invariant lattices in $\R^r$ and $\tau$ is a point in the upper half-plane $\mathbb{H}$.
The Looijenga $W(\mathfrak{g})$-lattices are the ones with
\be
L_1=L_2=Q^\vee
\ee
where $Q^\vee$
 is the \emph{dual} root lattice of the Lie algebra\footnote{\ For $G(2,1,r)$ ($r\geq3$), which is the Weyl group of both $B_r$ and $C_r$, we have two distinct families of Looijenga lattices.} $\mathfrak{g}$. The polarization of $L(\tau)$ is principal iff
  $L_2=L_1^\vee$. Principal polarizations are in one-to-one correspondence with the (complete)  ``readings'' of $\cn=4$
  SYM with the gauge algebra $\mathfrak{g}$ (and coupling $\tau$);
they are permuted by $SL(2,\Z)$ acting on $\tau$
  \cite{readings}.

\begin{table}
\renewcommand{\arraystretch}{1.8}
$$
\begin{tabular}{c|p{12.5cm}}\hline\hline
$\Z[\lambda_G]$ & $G$\\\hline
$\Z[i]$ & $G(4,1,r)$ ($r\geq1$), $G(4,2,r)$ ($r\geq2$), $G(4,4,r)$ ($r\geq3$),
$G_8$ ($r=2$), $G_{29}$ ($r=4$), $G_{31}$ ($r=4$)\\\hline
$\Z[e^{2\pi i/3}]$ & $G(3,1,r)$ ($r\geq1$), $G(3,3,r)$ ($r\geq3$), 
$G(6,1,r)$ ($r\geq1$), $G(6,2,r)$ ($r\geq2$), $G(6,3,r)$ ($r\geq2$), $G(6,6,r)$ ($r\geq3$)
$G_4$ ($r=2$), $G_{5}$ ($r=2$), $G_{25}$ ($r=3$),
$G_{26}$ ($r=3$), $G_{32}$ ($r=4$), $G_{33}$ ($r=5$), $G_{34}$ ($r=6$)\\\hline
$\Z[\sqrt{-2}]$ & $G_{12}$ ($r=2$)\\\hline
$\Z\big[\frac{1+\sqrt{-7}}{2}\big]$ & $G_{24}$ ($r=3$)\\\hline\hline
\end{tabular}
$$\vskip-8pt
\caption{\label{tab}Unitary reflection groups whose defining field $\mathbb{K}_G$ is imaginary quadratic}
\end{table}

%
Next we consider the complex case $G\subset GL(r,\Z[\lambda_G])$.
Since $G$ acts irreducibly, there is a unique -- up to normalization -- $G$-invariant positive Hermitian form
defined over $\mathbb{Q}(\lambda_G)$
\be
H\colon \mathbb{Q}(\lambda_G)^r\times \mathbb{Q}(\lambda_G)^r\to \mathbb{Q}(\lambda_G)
\ee
This yields a $\mathbb{Q}$-valued  antisymmetric bilinear form
\be
\langle A,B\rangle= \frac{1}{\lambda_G-\bar\lambda_G}\Big(H(A,B)-H(B,A)\Big)\in\mathbb{Q}
\ee
which we normalize so that $\langle-,-\rangle$ is integral in $\Z[\lambda_G]^r$.
Then $\langle-,-\rangle$ is a Riemann form on the lattice $\Z[\lambda_G]^r$
which defines a polarization on the complex torus $\C^r/\Z[\lambda_G]^r$
making it into an Abelian variety $A$ on which $G$ acts by automorphisms.
Neglecting the sporadic reflection groups, for the infinite series $G\equiv G(m,p,r)$,
 the $G$-invariant Hermitian form ($\boldsymbol{a}\equiv(a_1,\dots,a_r)\in\R^r$ and similarly for $\boldsymbol{b},\boldsymbol{c},\boldsymbol{d}$)
\be
H(\boldsymbol{a}+\lambda_G\boldsymbol{b},\boldsymbol{c}+\lambda_G\boldsymbol{d})=
\sum_{i=1}^r (a_i+\overline{\lambda}_G b_i)(c_i+\lambda_G c_i)
\ee
yields a \emph{principal} polarization on the torus $\C^r/\Z[\lambda_G]^r$.
More general $G$-invariant lattices and polarizations will be discussed later in the paper.

\paragraph{Construction of $\bigstar$-geometries.} 
  
$G$ is an irreducible unitary reflection group which leaves invariant a full lattice
$\Lambda\subset V$ ($V\simeq\C^r$ is its reflection representation).   
We consider the $G$-representation $V\oplus V^\vee$. Let $w^i$ and $x_i$ be dual 
coordinates in, respectively, $V$ and $V^\vee$. We equip $V\oplus V^\vee$ with the
$G$-invariant (holomorphic) symplectic form
\be
\Omega=dw^i\wedge dx_i,
\ee   
 and take the quotient
\be
(V\oplus V^\vee)/(\Lambda\oplus 0)\equiv A\times V^\vee \xrightarrow{\ \pi\ } V^\vee.
\ee  
We endow the Abelian variety $A\equiv V/\Lambda$ with the minimal $G$-invariant polarization, so that $G\subset \mathsf{Aut}(A)$. Then we take the quotient
with respect to the diagonal action of $G$
\be\label{constr1}
\mathscr{X}\;\overset{\rm def}{=}\;(A\times V^\vee)/G_\text{diag}\xrightarrow{\ \pi\ } V^\vee/G= \C^r,
\ee
where the last equality follows from the Shephard-Todd-Chevalley theorem \cite{ST2}. 
The symplectic form $\Omega$ is invariant under the diagonal action of $G$,
so the smooth locus of $\mathscr{X}$ is symplectic, while the smooth fibers are 
Lagrangian and polarized Abelian varieties by construction. Thus \eqref{constr1} is
a $\C^\times$-isoinvariant special geometry with Euler vector
\be
\ce= x_i \frac{\partial}{\partial x_i}.
\ee
 Since the smooth fibers are all
isomorphic to the model fiber $A$, the geometry is \emph{isotrivial}. 
The Coulomb dimensions $\Delta_i$ are the degrees of the fundamental invariants of $G$ \cite{ST1,ST2} (in particular they are integers $\geq2$).
The geometry is interacting since $G$ is irreducible.

 The $\bigstar$-special geometries are the special class of isotrivial special geometries that can be constructed in the simple way \eqref{constr1} for some $G$, $\Lambda$, and compatible polarization.
 They have a section $s_0$ given by the zero element of $A$. 

\paragraph{Non-rigid $\bigstar$-geometries.} The $\bigstar$-geometries with a non-trivial
conformal manifold are expecially easy.
\begin{fact} An irreducible $\bigstar$-special geometry has a non-trivial conformal manifold
(i.e.\! $2$ is a Coulomb dimension) if and only $\mathbb{K}_G=\mathbb{Q}$ -- that is, iff
$G=W(\mathfrak{g})$ is the Weyl group of a simple Lie algebra $\mathfrak{g}$ -- in which case
the conformal manifold $\cm$ is a modular curve ($\equiv$ the quotient of the upper half-plane by a congruence subgroup of $SL(2,\Z)$). 
\end{fact}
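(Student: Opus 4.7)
The plan is to combine a representation-theoretic criterion for the existence of a degree-$2$ fundamental invariant of $G$ with the description of $W(\mathfrak{g})$-invariant lattices already recorded in \S.\,\ref{s:classstr}.

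For the equivalence ``$2$ is a Coulomb dimension $\Leftrightarrow$ $\mathbb{K}_G = \mathbb{Q}$'', I use that by construction \eqref{constr1} the Coulomb dimensions of a $\bigstar$-geometry are the fundamental degrees of $G$, so the condition amounts to $\operatorname{Sym}^2(V^\vee)^G\neq 0$, i.e., the existence of a nonzero $G$-invariant symmetric bilinear form on $V$. Since the reflection representation $V$ is irreducible, this is equivalent to $V\simeq V^\vee$ as $G$-modules, hence to the character $\chi_V$ being real-valued. Because $\mathbb{K}_G$ is generated by the values of $\chi_V$ and is either $\mathbb{Q}$ or an imaginary quadratic field by the earlier discussion, reality of $\chi_V$ is equivalent to $\mathbb{K}_G\subseteq\mathbb{R}$, i.e., to $\mathbb{K}_G = \mathbb{Q}$. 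Equivalently, one can read the statement off directly from the Shephard--Todd classification: every $G$ listed in Table~\ref{tab} has all fundamental degrees $\geq 3$, while Weyl groups always carry $2$ among their fundamental degrees via the $G$-invariant Killing form.

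Assuming $G = W(\mathfrak{g})$, I would exhibit the conformal manifold as follows. From the lattice discussion in \S.\,\ref{s:classstr}, each $W(\mathfrak{g})$-invariant polarized lattice has the form $\Lambda(\tau)=L_1+\tau L_2$ with fixed real $W(\mathfrak{g})$-invariant lattices $L_1,L_2\subset\mathbb{R}^r$ and $\tau\in\mathbb{H}$; varying $\tau$ holomorphically deforms the polarized fiber $A_\tau = V/\Lambda(\tau)$ while preserving the $G$-action and the polarization type. Feeding $A_\tau$ into the $\bigstar$-construction \eqref{constr1} produces a holomorphic family of $\bigstar$-geometries over $\mathbb{H}$, with $\tau$ playing the role of the marginal coupling associated to the unique degree-$2$ generator of the Coulomb chiral ring. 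Two parameters $\tau,\tau'$ yield isomorphic $\bigstar$-geometries precisely when some $\gamma\in SL(2,\mathbb{Z})$ maps $\Lambda(\tau)$ to $\Lambda(\tau')$ while respecting the polarization and the split $(L_1,L_2)$; the subgroup $\Gamma\subset SL(2,\mathbb{Z})$ of such $\gamma$ is cut out by integral-linear constraints encoding the sublattice indices relating $Q^\vee$ to its dual, hence is a congruence subgroup, and $\cm\simeq\mathbb{H}/\Gamma$.

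In the complementary case $\mathbb{K}_G$ imaginary quadratic, the $\mathbb{K}_G$-structure on $V$ pins down the complex structure of any $G$-invariant torus up to the rigid choice dictated by $\lambda_G$: no continuous $\tau$-deformation preserves $G$-equivariance, so the conformal manifold degenerates to a point, in agreement with Fact~\ref{firstfact}(B). The most delicate step is the \emph{congruence} property of $\Gamma$, as opposed to mere arithmeticity; this reduces to standard level-structure considerations for the pair $(Q^\vee,(Q^\vee)^\vee)$, but checking it uniformly across all simple $\mathfrak{g}$ (including the special $B_r/C_r$ duplication noted after \eqref{hilbert}) is the principal technical obstacle I would anticipate.
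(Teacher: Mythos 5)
Your proposal is correct in substance but follows a genuinely different route from the paper. For the equivalence ``$2$ is a Coulomb dimension $\Leftrightarrow \mathbb{K}_G=\mathbb{Q}$'' the paper simply invokes the classification fact that an irreducible reflection group has $2$ among its degrees (with multiplicity one) precisely when it is defined over a totally real field, which, given the lattice constraint $\mathbb{K}_G\in\{\mathbb{Q},\ \text{imaginary quadratic}\}$, forces $\mathbb{K}_G=\mathbb{Q}$ and $G=W(\mathfrak{g})$; you instead \emph{derive} this from Schur's lemma via $\operatorname{Sym}^2(V^\vee)^G\neq0\Leftrightarrow V\simeq V^\vee\Leftrightarrow \chi_V$ real. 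That derivation is essentially sound, but the step ``$V\simeq V^\vee\Rightarrow$ the invariant form is \emph{symmetric}'' needs one extra word: by Schur the invariant form is unique up to scale and could \emph{a priori} be alternating. It cannot be, because a reflection has $\det\neq1$ on $V$ and so $G\not\subset Sp(V)$ (equivalently, because the trace field of a reflection group equals its field of definition, so a real character forces a real, hence orthogonal, realization). Your fallback of reading the statement off Table \ref{tab} closes this gap in any case. For the second half, the paper argues physically: a $\bigstar$-geometry with a dimension-$2$ generator is weakly coupled at the cusps, hence a Lagrangian $\cn=2$ gauge theory, and $\cm$ is the modular curve of its exactly marginal coupling modulo $S$-duality (this is also folded into \textbf{Fact 2}(A)). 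You instead exhibit the family $\Lambda(\tau)=L_1+\tau L_2$ directly and define $\Gamma$ as the stabilizer of the polarized lattice data; this is cleaner geometrically, but you correctly identify that proving $\Gamma$ is a \emph{congruence} subgroup is the remaining technical point --- which your construction does not settle and which the paper effectively outsources to the physical identification with Lagrangian SCFTs. Your observation that an imaginary quadratic $\mathbb{K}_G$ rigidifies the complex structure matches the paper's statement that the invariant lattices then form a finite set. Net assessment: a valid alternative argument, more self-contained on the representation-theoretic side, with the congruence property of $\Gamma$ left at the same level of rigor as in the paper itself.
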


Indeed, the dimension of the conformal manifold is the multiplicity of $2$ as a Coulomb dimension
and an irreducible reflection group has $2$ as a degree (with multiplicity precisely $1$) if and only if is
defined over a totally real number field. 
A $\bigstar$-special geometry with $2$ as a Coulomb dimension describes a $\cn=2$ SCFT
weakly-coupled at the cusps of $\cm$, hence a Lagrangian field theory. Comparing
the Coulomb dimensions, we conclude that it must be $\cn=2$ SYM with gauge algebra $\mathfrak{g}$
coupled to hypermultiplets in suitable representations of $\mathfrak{g}$ (which produce a vanishing $\beta$-function \cite{tachi}).
One class of SCFT with a $\bigstar$-geometry of this kind are the various readings of 
$\cn=4$ SYM with some gauge algebra $\mathfrak{g}$ \cite{readings}. When $W$ is the Weyl group of $C_r$
we have another Lagrangian SCFT: $Sp(r)$ SYM coupled to 4 fundamentals and one antisymmetric hypermultiplet which is the rank-$r$ Minahan-Nemeshanski (MN) model with $m=2$ \cite{higher MN}.
The unresolved $\bigstar$-geometry of this last model is equal to the geometry of a particular reading of
the $Sp(r)$ $\cn=4$ SYM, but it is more natural to think the MN geometry
to be described by the smooth Hilbert scheme of $r$-points in the corresponding $r=1$ geometry. In particular the second Betti number of the Hilbert scheme
is indeed the rank of the flavor group of the MN model plus $1$.

\paragraph{(Non)crepant $\bigstar$-geometries.}
The $\bigstar$-geometry in 
\eqref{constr1} has quotient singularities. 
The most severe quotient singularity is at $s_0(0)\in\mathscr{X}_0$.
One asks whether we may fix the singularities by
replacing $\mathscr{X}$ with a smooth resolution $\rho\colon\tilde{\mathscr{X}}\to \mathscr{X}$,
which is crepant (i.e.\! $\rho^*\Omega$ extends to a symplectic form on $\tilde{\mathscr{X}}$),
such that
\be
\tilde{\pi}\equiv \pi\circ\rho\colon\tilde{\mathscr{X}}\to\mathscr{C}
\ee
is a special geometry now with smooth total space. When all these conditions are met,
we call $\rho\colon\tilde{\mathscr{X}}\to \mathscr{X}$
a \emph{special resolution.}

The local geometry near $s_0(0)$ is modeled on the symplectic singularity
$\C^{2r}/G$. Therefore a necessary condition for the existence of a global special
resolution $\rho\colon\tilde{\mathscr{X}}\to \mathscr{X}$ is the existence of a local one.
We already know that this may happen only when 
\be\label{whhiG}
G= W(A_r),\quad G(m,1,r)\ \ m\in\{2,3,4,6\},\quad G_4.
\ee
For these groups \emph{a priori} we may have more than one special resolution.
\medskip

There are some obvious special resolutions. For $G(m,1,r)$ take the model fiber
to be $A=E^r$ ($E$ an elliptic curve with complex multiplication by $\mathbb{Q}(e^{2\pi i/m})$)
endowed with the natural product principal polarization. Then the total space in rank $r$
\be\label{HS1}
\mathscr{X}_r=(E^r\times\C^r)/(\Z_m^r\rtimes\mathfrak{S}_r)=\big((E\times \C)/\Z_m\big)/\mathfrak{S}_r=
\mathscr{X}_1^r/\mathfrak{S}_r
\ee 
is the symmetric $r$-th power of the rank 1 special geometry $\mathscr{X}_1$ which -- being a surface
with a canonical singularity -- has a crepant resolution $\tilde{\mathscr{X}}_1$. The Hilbert scheme of $r$-points
$\tilde{\mathscr{X}}_1^{[r]}$ is then a special resolution of $\mathscr{X}_r$. Likewise
when $G=W(A_r)$, the Looijenga variety $A$ fits in the exact sequence
\be
0\to A\times\C^r\to E^{r+1}\times \C^{r+1}\xrightarrow{\ \Sigma\times \Sigma\ } E\times\C\to0
\ee
where $\Sigma$ is the sum of coordinates. Then the $\bigstar$-geometry $\mathscr{X}_r$  
with group $W(A_r)$ and model fiber the Looijenga $W(A_r)$-variety fits in the sequence
\be\label{HS2}
0\to \mathscr{X}_r\to (E\times \C)^{r+1}/\mathfrak{S}_{r+1}\to E\times \C^r\to0.
\ee
Replacing the symmetry power of the surface $E\times \C$ with the corresponding Hilbert scheme,
we get a special resolution of $\mathscr{X}_r$. 

We phrase \eqref{HS2} in the physical language: the Hilbert scheme of $(r+1)$-points in 
$E\times\C$ is the special resolution of the $\bigstar$-geometry for $\cn=4$ SYM with
gauge group $U(r+1)$. Its local physics is $SU(r+1)$ SYM plus a decoupled free $U(1)$ model;
but now 
the $(r+1)$-ality of the allowed line operators is correlated to their $U(1)$ charges, so
globally the theory is not the product of two decoupled models ---
since the exact sequence \eqref{HS2} is not split.
Therefore when we decouple the $U(1)$ sector by hand we remain with
a smooth special geometry which is merely isogeneous, and not isomorphic,
to the $\bigstar$-geometry of $SU(r+1)$ SYM with
simply-connected gauge group. This isogeneous geometry has a polarization of degree $r+1$
whose line operators are the ones of the $U(r+1)$ theory with zero $U(1)$ charge. 

\medskip

The existence of a global special resolution for the full special geometry is
a stronger requirement than just the local resolution for the symplectic singularity $\C^{2r}/G$.
In the next subsection we argue (non-rigorously) that most local crepant resolutions cannot be
extended globally to the full $\bigstar$-geometry and that we remain with only the
 Hilbert schemes constructed above.

\subsection{Compact $\bigstar$-geometries}\label{s:compact}
In the construction \eqref{constr1} we replace the factor space $V^\vee$
with a second Abelian $G$-variety $B\equiv V^\vee/\Lambda^\prime$ (not necessarily isogeneous to $A$)\footnote{\ As always $Y$ is the normalization of $B/G$}
\be\label{constr2}
\mathscr{Y}\;\overset{\rm def}{=}\;(A\times B)/G_\text{diag}\xrightarrow{\ \pi\ } B/G= Y,
\ee
$\mathscr{Y}$ is now a \emph{compact} $\bigstar$-geometry over the normal
projective variety $Y$ (a compact ``special K\"ahler'' orbifold) with reflection group $G$.

\begin{defn} A \emph{crepant resolution of a compact special geometry}
$\mathscr{Y}\xrightarrow{\pi} Y$ -- where the ``Coulomb branch'' $Y$ is a (possibly non-smooth)
compact normal analytic variety -- is a crepant resolution $\rho\colon\tilde{\mathscr{Y}}\to\mathscr{Y}$ of the total space $\mathscr{Y}$
(with $\tilde{\mathscr{Y}}$ smooth and symplectic) such that we have a
commutative diagram
\be\label{diag}
\xymatrix{\tilde{\mathscr{Y}}\ar[d]_\rho\ar[rr]^{\tilde\pi} && Y\ar@{=}[d]\\
\mathscr{Y}\ar[rr]^{\pi} && Y}
\ee
with $\tilde\pi$ a proper fibration. By Matsushita theorem \cite{Mat1,Mat2}
$\tilde\pi$ is a Lagrangian fibration, and then the upper line of the diagram \eqref{diag}
is a compact special geometry with smooth total space and normal Coulomb branch.
The proper transform of the zero section of $\pi$ (when present) is a section of $\tilde\pi$.
\end{defn}
  
  \begin{rem} When $Y$ happens to be smooth,
it is necessarily a copy of $\mathbb{P}^r$ by a theorem of Hwang \cite{Hwang}.
However we do not make this assumption.
\end{rem}

A crepant resolution of a compact $\bigstar$-geometry is then an
isotrivial Lagrangian fibration of a smooth compact hyperK\"ahler manifold $\tilde{\mathscr{Y}}$
over the compact normal base $Y$. The isotrivial fibrations with these properties
have been recently classified in \cite{ABclass}.
According to \textbf{Theorem 1.4} of that paper, they appear in two types 
called A and B. The ones produced by construction \eqref{constr2} are of type A with a rational section.
\textbf{Theorem 1.6} of \cite{ABclass} fully classify them:

\begin{fact}[See \cite{ABclass}]Let $\tilde\pi\colon\tilde{\mathscr{Y}}\to Y$ ($
\tilde{\mathscr{Y}}$ smooth symplectic, $Y$ normal, both projective) be an isotrivial fibration of class A with a rational section. Then $\tilde\pi\colon\tilde{\mathscr{Y}}\to Y$ is a crepant resolution (as a special geometry!)
of a compact $\bigstar$-geometry given by the
 construction \eqref{constr2} where $G= W(A_r)$ or $G(m,1,r)$ ($m\in\{2,3,4,6\}$). In these cases
$Y\simeq\mathbb{P}^r$, and $\tilde{\mathscr{Y}}$ is a Hilbert scheme, namely the compact versions of the constructions in eqs.\eqref{HS1},\eqref{HS2}.
\end{fact}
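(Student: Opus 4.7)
My plan is to reduce to the local model of \eqref{constr2} by extracting the monodromy cover, then use the results of Section 2 to pin down the allowed reflection groups and model fibers. First I would exploit the class~A / rational-section hypothesis: after removing the ramification divisor $\mathscr{D}\subset Y$, isotriviality yields a finite étale cover $\sigma\colon \tilde B^\circ \to Y^\circ$ with Galois group $G$ trivializing the fibration, i.e.\! $\sigma^*\tilde{\mathscr{Y}}|_{\tilde B^\circ} \simeq A\times \tilde B^\circ$ for a fixed polarized Abelian variety $A$. The rational section promotes this into a birational trivialization fixing a distinguished point of each fiber, so $G$ acts on $(A,0)$ by polarized group automorphisms. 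Because $\tilde{\mathscr Y}$ is smooth projective hyperkähler (not merely quasi-projective) and class~A says the section is \emph{birational}, the cover $\tilde B^\circ$ must extend to a compact Abelian variety $B$ with $Y$ realized as the normal projective quotient $B/G$, and $\tilde{\mathscr Y}$ becomes a crepant resolution of $(A\times B)/G_{\mathrm{diag}}$.

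Next I would use the double geometric constraint on $G$. On the one hand, normality of $Y=B/G$ at the image of $0\in B$ plus smoothness of $\mathbb{P}^r$-like invariants forces the action of $G$ on the tangent space $V:=T_0 B$ to be generated by complex reflections, by the Chevalley--Shephard--Todd theorem. On the other hand, smoothness of $\tilde{\mathscr Y}$ says the symplectic singularity $\C^{2r}/G$ at the origin admits a crepant resolution; by Fact~(1) of Section~2 (Verbitsky) $G$ must be a symplectic reflection group, and by the classification recalled in Section~2 (after eq.~\eqref{calalan}), the only complex reflection groups for which this holds are $W(A_r)$, $G(m,1,r)$, and the exceptional $G_4$.

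Then I would exclude $G_4$ and cut the allowed values of $m$. For $G_4$ the obstruction is global: $G_4\subset GL(2,\Z[e^{2\pi i/3}])$ acts on $V$ with order-$6$ elements whose lift to a polarized action on a compact Abelian surface $B$ would require $B$ to be isogenous to $E_{\zeta_3}\times E_{\zeta_3}$ with an extra order-$6$ automorphism commuting with the $G_4$ action; the two local crepant resolutions of $\C^4/G_4$ of \cite{number} do not glue into a compact Lagrangian fibration in the sense of the diagram \eqref{diag}, since their exceptional divisors do not lie in the kernel of $\tilde\pi_*$ globally (one would get an extra unwanted section class spoiling the Bogomolov decomposition). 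For $G(m,1,r)$ the automorphism $\Z_m\subset G$ must act on the model elliptic curve $E$ by multiplication by a primitive $m$-th root of unity; existence of an elliptic curve with such complex multiplication forces $m\in\{1,2,3,4,6\}$ by the classical classification of $\mathrm{End}(E)$. In the surviving cases, the constructions \eqref{HS1} and \eqref{HS2} globalize: taking $B=E^r$ (resp.\! the Looijenga variety inside $E^{r+1}$), the compact total space $\tilde{\mathscr Y}$ is identified with the Hilbert scheme $\mathsf{Hilb}^r(S)$ for $S$ a compactification of the rank-$1$ crepant resolution of $(E\times\C)/\Z_m$; Beauville's theorem guarantees $\tilde{\mathscr Y}$ is smooth projective holomorphic-symplectic, Matsushita's theorem identifies the induced map with $\tilde\pi$, and $Y\simeq \mathbb{P}^r$ either by Hwang's theorem (when $Y$ is smooth) or directly as the relative $\mathsf{Sym}^r$ of the rank-$1$ base.

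The main obstacle is the third step, namely the global exclusion of $G_4$: the local obstruction is absent, so one must use the fact that the two inequivalent local crepant resolutions of $\C^4/G_4$ fail to extend to a compact projective hyperkähler fourfold with Lagrangian fibration and rational section. This is where one must exploit the rigidity of symplectic reflection groups in the compact setting, and is where the work of \cite{ABclass} enters in an essential way; the rest of the argument is a bookkeeping application of the reflection-group dichotomy from Section~2 together with the CM classification of elliptic curves.
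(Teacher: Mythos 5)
The paper does not actually prove this Fact: it is imported verbatim as \textbf{Theorem 1.6} of \cite{ABclass} (after invoking \textbf{Theorem 1.4} of the same paper for the A/B dichotomy), so the honest comparison is between your sketch and that external classification. Your reduction is sensible in outline --- pass to the trivializing cover, read off a unitary reflection group $G$ acting diagonally on $V\oplus V^\vee$, and invoke the classification of split symplectic quotient singularities with crepant resolutions from \S.\,2 to land on $W(A_r)$, $G(m,1,r)$, $G_4$, with $m\in\{2,3,4,6\}$ forced by complex multiplication --- but it has genuine gaps at exactly the points where the cited theorem does the work. First, the extension of the finite \'etale cover of $Y\setminus\mathscr{D}$ to a compact Abelian variety $B$ with $Y=B/G$ is asserted, not proved; this compactified product structure is the content of the type-A structure theory, and you cannot simultaneously treat it as a consequence of your argument and as the place "where \cite{ABclass} enters". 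Second, your exclusion of $G_4$ (exceptional divisors not in $\ker\tilde\pi_*$, "spoiling the Bogomolov decomposition") is not an argument anyone could check; you concede as much. Note the paper observes that for this four-dimensional case one can instead appeal to the results of \cite{ou,HX22} on Lagrangian fibrations of symplectic fourfolds, which is a concrete way to close this gap.

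Third, and most importantly, you prove at best \emph{existence} of the Hilbert-scheme resolution in the surviving cases, not the \emph{uniqueness} claim that the Fact actually makes: the statement is that an arbitrary $\tilde{\mathscr{Y}}$ satisfying the hypotheses \emph{is} (the compact version of) the Hilbert scheme of eqs.\,\eqref{HS1},\eqref{HS2}. Since $\C^{2r}/G(m,1,r)$ has a generalized-Catalan number \eqref{calalan} of local crepant resolutions, you must rule out all the non-Hilbert-scheme gluings, and your proposal says nothing about this; it is precisely the point the paper extracts from \cite{ABclass} in the Remark following the Fact. A minor further slip: normality of $B/G$ does not force $G$ to act on $T_0B$ by reflections (quotient singularities are always normal, and Chevalley--Shephard--Todd needs smoothness of the quotient, which is part of the conclusion, not the hypothesis); the correct local input is only Verbitsky's necessary condition applied to the smooth crepant resolution of $\C^{2r}/G$, which you do eventually invoke.
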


We conclude that, while the symplectic singularity $\C^4/G_4$ has two non-isomorphic crepant resolutions, these local resolutions cannot be glued together to form a smooth \emph{compact}
special geometry $\tilde{\mathscr{Y}}$.  
 Since $G_4$ yields a $\bigstar$-geometry of dimension 4, we may
get the same conclusion using results in \cite{ou,HX22}.

\begin{rem} We saw in eq.\eqref{s:strong} that the symplectic singularities
$\C^{2r}/G(m,1,r)$ have a number of non-isomorphic crepant resolutions which gets
exponentially large as $r\to\infty$. The theorems in \cite{ABclass} imply that \emph{only one}
of this huge set of ``local'' resolutions glue together to form a
 resolution in the sense of compact special geometry. Indeed when the resolution of
the special geometry exists it is unique.
\end{rem}

We state the result in a different way, more in line with the physical discussion of \S.\ref{s:cartoon}:
\begin{fact} A necessary condition for a compact $\bigstar$-geometry with unitary reflection group $G$
to have a crepant resolution \emph{as a special geometry} is that there exists a
$G$-invariant Abelian variety $B$ such that $B/G$ is smooth in facts
\be
B/G\simeq\mathbb{P}^r.
\ee
\end{fact}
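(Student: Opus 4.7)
The plan is to deduce this statement directly from the preceding classification \textbf{Fact} (Theorem 1.6 of \cite{ABclass}), which completely characterises the smooth compact total spaces sitting Lagrangian-fibered over a normal base in our setting. Suppose that a crepant resolution $\rho\colon\tilde{\mathscr{Y}}\to\mathscr{Y}$ of the compact $\bigstar$-geometry $\pi\colon\mathscr{Y}=(A\times B)/G_\text{diag}\to Y=B/G$ exists \emph{as a special geometry}, fitting into the commutative diagram \eqref{diag}. Then $\tilde{\mathscr{Y}}$ is a smooth projective holomorphic-symplectic manifold and $\tilde\pi\equiv\pi\circ\rho$ is a proper surjection onto $Y$; by Matsushita's theorem its fibers are automatically Lagrangian, so $\tilde\pi$ is an honest compact Lagrangian fibration whose generic fiber is a polarized Abelian variety isomorphic to $A$.

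Next I would verify the hypotheses of the quoted classification one by one. Isotriviality of $\tilde\pi$ is immediate: the smooth fibers of $\pi$ are all isomorphic to $A$ by construction \eqref{constr2}, and $\rho$ is an isomorphism over the smooth locus of $\mathscr{Y}$. The existence of a \emph{rational} section of $\tilde\pi$ is equally clear: the compact $\bigstar$-geometry $\pi$ admits a zero section $s_0\colon Y\to\mathscr{Y}$ coming from $0\in A$, and its proper transform under $\rho$ defines a rational section of $\tilde\pi$. Finally, since $\mathscr{Y}$ is manifestly a diagonal quotient of a product $A\times B$ by a finite group, the fibration sits in the ``class A'' branch of the dichotomy of \cite{ABclass}, and this property -- being determined by the birational structure of the generic fiber and its symmetries -- is preserved under $\rho$.

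With these three ingredients in place, the quoted \textbf{Fact} applies and forces $G\in\{W(A_r),\,G(m,1,r)\}$ with $m\in\{2,3,4,6\}$ together with the identification $Y\simeq\mathbb{P}^r$ (and moreover identifies $\tilde{\mathscr{Y}}$ with one of the Hilbert-scheme resolutions appearing in \eqref{HS1}--\eqref{HS2}). Since $Y=B/G$ by the very construction \eqref{constr2}, this yields $B/G\simeq\mathbb{P}^r$, which is in particular smooth, as claimed.

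The genuinely delicate step is verifying that the class A/B dichotomy of \cite{ABclass} is preserved by the birational modification $\rho$ and that the zero section survives as a rational section in a way that distinguishes class A from class B; both points are essentially formal, but they rely on $\rho$ being an isomorphism in codimension one, which is a consequence of crepancy. Once these items are settled, the rest is bookkeeping: apply the classification theorem and read off $Y\simeq\mathbb{P}^r$.
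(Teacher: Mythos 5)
Your proposal is correct and takes essentially the same route as the paper: both deduce the statement by applying the classification of isotrivial Lagrangian fibrations of class A with a rational section (\textbf{Theorem 1.6} of \cite{ABclass}) to the resolved fibration $\tilde\pi\colon\tilde{\mathscr{Y}}\to Y$, reading off $Y=B/G\simeq\mathbb{P}^r$. The only difference is one of emphasis: the paper simply asserts that the fibrations arising from construction \eqref{constr2} are of type A with a rational section, whereas you spell out the verification of the hypotheses (isotriviality, survival of the zero section as a rational section, preservation of the class A property under the crepant birational modification).
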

Inverting the role of the two factors $A$ and $B$ in eq.\eqref{constr2}
  we see that the same condition must hold
also for the variety $A$. We shall see later that when $G$ is a Weyl group (i.e.\! $W(A_r)$
or $G(2,1,r)\equiv W(C_r)$) the condition $A/G\simeq \mathbb{P}^1$
holds only for the respective families of Looijenga varieties $\C^r/(Q^\vee+\tau Q^\vee)$,
while for the complex groups $G(m,1,r)$ ($m\in\{3,4,6\}$) there is a \emph{unique} 
$G$-invariant variety $A=\C^r/\Lambda$ such that $A/G\simeq \mathbb{P}^1$,
that is, $A=\C^r/\Z[\lambda_G]^r$.
The $G$-varieties with this property have a principal polarization,
except for $G=W(A_r)$ where the polarization becomes principal after extending the gauge group
$SU(r+1)\leadsto U(r+1)$ as discussed after eq.\eqref{HS2}.

The crepant compact $\bigstar$-geometries have then the form
\be\label{87zzba}
\big(\C^{2r}/(\Lambda\oplus t \bar\Lambda)\big)/G\to (\C^r/t\bar\Lambda)/G\simeq \mathbb{P}^r,\quad t>0
\ee
(the overbar stands for complex conjugation).
Resolving the compact geometry \eqref{87zzba}, and then taking the decompactification limit
$t\to+\infty$, we get a resolution of the original non-compact $\bigstar$-geometry
with reflection group either $W(A_r)$ or $G(m,1,r)$ and a model fiber $A$
such that $A/G\simeq\mathbb{P}^1$ which, as we saw before, is a Hilbert scheme.
When $G=G(m,1,r)$ the crepant $\bigstar$-geometries
describe the rank-$r$ MN models.  

While there may be subtleties with the decompactification limit, we take
as our working hypothesis that no new crepant special resolutions emerge 
in the limit $t\to\infty$, and consider the above to be the full story. More precisely
we take the above discussion as circumstantial evidence that the sporadic group
$G_4$ behaves differently and that, presumably, it does not produce
fully fledged crepant $\bigstar$-geometries. This is in line with the physical intuition underlying
this paper. 
The sporadic group $G_4$ is also suspicious from the physical side since
it leads to absolutely incomplete
geometries.

\subsection{$\bigstar$-geometries: Stratifications} 
The Coulomb branch $\mathscr{C}$ of a special geometry carries two natural
stratifications: the \emph{$R$-stratification} \cite{inverse,char} and the \emph{$A$-stratification} \cite{char} which combine in the finer
$AR$-stratification.

\subsubsection{$R$-stratification} The $R$-strata are labelled by a subgroup $Z\subset U(1)_R$;
they are the domains in $\mathscr{C}$
where  $R$-symmetry $U(1)_R$ is broken exactly to $Z$. When the Coulomb dimensions $\Delta_i$
($i=1,\dots,r$) are integers (as in the $\bigstar$-geometries)
the open $R$-strata are associated to the divisors
$d\mid\mathrm{lcm}(\Delta_i)$ of the Coulomb branch dimensions
\be
\mathscr{C}^{(d)}=\big\{u_i\neq0\ \text{for }d\mid\Delta_i\ \text{and } u_j=0\ \text{for }d\nmid \Delta_j\big\}\subset\mathscr{C}.
\ee
In particular, the $R$-strata are connected and a union of $\C^\times$-orbits. 
The only closed $R$-stratum is the origin $\{0\}$ where the full $U(1)_R$ is unbroken.

For a $\bigstar$-geometry the $R$-stratification
 essentially reduces to the fundamental results of Springer theory for reflection groups \cite{ST2,spring1,spring2} as already discussed in  \cite{caorsi}.
 We sketch the story.
Suppose that the integer $d$ is \emph{regular} in the sense of Springer theory,\footnote{\ \textbf{Definition 11.21} of \cite{ST2}.}
then the fiber $\mathscr{X}_u$ over the generic point in the stratum $\mathscr{C}^{(d)}$
is smooth, i.e.\! a copy of $A$. Over a \emph{regular} $R$-stratum
 \be
 \overline{\mathscr{C}^{(d)}}=\big\{u_j=0\ \text{for }d\nmid \Delta_j\big\}\quad d\ \text{regular}
 \ee  
 we have a special geometry of rank $r(d)\equiv\dim\overline{\mathscr{C}^{(d)}}$.
 Indeed the automorphism $\exp(2\pi i\,\ce/d)$ of $\mathscr{X}$ fixes the fibers $\mathscr{X}_u\simeq A$
 over the points of $\overline{\mathscr{C}^{(d)}}$ \cite{char,inverse},
 and its action on $H_1(\mathscr{X}_u,\mathbb{Q})$ has the eigenvalue
 $e^{2\pi i/d}$ with multiplicity exactly $r(d)$. Hence we have an isogeny\footnote{\ Cf.\! \textbf{Theorem 13.6.3.} of \cite{cristine}.}
 \be
 X\times Y\to A
 \ee 
 where $X$ and $Y$ are Abelian varieties, with 
 \be
 \dim X=r(d)\quad\text{and}\quad 
 \Omega(\mathsf{Lie}\,X)=T^*\mathscr{C}^{(d)},
 \ee
  where $\Omega\colon T\mathscr{X}\to T^*\mathscr{X}$ is the symplectic form.
  Let $G_d\subset G$ (resp.\! $N_d\subset G$)
  the subgroup which fixes $\overline{\mathscr{C}^{(d)}}$ point-wise (resp.\! set-wise).
  $G_d\triangleleft N_d$, and the quotient group is a unitary reflection group acting irreducibly
  on $\mathsf{Lie}\,X$ by automorphisms of the fiber.\footnote{\ See \textbf{Theorems 11.15, 11.24, 11.38} of \cite{ST2}. Note that the action of $\exp(2\pi i\,\ce/d)$ on $X$ is induced by the action of an element of $G$.}
  Hence over a closed $R$-stratum $\overline{\mathscr{C}^{(d)}}$ with $d$
  a regular integer we have an induced rank-$r(d)$ $\bigstar$-geometry with
  unitary reflection group $N_d/G_d$ and model fiber isogeneous to $X$.
  When $d$ is not regular, the $R$-stratum is contained in the discriminant $\mathscr{D}$
  and the situation is best studied in terms of the $AR$-stratification.
  Vacua in an irregular $R$-stratum are physically intriguing:
  e.g.\! they are related to the subtler gapped vacua in \S.\,4 of \cite{donagi1}.

 \subsubsection{$A$-stratification.}
  We turn to $A$-stratification.
As before $V^\vee$ is the cover of the Coulomb branch on which $G$ acts through the
dual reflection representation. The branch locus of this cover is the set of
reflection hyperplanes of $G$ in $V^\vee$ which 
form a central hyperplane arrangement $\mathscr{A}\subset V^\vee\simeq\C^r$ \cite{Harrang}.
The intersections of families of hyperplanes in $\mathscr{A}$ are the \emph{flats} of $\mathscr{A}$
(the flat $V^\vee$ is the intersection of the empty family). The group $G$ permutes them.
The set $P(\mathscr{A})$
of all flats is a poset, in facts a \emph{geometric lattice}\footnote{\ See \textbf{Theorem 2.1} of \cite{Harrang}. Recall that a lattice is a poset such that the joint and the meet exists for all pairs of elements \cite{Harrang}. A geometric lattice has some additional properties, see \cite{Harrang}.} whose combinatorics is
conveniently represented by a Hasse diagram. A codimension-$s$
flat can be written (non uniquely!) as the intersection of $s$ reflection hyperplanes
$H_i\in\mathscr{A}$. Taking the image of $P(\mathscr{A})$
under the cover map $\varpi\colon V^\vee\to V^\vee/G\equiv\mathscr{C}$
we get
 a lattice of $\C^\times$-invariant submanifolds in the Coulomb branch.
The lattice
\be
\varpi P\equiv \{\varpi(\alpha)\}_{\alpha\in P(\mathscr{A})}
\ee
 gives the $A$-stratification of $\mathscr{C}$:
each $\varpi(\alpha)\subset\mathscr{C}$ is the \emph{closure} of an open $A$-stratum.
For each element of $\alpha\in P(\mathscr{A})$
let $G_\alpha\subset G$ be the subgroup which fixes
$\alpha\subset V^\vee$ pointwise. By  \textbf{Corollary 9.51} of \cite{ST2} this is the reflection group generated by the reflections in $G$ whose fixed hyperplanes
\emph{contain} $\alpha$.  Hence
\begin{fact} There is a one-to-one correspondence  between conjugacy classes of subgroups $S\subset G$ generated by a family of reflections in $G$ and Coulomb branch $A$-strata $\mathscr{C}_S\subset\mathscr{C}$
\be
[S]\leftrightarrow \mathscr{C}_S.
\ee
 {\rm The subgroup $S$ may be reducible. We think of the trivial group $1$ as the
reflection subgroup generated by an empty family of reflections.} If $S\subset G$ is a reflection subgroup,
the corresponding $A$-stratum is
\be
\mathscr{C}_S= \varpi V^\vee_S,\qquad V^\vee_S\overset{\rm def}{=}\big\{x\in V^\vee\ \text{such that } \{g\in G\; |\; gx=x\}=S\big\}.
\ee
All $A$-strata are connected and the union of $\C^\times$-orbits.
 Hence the only closed $A$-stratum is the origin
$\{0\}= \mathscr{C}_G$; all the other $A$-strata are open.
{\rm One has $\mathscr{C}_1=\mathscr{C}\setminus \mathscr{D}$.} In particular
\be
P(\mathscr{A})=\{V_S^\vee\colon S\subset G\, \text{a reflection subgroup}\}.
\ee 
\end{fact}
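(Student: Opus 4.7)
The plan is to establish the bijection by transporting the natural stratification of $V^\vee$ by pointwise stabilizer type down to $\mathscr{C}$ via the quotient map $\varpi$, and then to read off the listed corollaries. The crucial input is \textbf{Corollary 9.51} of \cite{ST2}, already invoked, which guarantees that every pointwise stabilizer $G_x = \{g\in G\colon gx=x\}$ is itself a reflection subgroup (generated by those reflections of $G$ whose hyperplanes pass through $x$). This means the stratification of $V^\vee$ by stabilizer type, whose strata are precisely the sets $V^\vee_S$ as defined, is indexed by reflection subgroups of $G$.

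First I would check that the map $[S]\mapsto \mathscr{C}_S:=\varpi V^\vee_S$ is well-defined and injective. Well-definedness is immediate: if $S'=gSg^{-1}$, left multiplication by $g$ sends $V^\vee_S$ bijectively onto $V^\vee_{S'}$, so both images under $\varpi$ coincide. For injectivity, if $\varpi V^\vee_S$ meets $\varpi V^\vee_{S'}$, choose $x\in V^\vee_S$ and $g\in G$ with $gx\in V^\vee_{S'}$; then $G_{gx}=gG_xg^{-1}=gSg^{-1}$, but by hypothesis $G_{gx}=S'$, so $S$ and $S'$ are conjugate. Surjectivity follows from the fact that every flat $\alpha\in P(\mathscr{A})$ carries a dense open subset consisting of points whose stabilizer is exactly $S_\alpha$ (the reflection group from Cor.~9.51), and conversely each $V^\vee_S$ is the complement, inside the linear subspace $\mathrm{Fix}(S)$, of the union of the finitely many smaller fixed subspaces $\mathrm{Fix}(S')$ for reflection subgroups $S'\supsetneq S$. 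This identification simultaneously proves the last assertion $P(\mathscr{A})=\{V^\vee_S\}$: the flats, read as linear subspaces, are exactly the $\mathrm{Fix}(S)$'s, and the stratification by stabilizer type is the open partition of each flat by which larger reflection subgroup fixes it.

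The topological claims are then short. Since $V^\vee_S$ is the complement, in the complex vector space $\mathrm{Fix}(S)$, of a finite union of proper linear subspaces, it is connected (in fact Zariski open in an affine space), and $\mathscr{C}_S=\varpi V^\vee_S$ is connected as a continuous image. The $\mathbb{C}^\times$-action on $V^\vee$ is by global scalars, which commutes with the linear action of $G$ and preserves stabilizers, so every $V^\vee_S$ (and hence every $\mathscr{C}_S$) is a union of $\mathbb{C}^\times$-orbits. Because $G$ acts irreducibly on $V^\vee$, the fixed locus $\mathrm{Fix}(G)=\{0\}$, giving $\mathscr{C}_G=\{0\}$; every other stratum $\mathscr{C}_S$, $S\neq G$, contains a $\mathbb{C}^\times$-orbit whose closure hits $0\notin\mathscr{C}_S$, so $\{0\}$ is the unique closed stratum. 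Finally, the identification $\mathscr{C}_1=\mathscr{C}\setminus\mathscr{D}$ follows because a fiber $\mathscr{X}_u$ is singular precisely when the $G$-orbit over $u$ has a nontrivial stabilizer, i.e.\ meets some reflection hyperplane.

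The only subtle step is the surjectivity/stratification identification: one must verify that as $S$ ranges over all reflection subgroups of $G$ (including reducible ones and the trivial group), the strata $V^\vee_S$ actually partition $V^\vee$, so that no flat is ``missed.'' This is exactly the content of \textbf{Corollary 9.51}: every pointwise stabilizer is a reflection subgroup, so the stabilizer assignment $x\mapsto G_x$ lands in the indexing set. Once this is in hand, the rest is bookkeeping about linear subspace arrangements and the quotient $V^\vee\to V^\vee/G=\mathscr{C}$.
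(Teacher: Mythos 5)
Your proof takes essentially the same route as the paper, whose entire argument is the paragraph preceding the Fact: the flats of $\mathscr{A}$ are the fixed loci of the pointwise stabilizers $G_\alpha$, \textbf{Corollary 9.51} of \cite{ST2} (Steinberg's theorem) identifies those stabilizers with the reflection subgroups generated by the reflections whose hyperplanes contain $\alpha$, and everything descends along $\varpi$ -- your write-up merely makes the injectivity, surjectivity and topological bookkeeping explicit. The one caveat you inherit from the paper is that the bijection really runs over \emph{parabolic} reflection subgroups, i.e.\ those of the form $G_x$: for a reflection subgroup $S$ that is not a pointwise stabilizer -- e.g.\ the Klein four-group generated by the two long-root reflections of $W(B_2)$, whose common fixed locus is $\{0\}$ while $G_0$ is the whole group -- one has $V^\vee_S=\emptyset$, so your description of $V^\vee_S$ as the complement of finitely many \emph{proper} subspaces of $\mathrm{Fix}(S)$ (hence nonempty and connected) silently assumes $S=G_{\mathrm{Fix}(S)}$.
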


The subgroup $S$ fixes the subspace $V_S^\vee$ \emph{pointwise}.
Let $J_S\subset G$ be the subgroup which fixes $V_S^\vee$ \emph{as a set}.
Clearly $S\triangleleft J_S$ and we set $G_S=J_S/S$.

Fix a reflection subgroup $S\subset G$. 
$S$ acts rationally on $H_1(A,\mathbb{Q})$. We decompose this action in the maximal trivial subrepresentation
and the complementary one. The projector on the maximal trivial subrepresentation
\be
P_S=\frac{1}{|S|}\sum_{g\in S} g\colon \begin{cases} V\to V\\
V^\vee\to V^\vee\\
 H_1(A,\mathbb{Q})\to H_1(A,\mathbb{Q}) \end{cases}
\ee
is a $J_S$-map defined over $\mathbb{Q}$, that is, $P_S$ and $1-P_S$
are orthogonal idempotents contained in the center of $\mathbb{Q}[J_S]$.
 The multiplicity of the trivial representation in $H_1(A,\mathbb{Q})$ is twice
the complex dimension of $V^\vee_S\subset V^\vee$ as a vector subspace (since the rational representation is isomorphic over $\C$ to $V\oplus V^\vee$). 
The \emph{essential action} of $S$ is the non-trivial action on $(1-P_S)V$ as a reflection group
of degree 
\be
\mathrm{codim}(\mathscr{C}_S)\equiv r-\dim\mathscr{C}_S
\ee
The essential action of $S$
 controls the dynamics of the
degrees of freedom which become massless
 as we approach a point of $\mathscr{C}_S$. 
  
Using \textbf{Theorem 13.6.3.} of \cite{cristine} we get 
a $J_S$-equivariant isogeny
\be\label{crist}
A_S\times B_S\to A
\ee 
 where $A_S$ is an Abelian variety of dimension $\dim\mathscr{C}_S$ on which $S$ acts trivially.
Eq.\eqref{crist} is not an isomorphism in general. We replace $A_S$ by its
quotient $\tilde A_S\equiv A_S/K_S$ where $K_S$ is the kernel of the $G_S$-equivariant map
\be
A_S\times\{0\}\to A.
\ee
$\tilde A_S$ is an Abelian subvariety of $A_S$. We get the inclusion
\be
\tilde A_S\times V^\vee_S\subset A\times V^\vee.
\ee
The restriction of $\Omega$ to $\tilde A_S\times V^\vee_S$ is still a 
 symplectic form (away from higher codimension singularities) and the smooth fibers
 of $\tilde A_S\times V^\vee_S\to V^\vee_S$ are Lagrangian polarized Abelian varieties.
 Let $\tilde G_S$ be the image of $G_S$ in $\mathsf{Aut}(\tilde A_S)$
 ($\tilde G_S$ naturally acts on $V^\vee_S\simeq \mathsf{Lie}(\tilde A_S)^\vee$).
  Hence we have the \emph{isotrivial geometry}
 \be\label{uyazq}
 (\tilde A_S\times V^\vee_S)/\tilde G_S\to V_S^\vee/\tilde G_S,
 \ee  
with Coulomb branch $V_S^\vee/\tilde G_S$. We stress that, in general, this Coulomb branch
is \emph{not} isomorphic to the closure $\overline{\mathscr{C}_S}$ of the $S$-stratum, but merely a finite cover of its normalization.
The two spaces are isomorphic when  
 $G_S$ is a (typically reducible) unitary reflection group acting on $V_S^\vee$;
 this happens when, say, $G=G(m,1,r)$.
In any case to each closed $A$-stratum there is associated 
a, possibly reducible, $\bigstar$-geometry over a slightly different complex manifold. While $\overline{\mathscr{C}_S}$ is often non-smooth, its avatar $V_S^\vee/\tilde G_S$ is expected to be regular. 

\begin{exe} $G=G(m,1,r)$ contains two conjugacy classes of cyclic reflection groups of orders $m$ and $2$.
The first one $R_1$ is generated by the diagonal matrix $\mathrm{diag}(e^{2\pi i/m},1,\dots,1)$
and the second one $R_2$ by the permutation $w_1\leftrightarrow w_2$.
One has $\tilde G_{R_1}=G(m,1,r-1)$ and  $\tilde G_{R_2}=\Z_m\times G(m,1,r-2)$,
both reflection groups of rank $(r-1)$. Then $V^\vee_{R_a}/\tilde G_{R_a}\simeq\C^{r-1}$
and the base of the $\bigstar$-geometry along the codimension-1 strata is smooth.
Note that in the second instance the geometry is a product of a rank-1 geometry and a rank-$(r-2)$ one.
\end{exe}

\begin{exe} $G=W(A_2)$. The discriminant contains a single irreducible component which is the cuspidal 
cubic $u_2^2-u_1^3=0$ associated to the unique conjugacy class of reflection subgroups $R$. In this case $\tilde G_R=1$ and $V_R^\vee/G_R\simeq\C$ which is the normalization of the singular cubic.
\end{exe}
 
When, as in the last example, $G$ is defined over $\mathbb{Q}$
 the special geometries of \emph{all positive-dimension strata} have a
 conformal manifold of positive dimension, that is,
 a Lagrangian $\bigstar$-geometry is stratified in Lagrangian $\bigstar$-geometries.
 
 \subsubsection{The central fiber}\label{s:cfiber}
 
 The codimension-$r$ stratum is the origin $\{0\}$ of $\mathscr{C}$.
 The fiber is the normal projective variety
 \be
 A/G
 \ee
 where $A\equiv \C^r/\Lambda$ is the model fiber. From the criterion formulated
 at the end of \S.\,\ref{s:compact} we see that our $\bigstar$-geometry may be crepant
 only if -- possibly after replacing $A$ by an isogenous variety --
 the central fiber is smooth, hence isomorphic to $\mathbb{P}^r$. Now
 \be
 A/G=(\C^r/\Lambda)/G= \C^r/(\Lambda\rtimes G)\equiv \C^r/\cw\qquad \text{where }\ \cw\equiv \Lambda\rtimes G.
 \ee
 Then we may rephrase the natural questions in the Introduction in the more technical form
 \begin{que}
 What is the \underline{physical} mechanism which forces the (normalization of the unresolved)
 \emph{central fiber} $A/G$
 of the special geometry to be non-smooth?
 \end{que} 
 
 This question will keep us busy in \S.\,\ref{s:cartoon}.
 
 \medskip
 
 The topological argument of \cite{shva} says that the normalization of $\C^r/\cw$ may
 be simply-connected
only if the group $\cw\equiv \Lambda\rtimes G$ is generated by reflections, i.e.\!
 by affine maps $\C^r\to\C^r$ which leave fixed a hyperplane.
 We shall elaborate mathematically on this issue  in \S.\,\ref{s:strong};
 for the moment we focus on the physics of the issue which eventually
 will clarify what the abstract math theorems really mean.
 
 \subsection{IR physics of $A$-stratification: \emph{Root} lattices}\label{s:root}
 
At each stratum $\mathscr{C}_S$ of the $A$-stratification we have 
\emph{two} lower-rank special geometries \cite{inverse}.
 The first one is defined over the avatar of the closed stratum $\overline{\mathscr{C}_S}$, and is a \emph{bona fide}
 isotrivial special geometry, see eq.\eqref{uyazq},
  while the second one
 is defined over an infinitesimal tubular neighborhood $U_S$ of $\mathscr{C}_S$
 and makes sense only asymptotically as we approach $\mathscr{C}_S$ from the normal direction.
 This asymptotic geometry describes the effective $\cn=2$ SCFT which 
 governs the dynamics of the degrees of freedom which become light at $\mathscr{C}_S$.
In the RG language, this is the Seiberg-Witten geometry of the $\cn=2$ QFT obtained by
integrating out all degrees of freedom which remain massive along the stratum
$\mathscr{C}_S$. This geometric description
 becomes exact when the separation between heavy and light
modes gets parametrically large, i.e.\! in the asymptotic limit.
\medskip 
 
It is convenient to work in the branched cover $V^\vee\simeq\C^r$ of the Coulomb branch;
$V^\vee$  parametrizes SUSY vacua with the proviso that
 two points in the same $G$-orbit represent the same physical
vacuum. The cover $\varpi\colon V^\vee\to \mathscr{C}$ is branched
at the hyperplane arrangement $\mathscr{A}\subset V^\vee$. 
%

In a generic point of $V^\vee$ only $r$ IR-free photons
are light and the IR theory is free. 
Let us now approach a codimension-1 stratum, i.e.\! a generic point $x$
in the discriminant. Its closure is the $\varpi$-image
of the fixed hyperplane $H_\ell\subset V^\vee$ of the cyclic group of reflections of $G$ generated by a primitive reflection $g_\ell$.
We write $H^\vee_\ell\subset V$ for the fixed hyperplane of $g_\ell$
in the dual space $V$, and $\ell$ for the complex line orthogonal to $H_\ell^\vee$.
We call $\ell$ the \emph{root line} of the reflections at $H_\ell$. $\cl$ is the
set of all root lines of $G$.
Locally, at a generic point of $H_\ell$, the infinitesimal tubular neighborhood looks like
\be
U_\ell\simeq H_\ell\times D_\epsilon,
\ee
 where $D_\epsilon$ is a disk of radius $\epsilon$ and $U_\ell\cap H_\ell\simeq H_\ell\times\{0\}$. 
Locally in $U_\ell$ the cover $\varpi\colon U_\ell\to \varpi U_\ell$ is modeled on the branched cover
\be
H_\ell\times D_\epsilon\to H_\ell\times D_{\epsilon^m},\qquad (x,z)\mapsto (x,z^m)
\ee
where $m\in\{2,3,4,6\}$ is the order of the cyclic group of reflections at $H_\ell$,
equivalently of the local monodromy in the punctured
disk $\{x\}\times (D_{\epsilon^m}\setminus\{0\})\subset\mathscr{C}$.

Besides the everywhere massless photons,
the states which are light in $U_\ell$ are the BPS particles with electromagnetic charges
in the rank-2 symplectic sublattice
\be
\Lambda_\ell =\Lambda\cap \ell\subset \Lambda.
\ee  
Over $U_\ell$ we have the trivial family  $A\times U_\ell$ of Abelian varieties; 
fix a primitive element $\sigma$ in $\Lambda_\ell\subset H_1(A,\Z)$ and set
\be
a\equiv a(x)=\int_{\{x\}\times\sigma} \iota_\ce \Omega, \quad x\in U_\ell.
\ee
$a$ is a good  local coordinate in $U_\ell$ in the direction transverse to $H_\ell$ because $\Omega$ is symplectic
($a$ is a vanishing period of the SW differential, i.e.\! a special coordinate on the local cover $U_\ell$ centered at $H_\ell$). 
Then $a^m$ is a global coordinate
in $\varpi U_\ell\subset\mathscr{C}$ vanishing along the discriminant component $\varpi H_\ell$. We conclude that the asymptotic transverse special geometry at a generic point
$\varpi(x)\in\varpi H_\ell$ is a rank-1 $\bigstar$-geometry with charge lattice $\Lambda_\ell$
and Coulomb dimension $\Delta=m$. The light degrees of freedom along the discriminant component $\varpi H_\ell$
are  governed asymptotically  by some effective SCFT that we denote as $\ct_\ell$.
The codimension-$1$ transverse theories $\ct_\ell$ are described by the above asymptotic
special geometry modelled on the one for the rank-1 MN SCFT with the same
dimension $m$
(however there are finitely many \emph{distinct} rank-1 SCFT with the same geometry \cite{M1,M2,M3,M4,M5,caorsi2,caorsi3},
so, physically speaking, we still have some ambiguity in the asymptotic dynamics of these light modes).  
In particular all BPS states (operators) of $\ct_\ell$ correspond to a cycle in $H_1(A,\Z)\equiv \pi_1(A)$
which shrinks to zero as $a(x)\to0$.

\begin{figure}
$$
\includegraphics[width=0.35\textwidth]{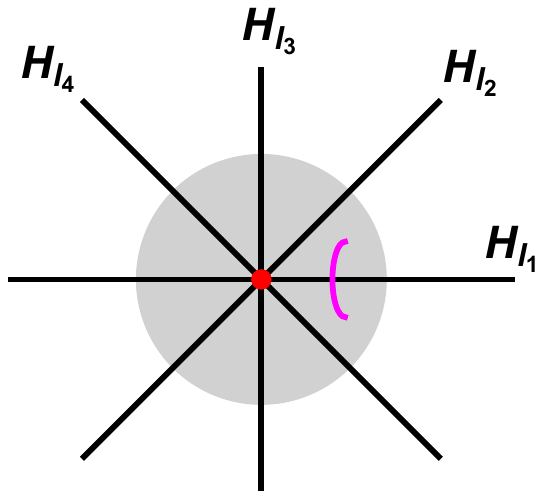}\quad\qquad\includegraphics[width=0.35\textwidth]{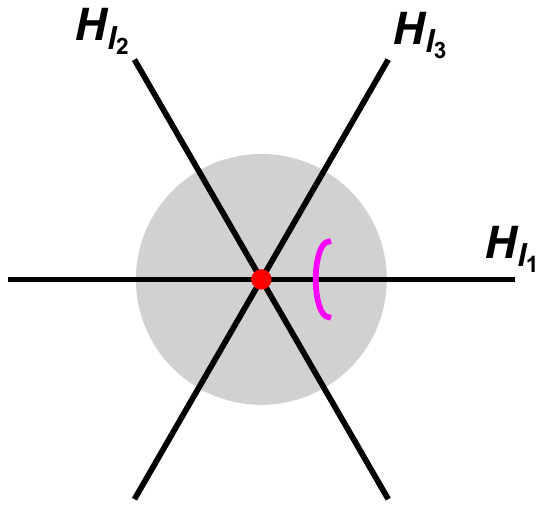}
$$
\caption{\label{fig}Two examples of covering stratification of $V^\vee$ in rank-2
SCFTs. The left figure corresponds to $SO(5)$ SYM with one vector and four spinor hypermultiplets,
while the right one to $\cn=4$ SYM with $\cg=SU(3)$. The black half-lines are the pre-images of the codimension-1 strata and the red point is the pre-image of the codimension-2 stratum. The gray disk represents an infinitesimal neighborhood of the codimension-2 stratum. The local monodromy at $H_{\ell_1}$ is computed along the magenta path which connects one point in the gray region to its
image under the reflection at $H_{\ell_1}$. This path projects to a closed loop in the Coulomb branch
which goes around the corresponding component of the discriminant.}
\end{figure}

\medskip

Consider next the pre-image in $V^\vee$ of a codimension-2
 stratum. It correspond to a locus $H_{\{\ell_1,\dots,\ell_s\}}$ which is the pairwise intersection of some $s$-tuple of fixed planes $H_{\ell_j}$ ($j=1,\dots,s$): 
see figure \ref{fig} for two typical examples in rank-2. Let 
$U_{\{\ell_1,\dots,\ell_s\}}$ be an infinitesimal tubular neighborhood of this codimension-2 locus (gray area in the figure). What is the IR physics at a generic vacuum in $U_{\{\ell_1,\dots,\ell_s\}}$? Clearly the states with
charges in the sublattices $\Lambda_{\ell_j}$ ($j=1,\dots,s$) are light in $U_{\{\ell_1,\dots,\ell_s\}}$,
 and, more generally, are light all states with charges in the sublattice 
\be
\Lambda^0_{\{\ell_1,\dots,\ell_s\}}\overset{\rm def}{=}\Lambda_{\ell_1}+\Lambda_{\ell_2}+\cdots+\Lambda_{\ell_s}\subset\Lambda
\ee
which has rank 4. The light degrees of freedom of the several SCFTs $\ct_{\ell_j}$
along the codimension-$1$ strata $H_{\ell_j}$
which meet at $U_{\{\ell_1,\dots,\ell_s\}}$
 are light in this codimension-2 locus and they interact with each other\footnote{\ However we stress that these degrees of freedom are \emph{not} all independent. When the simplest scenario applies, the degrees of freedom which are light on two distinct hyperplanes suffice to describe the physics at the intersection, that is, in the simplest scenario the states which are light along the other hyperplanes in the $s$-tuple are bound states of the ones which are light in the first two hyperplanes.} and produce bound states with charges in the lattice $\Lambda^0_{\{\ell_1,\dots,\ell_s\}}$.
In the simplest dynamical scenario this
is the full IR story at the codimension 2 locus. This cheap scenario is certainly a possibility: it happens,
say, for the SCFT represented in the left part of figure \ref{fig}.
But more intricate IR phenomena may also occur. This happens, say, for the SCFT
in the right part of the same figure which corresponds to $\cn=4$ SYM with gauge group the simply-connected form of $SU(3)$.

In the general case the IR effective theory
in $U_{\{\ell_1,\dots,\ell_s\}}$ is described by some sort of asymptotic $\bigstar$-SCFT
$\ct_{\{\ell_1,\dots,\ell_s\}}$ with reflection group
$G_{\{\ell_1,\dots,\ell_s\}}=\langle g_{\ell_1},\dots,g_{\ell_s}\rangle$ and charge lattice 
$\Lambda_{\{\ell_1,\dots,\ell_s\}}$. We know that 
\be
\Lambda^0_{\{\ell_1,\dots,\ell_s\}}\subset 
\Lambda_{\{\ell_1,\dots,\ell_s\}}
\ee
but the inclusion may be proper: when this is the 
case, $\ct_{\{\ell_1,\dots,\ell_s\}}$ contains light states with
charges in $\Lambda_{\{\ell_1,\dots,\ell_s\}}\setminus \Lambda^0_{\{\ell_1,\dots,\ell_s\}}$ which cannot arise as bound states of $\ct_{\ell_j}$ particles. More on the point, when
$\Lambda^0_{\{\ell_1,\dots,\ell_s\}}\neq 
\Lambda_{\{\ell_1,\dots,\ell_s\}}$
 the algebra of line operators of the effective theory $\ct_{\{\ell_1,\dots,\ell_s\}}$ is not fully generated by the line operators of the several theories $\ct_{\ell_j}$. $\ct_{\{\ell_1,\dots,\ell_s\}}$ contains additional 
 line operators with ``fractional''  charges valued in the finite group $\Lambda_{\{\ell_1,\dots,\ell_s\}}/\Lambda^0_{\{\ell_1,\dots,\ell_s\}}$. 

As an example, consider the right part of fig.\! \ref{fig}. By tuning the modulus of the fiber $A$,
we may take the low-energy effective theory in the gray region to be weakly-coupled $\cn=4$ SYM
with $\cg=SU(3)$, and the $SU(3)$ 't Hooft operator is part of its effective description;
however there is no way of writing this operator in terms of the light line operators of the effective
weakly-coupled $\cn=4$ $SU(2)$ theories which live along the discriminant components.
Thus the effective operator algebra at the crossing locus is not fully generated by the effective algebras along the several discriminant components;
 however all elements of the algebra
at the intersection has a power which belongs to the small algebra of lines with charges in $\Lambda^0_{\{\ell_1,\dots,\ell_s\}}$. 

\medskip

As we shall see momentarily, the
dycothomy between these two RG scenarios is reflected in the geometry as follows:
in the model in the left part of fig.\! \ref{fig} the fiber $\mathscr{X}_0$ over the origin is simply-connected,
while in the second model in the right figure it is not: one still has $H_1(\mathscr{X}_0,\R)=0$,
but now $H_1(\mathscr{X}_0,\Z)\simeq\Z_3$. The torsion classes may be thought of associated to the extra ``fractional'' light 't Hooft lines.
The homology group $H_1(\mathscr{X}_0,\Z)$ is torsion since a power of the 't Hooft line is part
of the small algebra.  
\medskip

However the charges which are ``new''  in codimension-2
\be
\Lambda_{\{\ell_1,\dots,\ell_s\}}^\text{new}\equiv\Lambda_{\{\ell_1,\dots,\ell_s\}}\setminus \Lambda^0_{\{\ell_1,\dots,\ell_s\}}
\ee  
cannot be arbitrary: the actual charge lattice 
$\Lambda_{\{\ell_1,\dots,\ell_s\}}$ should be consistent with Dirac quantization.
There are two issues: first the new charges should have integral Dirac pairing with the old ones,
and, second, the new charges should have integral pairing between themselves.

The first condition is conveniently restated as follows.
Let $R_{\ell_j}$ be the $2r\times 2r$ integral matrix which represents  on $H_1(A,\Z)$ the local monodromy around
$H_{\ell_j}$. One must have 
\be
\lambda\in\Lambda_{\{\ell_1,\dots,\ell_s\}}\quad\Rightarrow\quad(1-R_{\ell_j})\lambda\in\Lambda_{\ell_j}
\ \ \text{for all }j=1,\dots,s,\ee
or more invariantly, 
\be
\Lambda_{\{\ell_1,\dots,\ell_s\}}\subset \Lambda_{\{\ell_1,\dots,\ell_s\}}^*
\ee
where
\be\label{eeerzx}
\Lambda_{\{\ell_1,\dots,\ell_s\}}^*
\overset{\rm def}{=} \big\{\lambda\in \Lambda^0_{\{\ell_1,\dots,\ell_s\}}\otimes\mathbb{Q}\colon
 (1-g)\lambda\in \Lambda^0_{\{\ell_1,\dots,\ell_s\}}\ \forall\; g\in G_{\{\ell_1,\dots,\ell_s\}}\big\}. 
\ee

This story may be reloaded recursively in higher codimension.
Summarizing: the simplest RG scenario is that all states which are light in higher codimension may be thought of as bound states of BPS particles which are light at some discriminant component, that is, the total rank-$2r$ charge lattice $\Lambda$ has the form
\be\label{r1}
\Lambda=\Lambda^0\overset{\rm def}{=} \sum_{\ell\in\cl} \Lambda_\ell\equiv\sum_{\ell\in\cl}(\Lambda\cap \ell)
\ee
where the sum is over the root lines of $G$. 
While this simple scenario does occur, most $\bigstar$-SCFTs
have a more intricate dynamics and $\Lambda\neq \Lambda^0$.
However \emph{geometrically} we may always replace our $\bigstar$-geometry with an isogeneous
one where the simplest scenario holds
by making $\Lambda\leadsto \Lambda^0$.
\emph{Physically} this looks like truncating by hand the theory to a subset of its superselected sectors, and the simplest-scenario geometry may not be a complete description of any quantum consistent field theory.
However, it is technically useful to focus on such a subsector, study it in detail, and then extend
the analysis to the other sectors if present.

$\Lambda^0$ is a $G$-invariant lattice of rank $2r$.
Let $\{g_{\ell_1},\dots,g_{\ell_n}\}$ be a set of reflections which generate $G$ ($n=r$ or $r+1$ depending on $G$ \cite{ST2});
 we have \cite{popov}
\be\label{r2}
\Lambda^0=\sum_{j=1}^n \Lambda_{\ell_j}.
\ee

\begin{defn}
A rank-$2r$ $G$-invariant lattice of the form \eqref{r1} (equivalently \eqref{r2})
is called a \emph{root} lattice  \cite{popov}. 
\end{defn}

\subsubsection{Classification of $G$-invariant lattices}

The above physical discussion shows that all $G$-invariant lattices  $\Lambda$
have a root sublattice of finite-order  (just forget a few line operators!).
This entails that
the classification of the $G$-invariant lattices is as follows:

\begin{thm}[Popov, \textbf{Theorem 4.2.3} of \cite{popov}] Let $G$ be an irreducible unitary reflection group defined over $\mathbb{Q}$
or over an imaginary quadratic field. A (full) lattice $\Lambda\subset\C^r$ is $G$-invariant if and only if
it has the form (cf.\! eq.\eqref{eeerzx})
\be
\Lambda_\text{\rm root}\subseteq \Lambda \subseteq \Lambda^*_\text{\rm root}
\ee
for some root lattice $\Lambda_\text{\rm root}$, that is, all $G$-invariant lattice $\Lambda$
is the extension by a root lattice $\Lambda_\text{\rm root}$
\be
0\to \Lambda_\text{\rm root}\to\Lambda\to L\to 0
\ee
of a subgroup $L$ of the finite Abelian group
\be\label{uttt6z}
\Lambda^*_\text{\rm root}/\Lambda_\text{\rm root},
\ee
where
\be
\Lambda^*_\text{\rm root}\;\overset{\rm def}{=}\;\big\{\lambda\in \Lambda_\text{\rm root}\otimes\mathbb{Q}\colon
 (1-g)\lambda\in \Lambda_\text{\rm root}\ \forall\; g\in G\big\}.
\ee
If $\Lambda$ is a $G$-invariant lattice, the corresponding root sub-lattice
is
\be\label{assrot}
\Lambda_\text{\rm root}\equiv \Lambda^0\;\overset{\rm def}{=}\; \sum_{\ell\in\cl}(\Lambda\cap \ell).
\ee
\end{thm}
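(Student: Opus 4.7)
The plan is to split the equivalence into three claims and dispatch each by a short argument: (a) for every $G$-invariant $\Lambda$, the subgroup $\Lambda^0 := \sum_{\ell\in\mathcal{L}}(\Lambda\cap\ell)$ is $G$-invariant and of full rank $2r$, hence a root lattice of finite index in $\Lambda$; (b) every such $\Lambda$ satisfies $\Lambda \subseteq (\Lambda^0)^*$; (c) any lattice $\Lambda'$ sandwiched between a root lattice $\Lambda_{\text{root}}$ and its dual $\Lambda_{\text{root}}^*$ is $G$-invariant. Parts (a) and (b) give the ``only if'' direction together with the identification \eqref{assrot}, and (c) gives the ``if'' direction.

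For (a), $G$-invariance of $\Lambda^0$ is formal: $G$ permutes the root lines and stabilises $\Lambda$, so $g(\Lambda\cap\ell) = \Lambda\cap g\ell$, and the sum over $\ell$ is therefore $G$-stable. For the full-rank statement, the real span $W$ of the root lines is a $G$-invariant real subspace of $V$; since the complex reflection representation is irreducible and $G$ is actually generated by reflections (so $\mathcal{L}\neq\emptyset$), $W$ must equal the whole real vector space underlying $V$. Moreover, each intersection $\Lambda\cap\ell$ is a rank-$2$ sublattice of the real $2$-plane $\ell$: the cyclic reflection subgroup $\langle g_\ell\rangle$ preserves both $\Lambda$ and $\ell$ and acts on $\ell$ by a primitive $m$-th root of unity with $m\geq 2$, which forces $\Lambda\cap\ell$ to generate $\ell$ over $\mathbb{R}$. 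Hence $\Lambda^0$ spans $V$ over $\mathbb{R}$ and has rank $2r$.

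For (b), the computational heart is $\mathrm{im}(1-g_\ell) = \ell$ for each reflection $g_\ell$, which immediately gives $(1-g_\ell)\lambda \in \ell\cap\Lambda = \Lambda_\ell \subseteq \Lambda^0$ for every $\lambda\in\Lambda$. A routine induction via the telescoping identity $1-gh = (1-g) + g(1-h)$, combined with the $G$-invariance of $\Lambda^0$ from (a), extends this to arbitrary $g\in G$: writing $g$ as a word in reflections, $(1-g)\lambda\in\Lambda^0$, which is exactly the condition $\lambda\in (\Lambda^0)^*$. For (c), given $\Lambda_{\text{root}}\subseteq\Lambda'\subseteq\Lambda_{\text{root}}^*$, $\lambda\in\Lambda'$ and $g\in G$, one simply writes $g\lambda = \lambda - (1-g)\lambda$; the second term lies in $\Lambda_{\text{root}}\subseteq\Lambda'$ by the defining property of $\Lambda_{\text{root}}^*$, so $g\lambda\in\Lambda'$.

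The main technical obstacle is the full-rank assertion in (a): the $G$-invariance part and both sandwich inclusions reduce to straightforward manipulations with the reflection formula $1-gh=(1-g)+g(1-h)$, but showing that the root lines' real span covers all of $V$ uses the irreducibility of the reflection representation in an essential way. Similarly, verifying that each $\Lambda\cap\ell$ has full rank $2$ relies on the running hypothesis that $\mathbb{K}_G$ is $\mathbb{Q}$ or imaginary quadratic, so that the roots of unity implementing the reflection $g_\ell$ lie in the ring of integers $\mathbb{Z}[\lambda_G]$ that actually acts on $\Lambda$; outside this hypothesis no $G$-stable $\Lambda$ exists in the first place, so the theorem is vacuous.
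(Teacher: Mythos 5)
Your argument is a direct, self-contained algebraic proof, whereas the paper does not actually prove this statement: it quotes Popov's theorem and motivates it by the RG discussion of \S.\,\ref{s:root} (BPS states that become light along codimension-one strata generate $\Lambda^0$, and Dirac/monodromy consistency forces $\Lambda\subseteq(\Lambda^0)^*$). Your parts (b) and (c) are correct and constitute the standard algebraic skeleton: the identity $1-gh=(1-g)+g(1-h)$ together with the $G$-invariance of $\Lambda^0$ gives $\Lambda\subseteq(\Lambda^0)^*$, and $g\lambda=\lambda-(1-g)\lambda$ gives the converse inclusion. You should also record the one-line check that $\Lambda^0$ is itself a \emph{root} lattice in the paper's sense, i.e.\! that $\Lambda^0\cap\ell=\Lambda\cap\ell$ for every root line $\ell$, so that $\sum_{\ell}(\Lambda^0\cap\ell)=\Lambda^0$; without this the sandwich is not with respect to a root lattice.

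The genuine gap is in (a), exactly at the point you flag as the technical heart. The claim that each $\Lambda\cap\ell$ has rank $2$ is not established by the argument you give. First, nothing you say rules out $\Lambda\cap\ell=0$: a complex line in $\C^r$ need not meet a full lattice away from the origin, and the existence of the cyclic group $\langle g_\ell\rangle$ acting on $\ell$ does not by itself produce a nonzero lattice vector in $\ell$. Second, even granting a nonzero $v\in\Lambda\cap\ell$, the eigenvalue argument fails for every real reflection: for all Weyl groups the order is $m=2$ and $g_\ell v=-v$, which is not $\R$-linearly independent of $v$, so you never obtain a second generator of the real $2$-plane $\ell$. The fix is the observation you already deploy in part (b): $(1-g_\ell)\Lambda\subseteq\Lambda\cap\ell$, since $\mathrm{im}(1-g_\ell)=\ell$ and $(1-g_\ell)\Lambda\subseteq\Lambda$ by invariance. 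This image is a subgroup of $\Lambda$, hence discrete, and it spans $\ell$ over $\R$ because $\Lambda$ spans $V$ over $\R$ and $(1-g_\ell)$ maps $V$ onto $\ell$; a discrete subgroup of $\ell\simeq\R^2$ spanning it over $\R$ has rank $2$. With this substitution, $\Lambda^0$ is a full-rank $G$-invariant root sublattice of finite index and the rest of your proof goes through; the hypothesis that $\mathbb{K}_G$ is $\mathbb{Q}$ or imaginary quadratic is then needed only, as you note, to guarantee that $G$-invariant full lattices exist at all.
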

The construction of all $G$-invariant lattices proceed in two steps: first list down the
root lattices for each unitary reflection group $G$. This step is already done: the root lattices for all $G$'s are listed in tables 2, 3 of \cite{popov}. The second step is totally elementary: for each root lattice
$\Lambda_\text{root}$ one constructs the group \eqref{uttt6z} and its subgroups.

\begin{rem}\label{iuuuzqw} A root lattice may be a proper sublattice of another root lattice for the same $G$.
This happens, say, for $G(4,2,r)$, see table 2 in \cite{popov}. The same phenomenon takes place for
$G(3,1,r)$ and $G(4,1,r)$.
\end{rem}

This simple two-step algorithm produces all $G$-invariant rank-$2r$ lattices.
However to get the lattices which may be charge lattices
of a $\bigstar$-geometry we need a third step:
we have to require that the Dirac pairing is integral in the resulting lattice $\Lambda$
(where $\Lambda_\text{root}\subseteq \Lambda\subseteq\Lambda_\text{root}^*$).
The Dirac pairing induces a skew-symmetric bilinear form
\be\label{weil}
\Lambda^*_\text{root}/\Lambda_\text{root} \wedge \Lambda^*_\text{root}/\Lambda_\text{root} \to \mathbb{Q}/\Z
\ee
(called the \emph{Weil pairing} \cite{lange})
and we must require that the subgroup $L\subset \Lambda_\text{root}^*/\Lambda_\text{root}$ is \emph{isotropic} for this bilinear form.
Of course, any subgroup is isotropic for \emph{some} Weil pairing, e.g.\! for the trivial one. So we can always fix the problem with the Dirac pairing by changing the polarization (except when a specific pairing
is dictated by physical considerations).

\paragraph{Root Abelian varieties \& $\bigstar$-geometries}
\begin{defn} A \emph{root Abelian variety} $A_\text{root}$ for the (irreducible) unitary reflection group $G$
is a complex torus of the form $\C^r/\Lambda_\text{root}$,
where $\Lambda_\text{root}$ is a 
root $G$-lattice equipped with the unique minimal $G$-invariant polarization.
\end{defn}

\begin{defn} A \emph{root $\bigstar$-geometry} $\mathscr{X}_\text{root}$ is a $\bigstar$-geometry, with unitary reflection group $G$, whose model fiber $\C^r/\Lambda_\text{root}$ is a root  $G$-invariant Abelian variety.
\end{defn}

The root $\bigstar$-geometries precisely correspond to
putative $\cn=2$ $\bigstar$-SCFTs whose IR physics follows the simplest RG scenario,
that is, they have no new light charged particles, nor new light effective line operators, emerging in higher codimension. 

\medskip

The model fiber of any $\bigstar$-geometry may be written as $A_\text{root}/L$
for some root variety $A_\text{root}$ and subgroup $L\subset A_\text{root}^*/A_\text{root}$. Then 
\begin{fact}
All $\bigstar$-geometry is isogeneous to a root $\bigstar$-geometry. 
\end{fact}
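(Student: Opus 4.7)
The plan is to produce an explicit isogeny out of the root sublattice supplied by Popov's theorem, and then check that it descends to the total spaces $\mathscr{X}$ and induces all four defining properties of an isogeny of special geometries.

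Let $\mathscr{X}=(A\times V^\vee)/G_\text{diag}$ be an arbitrary $\bigstar$-geometry with model fiber $A=\C^r/\Lambda$. First I would form the root sublattice $\Lambda_\text{root}\equiv\Lambda^0=\sum_{\ell\in\cl}(\Lambda\cap\ell)$ as in eq.\eqref{assrot}. This lattice is $G$-invariant: the set $\cl$ of root lines of $G$ is permuted by $G$, and $\Lambda$ is $G$-invariant, so each summand is permuted among the others. By Popov's theorem the quotient $L:=\Lambda/\Lambda_\text{root}$ is a finite abelian group, and $\Lambda_\text{root}$ is by construction a root $G$-lattice in the sense of the text. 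Let $A_\text{root}:=\C^r/\Lambda_\text{root}$ equipped with the unique minimal $G$-invariant polarization; then $A_\text{root}$ is a root Abelian variety in the sense of the preceding definition.

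Second, the inclusion $\Lambda_\text{root}\hookrightarrow \Lambda$ induces a $G$-equivariant surjection of complex tori
\[
\varphi\colon A_\text{root}\twoheadrightarrow A,\qquad \ker\varphi\simeq L,
\]
i.e.\! an isogeny. Taking the product with the identity on $V^\vee$ gives a $G$-equivariant map $\varphi\times \mathrm{id}\colon A_\text{root}\times V^\vee\to A\times V^\vee$ which descends, after quotienting by the diagonal $G$-action, to
\[
\alpha\colon \mathscr{X}_\text{root}:=(A_\text{root}\times V^\vee)/G_\text{diag}\longrightarrow \mathscr{X}.
\]

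Third, I would verify the four axioms of an isogeny of special geometries from \S.\,\ref{s:prelimimaries}:
\textit{(i)} $\alpha$ commutes with the projections, because both cover the identity on the Coulomb branch $\mathscr{C}=V^\vee/G$;
\textit{(ii)} $\alpha$ maps the zero section to the zero section, because $\varphi(0)=0$ and this is preserved by the quotients;
\textit{(iii)} the symplectic form is preserved, because on both sides $\Omega$ is induced from the single $G$-invariant $2$-form $dw^i\wedge dx_i$ on $V\oplus V^\vee$, and the lift $\varphi\times\mathrm{id}$ is the identity on this covering space;
\textit{(iv)} the restriction of $\alpha$ to each smooth fiber is precisely $\varphi\colon A_\text{root}\to A$, an isogeny of polarized Abelian varieties (the polarization on $A$ pulls back to a positive integer multiple of the minimal $G$-invariant polarization on $A_\text{root}$, since that polarization is unique up to scale).

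The main conceptual content is concentrated in Popov's classification, which we are allowed to assume. The only non-trivial point to check is the polarization compatibility in \textit{(iv)}: this is where one uses the uniqueness, up to positive rescaling, of the $G$-invariant Hermitian form on the irreducible reflection representation, guaranteeing that $\varphi^*$ of the polarization on $A$ is commensurable with the minimal one chosen on $A_\text{root}$. Everything else is a routine descent argument, so no serious obstacle is expected.
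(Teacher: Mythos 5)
Your proof is correct and follows essentially the same route as the paper: the paper justifies this Fact by invoking Popov's theorem to write the model fiber as $A=A_\text{root}/L$ with $\Lambda_\text{root}=\sum_\ell(\Lambda\cap\ell)$ of finite index in $\Lambda$, and the induced quotient map $A_\text{root}\to A$ descends to the isogeny of total spaces exactly as you describe. Your explicit verification of the four axioms (and of the polarization compatibility via uniqueness of the $G$-invariant Hermitian form) is a sound fleshing-out of what the paper leaves implicit.
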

To focus on a subsector of the theory, 
 we replace a geometry with model fiber $A$ with an isogeneous one with
fiber $A_\text{root}$ such that $A=A_\text{root}/L$. We do not have
the luxury of choosing an independent polarization, and we need to require that the subgroup $L$ is isotropic for the original Weil pairing on $A$ dictated by the physical Dirac pairing.

\medskip

The above RG arguments suggest that the root $\bigstar$-geometries are the less singular
ones in their isogeny class (there may be more than one in an isogeny class).
The central fiber $\mathscr{X}_{\text{root},0}$ of a root $\bigstar$-geometry -- hence the full $\bigstar$-geometry $\mathscr{X}_\text{root}$ -- is \emph{simply-connected} since
all elements of $H_1(A_\text{root},\Z)\simeq \Lambda_\text{root}$ 
may be written as a sum of elements in some $\Lambda_\ell\subset H_1(A_\text{root})$
which are represented by loops which shrink to zero in $\pi^{-1}(H_\ell)$ hence \emph{a fortiori}
 in the central fiber.
On the contrary for a general $G$-invariant lattice $\Lambda$
the loops representing non-trivial classes in $\Lambda/\Lambda^0$ will not
shrink; they represent the ``extra'' line operators present in the subtler RG scenario. 
From the characterization of root lattices in \S.\,\ref{s:strong}
and $H_\bullet(\mathscr{X},\Z)\simeq H_\bullet(\mathscr{X}_0,\Z)$
we get

\begin{fact} If $\mathscr{X}$ is a non-root $\bigstar$-geometry, $H_1(\mathscr{X},\Z)$
is a non-trivial torsion group.
\end{fact}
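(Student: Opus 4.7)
The plan is to identify $H_{1}(\mathscr{X},\mathbb{Z})$ with the finite abelian group $\Lambda/\Lambda^{0}$ appearing in Popov's theorem, by describing the central fiber as a quotient of affine space by an affine group. First I would reduce to the central fiber: since $\mathscr{X}=(A\times V^{\vee})/G_{\mathrm{diag}}$ and $V^{\vee}$ is a linear $G$-representation, equivariant contraction of $V^{\vee}$ to the origin produces a retraction of $\mathscr{X}$ onto $\mathscr{X}_{0}=A/G$; this is the retraction $H_{\bullet}(\mathscr{X},\mathbb{Z})\simeq H_{\bullet}(\mathscr{X}_{0},\mathbb{Z})$ invoked repeatedly in the excerpt. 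Writing $A=V/\Lambda$ one has $\mathscr{X}_{0}=V/\cw$ with $\cw=\Lambda\rtimes G$ acting on $V\simeq\mathbb{C}^{r}$ by affine transformations $(\lambda,g)\cdot x=gx+\lambda$.

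The core step is the formula $\pi_{1}(V/\cw)=\cw/\cw_{\mathrm{refl}}$, where $\cw_{\mathrm{refl}}\triangleleft\cw$ is generated by the \emph{affine complex reflections} of $\cw$, i.e.\ the elements whose fixed locus in $V$ is a complex hyperplane. This is the strengthening of the Shvartsman-type argument of \cite{shva} recalled in \S.\,\ref{s:cfiber}, to be established in \S.\,\ref{s:strong}; its essential input is the local Chevalley--Shephard--Todd theorem, which makes a single complex reflection contribute trivially to $\pi_{1}$ of its local quotient. One then identifies $\cw_{\mathrm{refl}}$ directly. An element $(\lambda,g)\in\cw$ fixes a complex hyperplane in $V$ iff $g$ is a complex reflection $g_{\ell}\in G$ and $\lambda\in\mathrm{im}(1-g_{\ell})=\ell$; $G$-invariance of $\Lambda$ then gives $\lambda\in\Lambda\cap\ell=\Lambda_{\ell}$. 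Multiplying such a generator by $g_{\ell}^{-1}\in G\subset\cw_{\mathrm{refl}}$ yields the translation $(\lambda,0)$, so $\cw_{\mathrm{refl}}\supseteq\Lambda^{0}\rtimes G$, and the reverse inclusion is evident. Because $G$ permutes the root lines, $\Lambda^{0}$ is $G$-stable and $(g-1)\lambda\in\Lambda^{0}$ for every $g\in G$ and $\lambda\in\Lambda$; hence the assignment $(\lambda,g)\mapsto\lambda+\Lambda^{0}$ is a surjective group homomorphism $\cw\twoheadrightarrow\Lambda/\Lambda^{0}$ with kernel $\cw_{\mathrm{refl}}$, so $\cw/\cw_{\mathrm{refl}}\cong\Lambda/\Lambda^{0}$, which is already abelian.

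Combining, $H_{1}(\mathscr{X},\mathbb{Z})=\Lambda/\Lambda^{0}$. By Popov's theorem (stated just before Remark \ref{iuuuzqw}), $\Lambda^{0}\subseteq\Lambda\subseteq(\Lambda^{0})^{\ast}$ with $(\Lambda^{0})^{\ast}/\Lambda^{0}$ a finite abelian group, so the subgroup $\Lambda/\Lambda^{0}\subseteq(\Lambda^{0})^{\ast}/\Lambda^{0}$ is torsion; it is non-trivial precisely when $\Lambda\supsetneq\Lambda^{0}$, i.e.\ when $\mathscr{X}$ is non-root. The main obstacle in this plan is the $\pi_{1}$-formula of the second step: the Shvartsman argument of \cite{shva} supplies only the implication ``simply-connected $\Rightarrow$ generated by reflections'', so one must upgrade it to an \emph{equality} $\pi_{1}(V/\cw)=\cw/\cw_{\mathrm{refl}}$, verifying in particular that the higher-codimension fixed loci of non-reflection elements of $\cw$ (such as the isolated fixed points of a $3$-cycle in a $W(A_{2})$-factor) do not spuriously contract the loops classified by $\Lambda/\Lambda^{0}$.
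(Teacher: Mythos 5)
Your strategy is the same as the paper's: the paper's own justification of this \textbf{Fact} is precisely the informal version of your steps, namely retract $\mathscr{X}$ onto $\mathscr{X}_0=\C^r/\cw$ with $\cw=\Lambda\rtimes G$, observe that the classes in $\Lambda^0=\sum_\ell\Lambda_\ell$ die (in the paper's language they are vanishing cycles along the discriminant components $\pi^{-1}(H_\ell)$), and conclude that what survives is the finite abelian group $\Lambda/\Lambda^0$, non-trivial exactly in the non-root case by Popov's theorem. Your group-theoretic bookkeeping is correct and sharpens the paper's heuristic: the affine reflections of $\cw$ are the $(\lambda,g_\ell)$ with $\lambda\in\Lambda_\ell$, they generate $\Lambda^0\rtimes G$, this subgroup is normal because $(1-g)\mu\in\Lambda^0$ for all $g\in G$, $\mu\in\Lambda$, and $(\lambda,g)\mapsto\lambda\bmod\Lambda^0$ exhibits the quotient as $\Lambda/\Lambda^0$.

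The obstacle you flag at the end is, however, not a verification to be carried out but a genuine failure of the key lemma as you state it: for the coarse quotient space the formula $\pi_1(V/\cw)=\cw/\cw_{\mathrm{refl}}$ is \emph{false}, and the ``spurious contraction'' you worry about really happens. By Armstrong's theorem $\pi_1(V/\cw)=\cw/N$ with $N$ normally generated by \emph{all} elements having fixed points. Every irreducible unitary reflection group contains an element $g$ with $\det_V(1-g)\neq0$ (a Coxeter or Springer-regular element, or the generator of the centre when it is non-trivial); for such $g$ every $(\lambda,g)\in\cw$ fixes $(1-g)^{-1}\lambda$ and hence lies in $N$, and the product $(\lambda,g)(0,g^{-1})=(\lambda,1)$ then forces every translation into $N$. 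Thus $N=\cw$, the coarse spaces $V/\cw$, $\mathscr{X}_0$ and $\mathscr{X}$ are simply connected, and their naive $H_1$ vanishes identically. The statement is true only for the refined invariant that the paper implicitly uses via Shvartsman's notion of \emph{strong} simple-connectedness: $\pi_1$ of the smooth locus, equivalently of the complement of the codimension-$\geq2$ singular set (or of the canonical DM stack with the codimension-$\geq2$ isotropy retained). With that reading your computation does give the right answer: by Chevalley--Shephard--Todd the smooth locus of $V/\cw$ is $V_{\mathrm{good}}/\cw$, where $V_{\mathrm{good}}$ is the (still simply connected) set of points whose stabilizer is generated by reflections; an element of $\cw$ has a fixed point in $V_{\mathrm{good}}$ iff it lies in a reflection-generated stabilizer, so Armstrong applied to $V_{\mathrm{good}}$ yields exactly $\cw/\cw_{\mathrm{refl}}\cong\Lambda/\Lambda^0$. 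You should therefore either prove the lemma in this form and state the conclusion for $H_1$ of the smooth locus (or of the stack), or explicitly adopt that refined meaning of $H_1(\mathscr{X},\Z)$; as written, the second step of your argument asserts something false about the underlying topological space.
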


\subsubsection{Examples}\label{s:examples}
To illustrate the method to construct all $G$-invariant lattices, we run the algorithm for the imprimitive groups $G(m,q,r)$
($m\in\{2,3,4,6\}$ and $q\mid m$). This subsection may be skipped in a first reading.
The case $r=2$ is special since
\be
\begin{gathered}
\mathfrak{so}(5)\simeq\mathfrak{sp}(2),\quad G(2,2,2)\simeq W(A_1)\times W(A_1),\quad G(3,3,2)\simeq W(A_2),\\
G(4,4,2)\simeq	W(B_2),\qquad G(6,6,2)\simeq W(G_2),
\end{gathered} 
\ee
 and we assume $r\geq3$ unless explicitly stated otherwise.   
     
     \begin{exe}\label{5mafia} Consider the imprimitive reflection group $G(2,1,r)\simeq W(B_r)\simeq W(C_r)$.
We write $e_i$ ($i=1,\dots,r$)
     for the usual orthonormal basis of $\C^r$.
     There are four $W(B_r)$-invariant lattices lattices in $\R^r$ (cf.\! \cite{bourbaki} \textsc{plache II,III}):
     \be
     \begin{aligned}
   Q(B_r)&=\sum\nolimits_i \Z e_i=Q(B_r)^\vee\\
   Q(C_r)&= \left\{\sum\nolimits_i a_i e_i\colon a_i\in\Z,\ \sum\nolimits_ia_i=0\bmod2\right\}\\
   P(B_r)&= \sum\nolimits_i \Z e_i+\Z \frac{1}{2}\sum\nolimits_i e_i\equiv Q(C_r)^\vee\\
  P(C_r)&=Q(C_r)+\Z e_1\equiv Q(B_r),
     \end{aligned}
     \ee
     where $Q(\mathfrak{g})$ and $P(\mathfrak{g})$ stand, respectively, for the root and weight lattices of $\mathfrak{g}$ and the superscript $\vee$ stands for the dual lattice (do not confuse with the dual root system $Q^\vee(\mathfrak{g})$!).
Table 2 of \cite{popov} lists 5 one-parameter families of \emph{root} lattices in $\C^r$ for $G(2,1,r)$:
 \be\label{rooTB}
 \begin{aligned}
 \Lambda_\text{root}^{(1)}&=\sum\nolimits_i(\Z+\tau\,\Z)e_i= Q(B_r)+\tau\, Q(B_r)\\
 \Lambda_\text{root}^{(2)}&=Q(B_r)+\frac{1+\tau}{2} Q(C_r)\\
 \Lambda_\text{root}^{(3)}&=\frac{1}{2}Q(C_r)+\tau\, Q(B_r)\\
 \Lambda_\text{root}^{(4)}&=Q(B_r)+\frac{\tau}{2}\, Q(C_r)\\
 \Lambda_\text{root}^{(5)}&=\frac{1}{2}Q(C_r)+\frac{\tau}{2}\, Q(C_r)
 \end{aligned}
 \ee
 where $\tau\in\mathbb{H}$ is a point in the upper half-plane.
 $\C^r/\Lambda_\text{root}^{(1)}$ and $\C^r/\Lambda_\text{root}^{(5)}$
are the Looijenga Abelian varieties for, respectively, $Sp(r)$ and $\mathsf{Spin}(2r+1)$.
The interpretation of these 5 root lattices in terms of affine Lie algebras will be given
in \S.\,\ref{s:affaff}.
 
 The 5 lattices $\Lambda_\text{root}^{(a)*}$ are obtained from \eqref{rooTB} by the replacements
 $Q(B_r)\leadsto P(B_r)$, $Q(C_r)\leadsto P(C_r)$, so that $|\Lambda_\text{root}^{(a)*}/\Lambda_\text{root}^{(a)}|=4$ for $a=1,\dots,5$.
The invariant sublattices which admit a principal polarization are the ``readings''
of $\cn=4$ SYM with $\cg=Sp(r)$ \cite{readings}:
\begin{small}\be
\begin{tabular}{ll||ll}
gauge theory & lattice & gauge theory & lattice\\\hline
$\mathsf{Spin}(2r+1)$ & $P(B_r)+\tau\,Q(C_r)$ & $(Sp(r)/\Z_2)_+$ & $Q(C_r)+\tau\, P(B_r)$\\
$SO(2r+1)_+$ & $Q(B_r)+\tau\,P(C_r)$ & $Sp(r)$ & $P(C_r)+\tau\, Q(B_r)$\\
 $SO(2r+1)_-$ & $Q(B_r)+\tau\,Q(C_r)+\Z f(\tau)$ & $(Sp(r)/\Z_2)_-$ & $Q(C_r)+\tau\, Q(B_r)+\Z g(\tau)$\\\hline
\end{tabular}
\ee\end{small}
\be
\text{where}\qquad f(\tau)=\frac{1}{2}\sum\nolimits_i e_i+\tau\,e_1,\qquad g(\tau)= \frac{\tau}{2}\sum\nolimits_i e_i+e_1.
\ee
The $Sp(r)$ reading is special since its lattice is the root lattice $\Lambda^{(1)}_\text{root}$
which -- as it is obvious from the QFT side -- coincides with the charge lattice of $Sp(r)$ SYM with 4 fundamentals and one antisymmetric hyper, which -- as all MN models -- must have a charge lattice
which is both root and principal. As already mentioned, the corresponding Abelian variety $\C^r/\Lambda^{(1)}_\text{root}$ is the $Sp(r)$
Looijenga one. 
     \end{exe}
     
     \begin{exe}\label{exemm} We consider the groups $G(m,m,r)$ with $m\in\{2,3,4,6\}$ and $r\geq3$.
These groups have a unique invariant root lattice
 \be\label{mmroot}
 \begin{aligned}
&m=2 &\quad\Lambda_2&=\Big\{\sum\nolimits_j(a_j+\tau b_j)e_j\colon a_j,b_j\in\Z,\ \sum\nolimits_j a_j=\sum\nolimits_j b_j=0\bmod2\Big\}\\
&m=3 &\quad\Lambda_3&=\Big\{\sum\nolimits_j(a_j+e^{2\pi i/3} b_j)e_j\colon a_j,b_j\in\Z,\ \sum\nolimits_j (a_j+b_j)=0\bmod3\Big\}\\
&m=4 &\quad\Lambda_4&=\Big\{\sum\nolimits_j(a_j+i\, b_j)e_j\colon a_j,b_j\in\Z,\ \sum\nolimits_j (a_j+b_j)=0\bmod2\Big\}\\
&m=6 &\quad\Lambda_6&=\Big\{\sum\nolimits_j(a_j+e^{2\pi i/6} b_j)e_j\colon a_j,b_j\in\Z\Big\},
 \end{aligned} 
 \ee
 where $\tau\in\mathbb{H}$. Then
     \be\label{uuuy778}
     \begin{aligned}
     &m=2 &\quad\Lambda_2^*&=\sum\nolimits_j(\Z+\tau\Z)e_j+\frac{1}{2}\Z\sum\nolimits_je_j+\frac{\tau}{2}\Z\sum\nolimits_j e_j\\
    &m=3 &\quad \Lambda_3^*&=\sum\nolimits_j(\Z+e^{2\pi i/3}\Z)e_j+\frac{1}{3}\Z(1+2 e^{2\pi i/3})\sum\nolimits_j e_j\\ 
     &m=4 &\quad\Lambda_4^*&=\sum\nolimits_j(\Z+i\,\Z)e_j+\frac{1}{2}\Z(1+i)\sum\nolimits_j e_j\\ 
     &m=6 &\quad\Lambda_6^*&=\sum\nolimits_j(\Z+e^{2\pi i/6}\Z)e_j\equiv \Lambda_6.
     \end{aligned}
     \ee   
   For $m=2$, $G(2,2,r)\simeq W(D_r)$, the root lattice is $\Lambda=Q(D_r)+\tau Q(D_r)$
     and 
     \be
     \Lambda^*=P(D_r)+\tau P(D_r).
     \ee
      Neither is principal, but there are plenty of principal invariant lattices, in one-to-one correspondence with the readings of $\cn=4$ SYM with $\mathfrak{g}=D_r$
     \cite{readings}. For $G(6,6,r)$ ($r\geq3$!) we have a \emph{unique} invariant lattice
     \be
     \Lambda_6=\sum\nolimits_j(\Z+e^{2\pi i/6}\Z)e_j,
     \ee
     which is both root and principal. When $m=3,4$ and $r\geq3$,
 we have
 \be
 [\Lambda_m^*:\Lambda_m]=p^2\quad\text{where }p\ \text{is prime with }p\mid m\ \text{i.e. }p=\begin{cases}3 &m=3\\
 2 & m=4.\end{cases}
 \ee
  In facts, for $m=3,4$
  \be
  \Lambda_m^*/\Lambda_m= \Z/p\Z\oplus\Z/p\Z.
  \ee
In these two cases, besides the invariant lattices
 $\Lambda_m$ and $\Lambda_m^\ast$ we have the 3 invariant ``GSO-projection'' lattices:
   \be
\begin{aligned}
&\Lambda_m^v=\sum\nolimits_j(\Z+e^{2\pi i/m}\Z)e_j\\
&\Lambda_m^s=\Lambda_m+\Z \frac{1}{p} \big(1+(p-1)e^{2\pi i/m}\big)\sum\nolimits_je_j\\
&\Lambda_m^c=\Lambda_m+\Z \Big[\frac{1}{p} \big(1+(p-1)e^{2\pi i/m}\big)\sum\nolimits_je_j+e_r\Big]
\end{aligned}
   \ee
   These 3 lattices are principal.
   \begin{rem} Note that in this example the lattices $\Lambda_m^*$ coincide with the dual lattices $\Lambda_m^\vee$.
   \end{rem} 
     \end{exe}
     
     \begin{exe} Consider $G(6,q,r)$, $q\mid 6$ ($r\geq3$!). Since $G(6,6,r)\triangleleft G(6,q,r)$,
     while $G(6,q,r)$ must have at least one invariant root lattice,
     we conclude from eq.\eqref{uuuy778} that for all $q\mid6$ there is a \emph{unique} invariant lattice
     \be
     \Lambda=\sum\nolimits_j(\Z+e^{2\pi i/6}\Z)e_j
     \ee
 which is automatically root. It is obviously principal for all $q\mid 6$.
 \end{exe}
     \begin{exe} 
    Next we study the invariant lattices for $G(3,1,r)$ and $G(4,1,r)$. 
    Since $G(m,m,r)\triangleleft G(m,1,r)$, the invariant lattices must be in the list of 5 lattices
    we found in {example \ref{exemm}} for $m=3,4$.
    The root lattice $\Lambda_m$ in eq.\eqref{mmroot} is invariant also for $G(m,1,r)$
    and is \emph{a fortiori} a root lattice for the latter group that we call $\Lambda^{(2)}_m$. Also the primitive lattice
    $\Lambda_m^v$ is now an invariant root lattice that we call $\Lambda^{(1)}_m$. The two ``spinor'' lattices $\Lambda^s_m$,
    $\Lambda^c_m$ are not invariant under the bigger group. We have
    \be
    \Lambda_m^{(1)*}=\Lambda_m^\ast,\qquad \Lambda_m^{(2)*}=\Lambda_m^{(1)},
    \ee
    where $\Lambda_m^\ast$ is the lattice in eq.\eqref{uuuy778} (with $m=3,4$). 
    Only $\Lambda_m^{(1)}$ is principal. Note that for these lattices $\Lambda_m^{(1)*}=\Lambda_m^{(2)\vee}$.
        \end{exe}
     
     \begin{exe}\label{42r} For $G(4,2,r)$, the candidate invariant lattices are again the 5
     ones in {example \ref{exemm}} for $m=4$. It is immediate to check that all 5 are invariant.
$\Lambda_4$ and $\Lambda_4^v$ are root lattices on the nose with $\Lambda_4\subset \Lambda_4^v$.
cf. \textbf{Remark \ref{iuuuzqw}}. $\Lambda_4^v$ is root and principal. The two lattices $\Lambda_4^s$, $\Lambda_4^c$ are principal and non-root. The lattice $\Lambda_4^\vee\supset \Lambda_4^v$
is non-root and non-principal.
     \end{exe}

 \subsection{Classification of ``classical'' $\bigstar$-geometries}\label{s:root2}
 
 By the construction \eqref{constr1} we see that, modulo isomorphism of complex spaces, 
 the classification of all irreducible $\bigstar$-geometries is equivalent to the classification of
 all invariant lattices for the irreducible unitary groups with $\mathbb{K}_G$ rational or imaginary quadratic. 
 Since all $\bigstar$-geometry is isogeneous to a root one,
their classification \emph{modulo isogeny} is given\footnote{\ There may be more than one root lattice in the same isogeny class, cf.\! \textbf{Remark \ref{iuuuzqw}}. Thus the list of \cite{popov} produces some repetition modulo isogeny.} by the list of root lattices in table 2 of \cite{popov}
 with the types $[G(4,2,s)]^*_1$, $[K_{12}]^*$ and $[K_{31}]^*$ removed from the list.\footnote{\ These types will lead to geometries without zero section.}
 For each isogeny class with model fiber $A_\text{root}$ we determine the 
 finite group
 \be
 M=\big\{a\in A_\text{root}\colon g a=a\ \text{for all }g\in G\big\}\subset A_\text{root}
 \ee
 and we have one $\bigstar$-geometry with model fiber $A_\text{root}/L$ per subgroup $L\subset M$.
 The algorithm to construct all such $L$'s is straightforward.
   However, since classification is not the primary goal of this note, we shall run it for all
   but finitely many sporadic $G$'s and write all $\bigstar$-geometries up to a few exceptional cases
   (many of which are absolutely incomplete). Our list is exhaustive for ranks $r\geq7$. The classification for ranks $r\leq6$ will be given elsewhere.
 
 \subsubsection{Non-rigid $\bigstar$-geometries: connection to affine Lie algebras}\label{s:affaff}
  The non-rigid $\bigstar$-geometries are the interacting ones with model fiber isogeneous to $E_\tau^r$, $E_\tau$ a general elliptic curve of period $\tau$.
They come in families parametrized by modular curves $\cm=\mathbb{H}/\boldsymbol{\Gamma}$.
$G$ is the Weyl group $W(\mathfrak{g})$ of a simple Lie algebra $\mathfrak{g}$ of rank $r$.

A non-rigid $\bigstar$-geometry describes a $\cn=2$ SCFT with conformal manifold the curve $\cm$. The SCFT becomes weakly coupled at the cusps of $\cm$ (in the appropriate duality frame); hence the underlying SCFT must be a Lagrangian theory: $\cn=2$ SYM 
 with a simple gauge algebra $\mathfrak{g}$ coupled to matter in suitable representations \cite{tachi}.
 The condition that the special geometry belongs to the $\bigstar$-class sets very strong constraints
 on the matter sector. If we require a maximal set of
 line operators to be present, its $W(\mathfrak{g})$-lattice $\Lambda$ must be principal.
 
 To avoid exceptional cases we assume rank $r\geq3$.
 
 \paragraph{Relation to affine Lie algebras.}
 Consider the set of Dynkin graphs $\varGamma$ of the affine Lie algebras drawn in the precise shape
  as in tables I-II-III of ref.\!\cite{kac2} to which we refer for conventions. If we delete the leftmost node of $\varGamma$ we get the Dynkin graph of a simple Lie algebra. We write $\overline{W}(\varGamma)$ for the (finite) Weyl group of this Lie algebra.    
 One has
 \begin{thm}[\textbf{Theorem 3.1} in \cite{rus1} see also \cite{kac2}] The root lattices with $G=W(\mathfrak{g})$
 can be read from the set of affine Dynkin graphs with $\overline{W}(\varGamma)=W(\mathfrak{g})$.  $B^{(1)}_r$ and 
 $C^{(1)}_r$ produce equivalent root lattices for $W(B_r)\equiv W(C_r)$.
 \end{thm}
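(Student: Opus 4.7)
The strategy is to decode a $W(\mathfrak{g})$-invariant root lattice $\Lambda\subset\C^r$ into Weyl-orbit data on the root system of $\mathfrak{g}$, and to recognize this data as an affine Cartan matrix in Kac's classification.

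First, since $\mathbb{K}_{W(\mathfrak{g})}=\mathbb{Q}$, any $W(\mathfrak{g})$-invariant full lattice takes the form $\Lambda=L_1+\tau L_2$ with $L_1,L_2\subset\R^r$ two $W(\mathfrak{g})$-invariant real lattices and $\tau\in\mathbb{H}$. The root-lattice condition $\Lambda=\sum_{\ell}\Lambda\cap\ell$, summed over the complex root lines $\ell=\C\alpha$ with $\alpha$ a root of $\mathfrak{g}$, asserts that $\Lambda$ is the $\Z$-span of its rank-2 sublattices $\Lambda_\alpha\equiv\Lambda\cap\C\alpha$. Each $\Lambda_\alpha$ sits inside $\C\alpha\simeq\C$ and hence has the form $\Lambda_\alpha=\Z(m_\alpha\alpha)\oplus\Z(n_\alpha\tau\alpha)$ for positive rationals $m_\alpha,n_\alpha$ (up to a diagonal glue). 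By $W$-equivariance $\Lambda_{w\alpha}=w\Lambda_\alpha$, so the pair $(m_\alpha,n_\alpha)$ depends only on the Weyl orbit of $\alpha$: a single pair for simply-laced $\mathfrak{g}$, two for $B_r,C_r,F_4,G_2$.

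Next, impose that $\Lambda=\sum_\alpha\Lambda_\alpha$ is preserved by all reflections $s_\beta$. Writing $s_\beta(\alpha)=\alpha-\langle\alpha,\beta^\vee\rangle\beta$, the condition $s_\beta\Lambda_\alpha\subset\Lambda$ translates into divisibility relations between $(m_\alpha,n_\alpha)$ and $(m_\beta,n_\beta)$ for roots in different orbits. Encoding these relations on the nodes of the Dynkin diagram of $\mathfrak{g}$ decorated by the marks $a_i,a_i^\vee$ produces exactly the entries of an affine Cartan matrix in Kac's list. The bijection is made transparent by reading $\Lambda$ as the $\Z$-span of the real affine roots $\alpha_i+k\delta$ (with $\delta\leftrightarrow\tau$) of an affine Lie algebra whose deleted-node graph is $\mathfrak{g}$: untwisted types $\mathfrak{g}^{(1)}$ arise when both orbits carry ``matching'' $(m,n)$-data, twisted types when the long/short orbits are rescaled relatively to each other by the twist index. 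Surjectivity (every affine $\varGamma$ with $\overline{W}(\varGamma)=W(\mathfrak{g})$ gives a root lattice) is verified by direct construction; injectivity follows from the uniqueness of the Kac presentation.

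The main obstacle is the Cartan-integrality step: one must check that the constraints on the pairs $(m_\alpha,n_\alpha)$ over long and short orbits really exhaust Kac's list and nothing else. The cleanest way, following Rains \cite{rus1}, is to reinterpret $\Lambda$ as the cocharacter lattice of a (possibly twisted) loop group so that Kac's theorem applies verbatim. Finally, for the $B_r^{(1)}\equiv C_r^{(1)}$ equivalence: these are the two untwisted affine types with finite Weyl group $W(B_r)\simeq W(C_r)$, and they differ by the swap of the long/short orbits. Concretely, $B_r^{(1)}$ uses $(Q(B_r),Q(B_r))$ while $C_r^{(1)}$ uses its dual-scaled avatar; the rescaling $\alpha\leftrightarrow\alpha^\vee$ on long roots defines a $W(B_r)$-equivariant $\Q$-linear isomorphism carrying one root lattice onto the other (both equal to $\Lambda^{(1)}_{\rm root}$ of Example \ref{5mafia}), establishing the claimed identification.
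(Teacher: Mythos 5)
The paper does not prove this statement: it is imported as \textbf{Theorem 3.1} of Bernstein--Schwarzman \cite{rus1} (see also \cite{kac2}), and the surrounding text only runs a consistency check on the example $G=W(C_r)$. So the comparison can only be with the cited argument, whose strategy you do reproduce in outline: decompose a root lattice into the rank-two pieces $\Lambda_\alpha=\Lambda\cap\C\alpha$, note that $W$-equivariance makes the data constant on Weyl orbits of root lines (one orbit in the simply-laced case, two otherwise), and match the resulting closure/divisibility constraints against the classification of affine root systems. As a proof, however, your text has a genuine gap at the decisive step: the assertion that the constraints on the orbit data ``produce exactly the entries of an affine Cartan matrix in Kac's list'' \emph{is} the theorem, and invoking ``Kac's theorem verbatim'' after an unexplained reinterpretation as a cocharacter lattice is a pointer to the literature, not an argument (note also that \cite{rus1} is Bernstein--Schwarzman, not Rains). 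Moreover your normal forms $\Lambda=L_1+\tau L_2$ and $\Lambda_\alpha=\Z m_\alpha\alpha\oplus\Z n_\alpha\tau\alpha$ are too restrictive even for root lattices: the family $\Lambda^{(2)}_{\text{root}}=Q(B_r)+\tfrac{1+\tau}{2}Q(C_r)$ of Example \ref{5mafia} does not split this way, and the ``diagonal glue'' you set aside is precisely what separates $A^{(2)}_{2r}$ from the $B^{(1)}_r$, $C^{(1)}_r$ cases, so it cannot be suppressed in the combinatorial bookkeeping.

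The concluding identification is also factually wrong. You claim the $B^{(1)}_r$ and $C^{(1)}_r$ root lattices are ``both equal to $\Lambda^{(1)}_{\rm root}$''. By the paper's own data, $\Lambda^{(1)}_{\text{root}}=Q(B_r)+\tau\,Q(B_r)$ is the \emph{principal} ($d=1$) Looijenga lattice of $Sp(r)$, attached to the graph $D^{(2)}_{r+1}$, whereas the lattice shared by $B^{(1)}_r$ and $C^{(1)}_r$ has polarization degree $2$ (table \ref{degdeg}); the equivalent pair is $\Lambda^{(3)}_{\text{root}}=\tfrac{1}{2}Q(C_r)+\tau\,Q(B_r)$ and $\Lambda^{(4)}_{\text{root}}=Q(B_r)+\tfrac{\tau}{2}\,Q(C_r)$ of eq.\eqref{rooTB}. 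Their equivalence is not induced by ``$\alpha\leftrightarrow\alpha^\vee$ on long roots'', which is not a linear map of $\R^r$ (it rescales short roots by $2$ while fixing long ones); it is the statement that the two one-parameter \emph{families} coincide, via $\tau\cdot\Lambda^{(3)}_{\text{root}}(-1/\tau)=\Lambda^{(4)}_{\text{root}}(\tau)$, i.e.\ the modular substitution exchanging the two real lattices $L_1\leftrightarrow L_2$ --- the $S$-duality alluded to in the caption of figure \ref{weylC}. To repair the last step you should exhibit this scalar-times-modular identification of families rather than a pointwise rescaling of roots.
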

 
 There are 5 affine Dynkin graphs with $\overline{W}(\varGamma)\simeq G(2,1,r)$ ($r\geq3$)
 see figure \ref{weylC}. They correspond to the 5 families of root lattices in eq.\eqref{rooTB}.
 However only 4 families are actually distinct, while the difference between $A^{(2)}_{2r}$
 and $D^{(2)}_{r+1}$ turns out to be inessential for most ``classical'' purposes \cite{rus2}. 
 
  \begin{figure}
\begin{align*}
&\begin{smallmatrix}\text{affine}\\ \text{algebra}\end{smallmatrix} &&\qquad\text{Dynkin graph} && \begin{smallmatrix}\text{weighted projective}\\ \text{\emph{scheme}}\end{smallmatrix}\\
&B_r^{(1)} &&\begin{gathered}\xymatrix{& 1\ar@{-}[d]\\
1\ar@{-}[r] & 2 \ar@{-}[r] & 2 \ar@{-}[r] &\cdots\cdots \ar@{-}[r] & 2 \ar@{=>}[r]& 2}\end{gathered}
&&\mathbb{P}(1,1,2,\dots,2)\\
\\
&C_r^{(1)} &&\xymatrix{
1\ar@{=>}[r] & 2 \ar@{-}[r] & 2 \ar@{-}[r] &\cdots\cdots \ar@{-}[r] & 2 \ar@{<=}[r]& 1}
&&\mathbb{P}(1,1,2,\dots,2)\\
\\
&A^{(2)}_{2r} &&\xymatrix{
2\ar@{<=}[r] & 2 \ar@{-}[r] & 2 \ar@{-}[r] &\cdots\cdots \ar@{-}[r] & 2 \ar@{<=}[r]& 1}
&&\mathbb{P}(1,\dots,1)\\
\\
&A^{(2)}_{2r-1} &&\begin{gathered}\xymatrix{& 1\ar@{-}[d]\\
1\ar@{-}[r] & 2 \ar@{-}[r] & 2 \ar@{-}[r] &\cdots\cdots \ar@{-}[r] & 2 \ar@{<=}[r]& 1}\end{gathered}
&&\mathbb{P}(1,1,1,2,\dots,2)\\
\\
&D^{(2)}_{r+1} &&\xymatrix{1\ar@{<=}[r] & 1 \ar@{-}[r] & 1 \ar@{-}[r] &\cdots\cdots \ar@{-}[r] & 1 \ar@{=>}[r]& 1}&&\mathbb{P}(1,\dots,1)
\end{align*}
\caption{\label{weylC}The 5 affine Dynkin graphs with the property that deleting the leftmost node we get the Dynkin graph of a finite-dimensional Lie algebra with Weyl group $G(2,1,r)\equiv W(B_r)\equiv W(C_r)$. The integer attached to each node is its Coxeter label.
These 5 graphs lead to just 3 different weighted projective \emph{schemes} but, as we shall see in \S.\,\ref{s:strong}, to \emph{four} diverse weighted projective \emph{stacks}. The isomorphism of the stacks associated the first two graphs may be seen as a manifestation of $S$-duality at the level of the full ``quantum'' geometry.}
\end{figure}
 
Summarizing, we have three classes of simple gauge algebras $\mathfrak{g}$:
 \begin{itemize}
 \item[1.] \textbf{Simply-laced:} $\mathfrak{g}\in ADE$. There is a single family of root lattices
 $\Lambda_\text{root}$  which produces the Looijenga varieties $\C^r/(Q^\vee+\tau Q^\vee)$
  of $\mathfrak{g}$ whose polarization has a degree equal to
$|Z(\cg)|$ where $\cg$ is the associated simply-connected Lie group;
 \item[2.] \textbf{Non-simply-laced with $P(\mathfrak{g})=Q(\mathfrak{g})$}:  
$F_4$ and $G_2$ (cf.\! \cite{bourbaki} \textsc{plache VIII,\,IX}).
Table 2 of \cite{popov} lists 3 families of root lattices for $F_4$ and 4 for $G_2$
but there are identifications (table 3 of the same paper) so the essentially distinct
families are just 2 for both groups. 
Two families of root Abelian varieties one principal (degree 1)
 and one of degree $4$ for $F_4$ and, respectively, degree $3$ for $G_2$. The non-principal root varieties are the Looijenga ones for both $F_4$ and $G_2$;
 \item[3.] $W(B_r)\equiv W(C_r)$: see example \ref{5mafia} and figure \ref{weylC}.
The graph $D^{(2)}_{r+1}$ gives the Looijenga variety for $C_r$ ($r\geq1$)
whose (minimal) polarization has degree $1$. The graph $A^{(2)}_{2r-1}$ yields the
Looijenga variety for $B_r$ ($r\geq3$) with degree 4. The root variety associated to
$A^{(2)}_{2r}$ ($r\geq2$) is also principal. The one associated to the two affine graphs
$B^{(1)}_r$ and $C^{(1)}_r$ has degree 2.    
 \end{itemize}
 
As already remarked, the principal $W(\mathfrak{g})$-invariant lattices are in one-to-one
correspondence with the readings of $\cn=4$ SYM with gauge algebra $\mathfrak{g}$.
The $C_r$ Looijenga variety is principal and corresponds -- besides to $\cn=4$ with
gauge group $Sp(r)$ -- also to the $D_4$ MN model of rank-$r$. A particularly
interesting reading of $\cn=4$ SYM is the one associated with the simply-connected form of the gauge group $\cg$:
\begin{fact} The model fiber $A_\cg$ of the $\bigstar$-geometry of $\cn=4$ SYM with simply-connected gauge group $\cg$
is isogeneous to the Looijenga variety $A^\text{\rm Lo}_\cg$ of type $\cg$
\be
\phi\colon A^\text{\rm Lo}_\cg\to A_\cg
\ee 
where the kernel of $\phi$ is a subgroup invariant under the Weyl group which is totally isotropic for the Looijenga Weil form. 
\end{fact}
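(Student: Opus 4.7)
The plan is to identify both lattices explicitly and exhibit $\phi$ as the map of complex tori induced by a lattice inclusion, then verify Weyl invariance and isotropy of its kernel.

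First I match the lattices. By the $\bigstar$-construction of \S.\,\ref{s:classstr} together with the table of ``readings'' of $\cn=4$ SYM, the model fiber of the simply-connected reading is $A_\cg=\C^r/\Lambda_\cg$ with $W(\mathfrak{g})$-invariant lattice $\Lambda_\cg=P(\mathfrak{g})+\tau\,Q^\vee(\mathfrak{g})$: in the simply-connected theory Wilson lines are labelled by \emph{all} irreducible representations of $\cg$, so the electric factor is the full weight lattice $P$, while the magnetic factor is the cocharacter lattice of the maximal torus of $\cg$, which for simply-connected $\cg$ coincides with the coroot lattice $Q^\vee$. The Looijenga variety is, by the definition recalled in \S.\,\ref{s:classstr}, $A^\text{Lo}_\cg=\C^r/\Lambda^\text{Lo}_\cg$ with $\Lambda^\text{Lo}_\cg=Q^\vee+\tau\,Q^\vee$. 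Identifying the Cartan subalgebra with its dual via the Killing form, $Q^\vee\subseteq P$ (each coroot pairs integrally with every fundamental weight, $\langle\omega_i,\alpha_j^\vee\rangle=\delta_{ij}$), hence $\Lambda^\text{Lo}_\cg\subseteq\Lambda_\cg$, and the identity on $\C^r$ descends to a surjection $\phi\colon A^\text{Lo}_\cg\to A_\cg$ with finite kernel $\ker\phi=\Lambda_\cg/\Lambda^\text{Lo}_\cg\simeq P/Q^\vee$, making $\phi$ an isogeny.

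Both $\Lambda^\text{Lo}_\cg$ and $\Lambda_\cg$ are $W(\mathfrak{g})$-invariant by construction, hence $\ker\phi$ is a Weyl-stable subgroup of $A^\text{Lo}_\cg$. For the total isotropy I argue as follows. The principal polarization of $A_\cg$ -- which exists because the simply-connected reading is one of the ``readings'' of $\cn=4$ SYM, all of which are principal -- corresponds to an integral unimodular alternating form on $\Lambda_\cg$. Since the representation $V\oplus V^\vee$ of $W(\mathfrak{g})$ is irreducible symplectic, there is up to a rational constant a unique $W(\mathfrak{g})$-invariant symplectic form on $\Lambda_\cg\otimes\mathbb{Q}$, and the Looijenga Weil form on $\Lambda^\text{Lo}_\cg$ is simply its restriction. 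That this common rational form stays $\Z$-valued on the larger lattice $\Lambda_\cg$ is precisely the condition that its $\mathbb{Q}/\Z$-reduction vanishes on the finite subgroup $\Lambda_\cg/\Lambda^\text{Lo}_\cg$, i.e., $\ker\phi$ is totally isotropic for the Looijenga Weil pairing. A numerical consistency check: $\deg\phi=|P/Q^\vee|$ must match the known degree of the Looijenga polarization of $A^\text{Lo}_\cg$ (cf.\! \S.\,\ref{s:root2}), in accord with the functoriality relation $h^0(\phi^*L_{\mathrm{princ}})=(\deg\phi)\cdot h^0(L_{\mathrm{princ}})$.

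The main obstacle is the uniform verification across all simple $\mathfrak{g}$ -- especially the non-simply-laced ones -- that the physical Dirac pairing underlying $\Lambda_\cg$ and the Looijenga Weil form both descend from a single $W(\mathfrak{g})$-invariant bilinear form on $V\oplus V^\vee$ with \emph{matching} normalization (so that the ``same form'' argument above is truly justified rather than merely up to a rational constant). Once this normalization issue is settled from the $\bigstar$-construction itself -- most concretely by cross-checking against the explicit lattices of \textbf{Example \ref{5mafia}} and its analogues -- the total isotropy of $\ker\phi$ follows formally from the principal nature of the polarization on the simply-connected reading.
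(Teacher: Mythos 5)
Your argument is sound, but note that the paper does not actually prove this statement: it simply cites Remark 5.5 of \cite{rootlattices}, where the varieties $A_\cg$ are analyzed. So your proposal supplies a direct verification where the paper defers to the literature, and the route you take is the natural one: identify $\Lambda_\cg=P+\tau Q^\vee$ from the simply-connected reading (consistent with the $\mathsf{Spin}(2r+1)$ and $Sp(r)$ rows of \textbf{Example \ref{5mafia}}), observe $\Lambda^\text{Lo}_\cg=Q^\vee+\tau Q^\vee\subseteq\Lambda_\cg$, and read off $\ker\phi\simeq P/Q^\vee$ together with its Weyl-stability and isotropy. Two small points deserve tightening. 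First, the parenthetical $\langle\omega_i,\alpha^\vee_j\rangle=\delta_{ij}$ only establishes $P=(Q^\vee)^*$; the containment $Q^\vee\subseteq P$ additionally requires that the Gram matrix of $Q^\vee$ be integral in the chosen normalization (long roots of squared length $2$), which does hold for every simple $\mathfrak{g}$ and is what you implicitly use. Second, the normalization worry you raise at the end is genuine but resolves cleanly: the unimodular Dirac pairing on $P\oplus Q^\vee$ restricts on $Q^\vee\oplus Q^\vee$ to the invariant form $(\alpha^\vee,\beta^\vee)$, which is primitive (it takes the value $\pm1$ on suitable pairs of coroots for every $\mathfrak{g}$), hence \emph{is} the minimal $W$-invariant polarization, i.e.\! the Looijenga one on the nose rather than up to a rational constant. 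With that observed, total isotropy of $\ker\phi$ is exactly the statement that the $\mathbb{Q}/\Z$-reduction of $E^\text{Lo}$ kills $\Lambda_\cg/\Lambda^\text{Lo}_\cg$, and your degree check $[\Lambda_\cg:\Lambda^\text{Lo}_\cg]=|P/Q^\vee|$ against table \ref{degdeg} ($r+1$ for $A_r$, $4$ for $B_r$, $1$ for $C_r$, etc.) confirms the whole picture case by case.
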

 This \textbf{Fact} is Remark 5.5 in \cite{rootlattices} where the Abelian varieties $A_\cg$ are studied in detail. 
 
 \subsubsection{Rigid $\bigstar$-geometries}
 
These are the $\bigstar$-geometries whose unitary reflection groups $G$ are 
defined over an imaginary quadratic field. Up to 12 exceptions in ranks $r\leq6$,
$G$ must have the form $G(m,q,r)$ with $m\in\{3,4,6\}$ and $q\mid m$.
In rank $\geq3$ the classification follows from the examples in \S.\,\ref{s:examples}.
We list the possible irreducible $\bigstar$-geometries for each allowed $r$-tuple of Coulomb dimensions
when $G=G(m,q,r)$. This is the complete list of rigid interacting $\bigstar$-geometries for $r>6$.
For each geometry we say if it has or not a crepant resolution, a result that will be established in \S\S.\,\ref{s:cartoon},\ref{s:strong}.  

\paragraph{Coulomb dimensions $\{6,12,\dots, 6(r-1),6r\}$.}
The $\bigstar$-geometry is based on $G(6,1,r)$
and there is just one invariant lattice which is both root and principal,
with a special resolution.
At the ``classical'' level of eq.\eqref{constr1} there is just one $\bigstar$-geometry
with these dimensions: it is the rank-$r$ MN of type $E_8$.
However, this unique ``classical'' geometry
may be endowed with  several inequivalent stacky structures, see \S.\,\ref{s:strong}.
Hence at the face value we have still room for an even finer ``quantum'' classification of such geometries
which perhaps may be of physical significance.

\paragraph{Coulomb dimensions $\{6,12,\dots,6(r-1), kr\}$ with $k=1,2,3$.} These $\bigstar$-geometries are associated to the reflection group $G(6,6/k,r)$.
For each $k$ there is a unique ``classical'' $\bigstar$-geometry which is both root and principal,
but now never crepant. Again the ``quantum'' story may be subtler (\S.\,\ref{s:strong}).

\paragraph{Coulomb dimensions $\{3,6,\dots,3(r-1), 3r\}$ and $\{4,8,\dots,4(r-1), 4r\}$.} 
These geometries have group $G(3,1,r)$ and $G(4,1,r)$ respectively.
For each $m=3,4$ we have 3 ``classical'' $\bigstar$-geometries. 
The first one, based on the lattice $\Lambda^{(1)}_m$ is a $\bigstar$-geometry which is root, principal, and crepant: for $m=3$ (resp.\! $m=4$) it corresponds to
the rank-$r$ MN SCFT of type $E_6$ (resp.\! $E_7$). The remark about potential finer structures for MN type $E_8$ applies also to these two MN geometries. For both $m=3,4$
 we have a second ``classical'' $\bigstar$-geometry which is root but non-principal (polarization of degree 3 and, respectively, 2). Finally we have a geometry which is neither. The last two geometries have no special resolution.

\paragraph{Coulomb dimensions $\{3,6,\dots,3(r-1), r\}$ and $\{4,8,\dots,4(r-1), r\}$.} 
The group is $G(3,3,r)$ and  respectively $G(4,4,r)$. By example
\ref{exemm} we have 5 ``classical'' geometries: one root non-principal, 3 non-root principal, and one non-root non-principal. None is crepant.

\paragraph{Coulomb dimensions $\{4,8,\dots,4(r-1), 2r\}$.}  5 ``classical'' geometries: one root and principal,
one root non-principal, 2 non-root principal, and 1 non-root and non-principal. None is crepant.

\subsection{``Quantum'' $\bigstar$-geometries: structure}\label{s:quantum}

Next we consider $\bigstar$-geometries from a more algebraic standpoint.
In particular we wish to endow them with a \emph{stacky} structure i.e.\! see them as Mumford-Deligne (MD) stacks whose coarse moduli space is their ``classical'' geometry in the sense of \S.\,\ref{s:classstr}.
The physical rationale for performing this stacky upgrading of special geometry will be clarified in \S.\,\ref{mirror}.

Let $A$ be the model fiber of a $\bigstar$-geometry and $G\subset \mathsf{Aut}(A)$
an irreducible unitary reflection group. We fix an ample $G$-equivariant line bundle $\mathscr{L}\to A$ 
in the NS class of the polarization. $G$ fixes the origin $0\in A$,
and $G$ acts on the fiber $\mathscr{L}_0$ at $0$ via a character of $G$.
For a given $A$ there are typically  a few interesting bundles and $G$-actions.
We shall present examples momentarily.
We start with the following

\begin{fact}\label{isoroot} Let $G$ be an irreducible unitary reflection group of rank $r\geq7$,
$A$ a polarized Abelian variety with $G\subset \mathsf{Aut}(A)$, and $\mathscr{L}$
a $G$-equivariant  ample invertible sheaf on $A$ which induces
a principal polarization on $A$. 
There is an isogeny
\be
\phi\colon A_\text{\rm root}\to A
\ee
with $A_\text{\rm root}$ a root variety for $G$, such that the kernel 
$L\subset A_\text{\rm root}$ is contained in the kernel $K$ of the $G$-action
\be
L\subset K\overset{\rm def}{=}\; \{a\in A_\text{\rm root}\colon ga=a\ \text{for all }g\in G\}\subset A_\text{\rm root},
\ee
while $L$ is totally isotropic for the Weil pairing in $A_\text{\rm root}$ induced
by the polarization $\phi^*\mspace{-2mu}\mathscr{L}$. 
\end{fact}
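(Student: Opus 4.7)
The plan is to run essentially backwards through the machinery developed earlier: Popov's theorem hands us a canonical root sublattice of $\Lambda := H_{1}(A,\mathbb{Z})$, and the required $\phi$ is just the natural quotient map; the two properties of $L$ follow from short computations that use, respectively, irreducibility of the reflection representation and principality of $\mathscr{L}$.

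\emph{Step 1 (construction of $\phi$).} Writing $A = \mathbb{C}^{r}/\Lambda$, Popov's theorem stated earlier gives an inclusion chain
$\Lambda_{\text{root}} \subseteq \Lambda \subseteq \Lambda_{\text{root}}^{*}$
with $\Lambda_{\text{root}} = \sum_{\ell\in\mathcal{L}}(\Lambda\cap\ell)$. Since $G$ permutes the root lines $\ell$ by conjugation of the generating reflections, it preserves $\Lambda_{\text{root}}$, so $A_{\text{root}} := \mathbb{C}^{r}/\Lambda_{\text{root}}$ inherits a $G$-action and the natural surjection $\phi \colon A_{\text{root}} \to A$ is a $G$-equivariant isogeny with kernel $L \cong \Lambda/\Lambda_{\text{root}}$. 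Equip $A_{\text{root}}$ with the polarization whose Chern class is $E := c_{1}(\phi^{*}\mathscr{L})$, i.e.\ the restriction of the $G$-invariant skew form $c_{1}(\mathscr{L})$ from $\Lambda$ to $\Lambda_{\text{root}}$.

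\emph{Step 2 (verifying $L \subseteq K$).} The point is to identify $K$ explicitly. A representative $\tilde{a}\in\mathbb{C}^{r}$ descends to an element of $K$ iff $(g-1)\tilde{a}\in\Lambda_{\text{root}}$ for every $g\in G$. Summing this condition over $G$ and using that the $G$-average vanishes on the reflection representation (because $G$ acts irreducibly and non-trivially, so there are no invariant vectors), we get $-|G|\,\tilde{a}\in\Lambda_{\text{root}}$, hence $\tilde{a}\in\Lambda_{\text{root}}\otimes\mathbb{Q}$. But then the condition $(g-1)\tilde{a}\in\Lambda_{\text{root}}$ for all $g$ is exactly the defining property of $\Lambda_{\text{root}}^{*}$, giving $K = \Lambda_{\text{root}}^{*}/\Lambda_{\text{root}}$. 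Popov's inclusion $\Lambda \subseteq \Lambda_{\text{root}}^{*}$ then reads $L = \Lambda/\Lambda_{\text{root}} \subseteq K$.

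\emph{Step 3 (verifying isotropy).} Principality of the polarization $\mathscr{L}$ means that $c_{1}(\mathscr{L})$ takes integer values on $\Lambda\times\Lambda$. The Weil pairing of $\phi^{*}\mathscr{L}$ on $\Lambda_{\text{root}}^{*}/\Lambda_{\text{root}}$ is defined by extending $E$ $\mathbb{Q}$-bilinearly to $\Lambda_{\text{root}}^{*}\times\Lambda_{\text{root}}^{*}$ and reducing mod $\mathbb{Z}$. For classes $\bar{\lambda},\bar{\mu}\in L$ lifted to $\lambda,\mu\in\Lambda$, one has $E(\lambda,\mu)\in\mathbb{Z}$, which reduces to $0$ in $\mathbb{Q}/\mathbb{Z}$; hence $L$ is totally isotropic.

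\emph{Main obstacle.} Honestly modest: the heavy lifting is done by Popov's classification, so the real content of the argument is the identification $K = \Lambda_{\text{root}}^{*}/\Lambda_{\text{root}}$ (immediate from averaging plus irreducibility) and the observation that principality puts $E|_{\Lambda\times\Lambda}$ inside $\mathbb{Z}$. The hypothesis $r\ge 7$ is not actually used in the proof itself, but aligns the statement with the range in which the paper's classification of root lattices is exhaustive and in which only the infinite series $G(m,q,r)$ occur, so that no sporadic case needs separate checking.
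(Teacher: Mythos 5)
Your argument is correct, but it is a genuinely different (and more self-contained) route than the one the paper takes. The paper disposes of the statement in two sentences: for $G$ a Weyl group it appeals to the physics of the ``readings'' of $\mathcal{N}=4$ SYM, and otherwise it invokes the uniqueness of the $G$-invariant Hermitian form on $\Lambda$ together with a citation of Proposition 13.8 of Milne's notes, with no explicit identification of $K$ or verification of isotropy. You instead run everything through Popov's sandwich $\Lambda_{\mathrm{root}}\subseteq\Lambda\subseteq\Lambda^*_{\mathrm{root}}$: the isogeny is the tautological quotient, the inclusion $L\subseteq K$ follows from your averaging computation identifying $K=\Lambda^*_{\mathrm{root}}/\Lambda_{\mathrm{root}}$ (which is a nice explicit fact the paper never states), and isotropy follows from integrality of $c_1(\mathscr{L})$ on $\Lambda$. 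This buys uniformity (no case split between Weyl and non-Weyl groups), removes the dependence on the external Milne reference, and makes transparent why $r\geq7$ is a hypothesis about the scope of the surrounding classification rather than about the proof. Two small caveats: your gloss ``principality means $c_1(\mathscr{L})$ takes integer values on $\Lambda\times\Lambda$'' understates principality (which means the pairing is unimodular, not merely integral) --- but since only integrality is used, your Step 3 actually proves the isotropy claim for \emph{any} integral polarization, which is strictly more than asked; and the well-definedness of the $\mathbb{Q}/\mathbb{Z}$-valued Weil pairing on all of $\Lambda^*_{\mathrm{root}}/\Lambda_{\mathrm{root}}$ (as opposed to its restriction to $L\times L$, which is all you need) is inherited from the paper's own convention and is not something you need to defend.
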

The statement ought to be true for the following reason. When $G$ is a Weyl group we reduce back to the readings of $\cn=4$ SYM. Otherwise there is a unique $G$-invariant 
Hermitian form on $\Lambda$ which, when restricted to the sublattice $\Lambda_\text{root}$,
should give the unique $G$-invariant polarization of $\Lambda_\text{root}$. The
rest of the statement follows from, say, \textbf{Proposition 13.8} of \cite{milne}. 
Thus
the natural $G$-equivariant ample sheaves $\mathscr{L}$
on a principal $G$-variety are functorially related to the
natural ones on its covering root  Abelian variety, and we 
may restrict our analysis to the latter one with no essential loss.

 \paragraph{The construction.}
 
 Let $(A,\mathscr{L})$ be a pair where $A$ is an Abelian variety and $\mathscr{L}$
 an ample line bundle on $A$, on which the unitary reflection group $G$ acts
 by automorphisms (fixing the origin of $A$). We use the symbol $\mathscr{L}$
 also to denote the pulled-back bundle over $A\times \C^r\to A$ that we call the
 \emph{Dirac sheaf}.
 Consider the algebra of $G$-invariant Dirac sections 
 \be
 \mathscr{A}\equiv\bigoplus_{k\geq0}\mathscr{A}_k\;\overset{\rm def}{=}\left(\bigoplus_{k\geq0}\Gamma(A\times\C^r,\mathscr{L}^k)\right)^{\!\!G},
 \ee
 that we see as a connected graded algebra over the chiral ring
 \be
 \mathscr{R}=\Big(\C[x_1,\dots,x_r]\Big)^{\!G}\equiv \Gamma(\mathscr{C},\co_\mathscr{C}),\quad
 \mathscr{C}\equiv\mathsf{Spec}\,\mathscr{R},
 \ee
with
\be
\mathscr{A}_k\equiv\big(\Gamma(A\times \C^r,\mathscr{L}^k)\big)^{\!G}.
\ee
Then the special geometry is the projective Abelian \emph{scheme} $\mathscr{X}$ over 
the scheme $\mathscr{C}$
\be
\mathscr{X}\equiv \mathsf{Proj}\,\mathscr{A}\to \mathsf{Spec}\,\mathscr{R}\equiv\mathscr{C}.
\ee
By construction $\mathscr{X}$ is a normal variety with only finite quotient singularities.
There is a unique canonical Deligne-Mumford stack $\mathscr{X}^\text{can}$
 whose coarse moduli is the scheme $\mathscr{X}$;  all other Deligne-Mumford stacks $\mathscr{Y}$
 with coarse moduli $\mathscr{X}$ factor through $\mathscr{X}^\text{can}$ \cite{barbara}. 
As we shall argue physically in \S.\ref{s:cartoon}, the ``quantum'' special geometry may depend on its stacky structure
which may not be the
 canonical one for the underlying scheme $\mathscr{X}$.  One has
 \be\label{jaqert}
\mathscr{Y}= [\mathbb{T}_\mathscr{L}/(\mathbb{G}_m\times G)]
 \ee
where $\mathbb{T}_{\mathscr{L}}$ is the $\mathbb{G}_m$-torsor on $A\times \mathbb{A}^r$
corresponding to $\mathscr{L}$, that is, the complement of the zero section in $\mathscr{L}$. 

The isogeneous geometry with $A$ replaced by $A/L$ may equally be viewed as
\be
\mathsf{Proj}\left(\bigoplus_{k\geq0} \Gamma(A\times\C^r,\mathscr{L}^k)^{G\ltimes L}\right),
\ee
i.e.\! as the quotient of the same covering geometry $A\times \C^r$ by the group $G\ltimes L$. 

\paragraph{The central fiber: geometry}

Let $x\in \C^r$ and
\be
G_x=\{g\in G\colon gx =x\}\subset G,
\ee 
and $\varpi\colon \C^r\to \C^r/G\equiv\mathscr{C}$.
Then the fiber over $u\in\mathscr{C}$ as a scheme is
\be
\mathscr{X}_u\simeq\mathsf{Proj}\left(\bigoplus_{k\geq0}\big(\Gamma(A,\mathscr{L}^k)\big)^{G_x}\right)\quad\text{where }\ \varpi(x)=u,
\ee
or the corresponding MD stack $[\mathbb{T}_x/(\mathbb{G}_m\times G_x)]$
where $\mathbb{T}_x$ is the restriction of $\mathbb{T}_\mathscr{L}$ to $x\in\C^r$.
In particular the central fiber is
\be\label{X0}
\mathscr{X}_0=\mathsf{Proj}\left(\bigoplus_{k\geq0}\big(\Gamma(A,\mathscr{L}^k)\big)^{G}\right) \simeq A/G
\ee
where $\simeq$ is isomorphism of normal complex spaces.

When $G$ is a Weyl group $W$, $A$ is a Looijenga $W$-variety, and the action of $W$ on the
fiber of $\mathscr{L}$ over $0$ is trivial, the geometry of the central fiber $A/W$
is described by the original Looijenga theorem.
For a general unitary reflection group $G$, and $A$ an arbitrary \emph{root} Abelian $G$-variety, the central fiber $A/G$ is given by the generalizations of the theorem first conjectured
in \cite{rus2} and recently proven by Rains in \cite{rains}.

As stated in the introduction, the Looijenga theorem played a pivotal role in the understanding of color confinement in super-Yang-Mills \cite{WittenN1}. We shall review that story in the next section,
and use the analogy with that problem to get a physical picture of what is going on at the central fiber, i.e.\!
at the most severe singularity of the $\bigstar$-geometry. Before going to that we present
an example; it may be skipped in a first reading.

\subsection{An example in rank-1}\label{s:example1} 

To put some flesh on the elusive Dirac sheaf, let us work out the simplest non-trivial example. 
We consider the rank-1 geometries with chiral ring $\C[u]$ where $\Delta(u)=6$.  
Let $E$ be the model fiber of such a geometry: it is an elliptic
curve with $\tau=e^{2\pi i/3}$. Then $G\simeq\Z_6$ and the kernel $K$ is trivial.
We take $\mathscr{L}\equiv\cl[0]$, the line bundle on $E$ with divisor $(0)$. Riemann-Roch gives
$h^0(\mathscr{L}^k)=k$. We write $w$ for a generator of $\Gamma(E,\mathscr{L})$,
$x$ for a generator of $\Gamma(E,\mathscr{L}^2)$ linearly independent of $w^2$, and
$y$ for a generator of $\Gamma(E,\mathscr{L}^3)$ not in the span of $w^3$ and $wx$.
$\Gamma(E,\mathscr{L}^4)$ resp.\! $\Gamma(E,\mathscr{L}^5)$ are generated by the monomials
in $x$, $y$, $w$ of degree 4 resp.\! 5. Between the 7 monomials of degree 6 there must be a linear relation
which (keeping into account the $\Z_6$ automorphisms) can be written
as the degree 6 curve in the weighted
projective space $\mathbb{P}(1,2,3)$
\be
y^2=x^3+w^6.
\ee
A subgroup
\be
H_{d_2,d_3,d_6}\equiv\Z_{d_2}\times \Z_{d_3}\times\Z_{d_6}\subset \Z_2\times \Z_3\times \Z_6\qquad \text{with}\quad d_s\mid s=2,3,6
\ee
 acts as
 \be
(k_2,k_3,k_6)\colon (x,y,z)\to (e^{2\pi ik_2/d_2}x,e^{2\pi ik_3/d_3}y,e^{2\pi ik_6/d_6}w).
\ee
Identifying $x/w^2=\wp$ and $y/w^3=\wp^\prime$ we see that 
$H_{d_2,d_3,d_6}$ acts on $E$ by $\Z_6$ automorphisms iff $\mathrm{lcm}(d_s)=6$.
This gives 9 equivariant action on the total space of $\mathscr{L}$ which cover the ``classical'' $\Z_6$
action on the base $E$. Then we have 9 MD stacks
\be
\Big[(\mathscr{L}\setminus 0)\times \C)\big/(\C^\times \times H_{d_2,d_3,d_4})\Big]
\ee
whose underlying coarse moduli is the classical rank-1 $\Delta=6$ special geometry.
%
%
%
%
We follow Rains.\footnote{\ See \textit{Example 3.3} and \textit{Example 3.11} in \cite{rains}. Our discussion of the rank-1 case follows his
analysis of the rank $r\geq3$ situation rather than the rank-1 one, in order to get a larger family of putative ``quantum'' geometries associated to the rank-1 $\Delta=6$ classical geometry (which is unique). The authors' feeling is that all these different ``quantum'' versions may be needed to accommodate what we know from the physical side.}
We write $S=\oplus_{n\geq0} S_n$ for the homogeneous coordinate ring 
\be
S=\C[w,x,y]/(y^2-x^3-w^6)
\ee graded as above; $S_n$ is the homogeneous component of degree $n$.
If we take $n=1$, i.e.\!
take as Dirac bundle the principal one $\cl[(0)]$, we get a non-trivial action of $\Z_n$
on the fiber over the origin. 
For all nine interesting groups $H_{d_2d_3d_6}$ the ring of invariants
\be
\big(S^{(1)}\big)^{H_{d_2,d_3,d_6}}
\ee
is a free
graded ring in two generators of respective degree $q_0$, $q_1$, see table \ref{9cases}.
This yields to $E/\Z_6$ the structure of the
Deligne-Mumford stack $\mathbb{P}(q_0,q_1)$, see discussion in .
Its coarse moduli is (of course) $\mathbb{P}^1$ for all choices of $H_{d_2,d_3,d_6}$
with $\mathrm{lcm}(d_s)=6$.

\begin{table}
\renewcommand{\arraystretch}{1.4}
$$
\begin{tabular}{c@{\hskip0.4cm}c@{\hskip0.8cm}c@{\hskip0.8cm}c}\hline\hline
$n$ & $(d_6,d_3,d_2)$ & $S(1)^{H_{d_6,d_3,d_2}}$ & weights $(q_0,q_1)$\\\hline
$6$ & $(6,3,2)$ & $\C[x^3,y^2]$ & $(6,6)$\\
& $(3,3,2)$ & $\C[w^3, x^3]$ & $(3,6)$\\
& $(2,3,2)$ & $\C[w^2, x^3]$ & $(2,6)$\\
& $(1,3,2)$ & $\C[w, x^3]$ & $(1,6)$\\\hline
$3$ & $(6,3,1)$ & $\C[y, x^3]$, & $(3,6)$\\
 & $(2,3,1)$ & $\C[y, w^2]$ & $(2,3)$\\\hline
 $2$ & $(6,1,2)$ & $\C[x, y^2]$ & $(2,6)$\\
 & $(3,1,2)$ & $\C[x, w^3]$ & $(2,3)$\\\hline
 $1$ & $(6,1,1)$ & $\C[x, y]$ & $(2,3)$\\\hline\hline
\end{tabular}
$$
\caption{\label{9cases} The 9 possible actions for rank-1 $\Delta=6$.
The first column is the minimal degree of the Dirac bundle with trivial action of $G$
on the fiber over the origin. }
\end{table}   

\section{Central fiber: physical predictions from confinement}\label{s:cartoon}

The most severe singularities are in the \emph{central fiber} over $\{0\}$.
These singularities control the singularity of the full $\bigstar$-geometry:
we argued in \S.\,\ref{s:cfiber} that the total space of the $\bigstar$-geometry admits a crepant
resolution iff the central fiber is smooth, hence a copy of $\mathbb{P}^r$. 
If we wish to understand the physical meaning of the singularities,
 we need to study directly the central fiber $\mathscr{X}_0$. Looijenga-like theorems give a nice geometric handle on $\mathscr{X}_0$, but
we need physical predictions about the central fiber $\mathscr{X}_0$
to compare with the math theorems on symplectic singularities, and establish a detailed dictionary between physical phenomena and singularities in special geometry. 
At present the math theorems give us interesting informations only when the
SCFT is described by a $\bigstar$-geometry. Most of these $\bigstar$-SCFTs are inherently strongly coupled,
with no regime in which the traditional techniques of QFT may provide any clue.
The only QFTs where we may make independent physical predictions are the weakly coupled ones.
The $\bigstar$-SCFTs with a weakly-coupled Lagrangian are:
\begin{itemize}
\item[(1)] when $G$ is a Weyl group $W(\mathfrak{g})$ we have
$\cn=4$ SYM with gauge algebra $\mathfrak{g}$ (seen as a $\cn=2$ by forgetting half of its supersymmetries); 
\item[(2)] when $G=W(C_r)$ we also have $\cn=2$ SYM with gauge group
$Sp(r)$ coupled to 4 fundamentals and one antisymmetric hypermultiplet. This SCFT
is described by a ``perfect''
$\bigstar$-geometry which is root, principal, and special crepant.
\end{itemize}
As stressed before, 
 $\cn=4$ SYM with gauge algebra $\mathfrak{g}$ is not unique:
while the local physics depends only on $\mathfrak{g}$, at the global level
there are inequivalent ``readings'' of the SCFT \cite{readings}. 
This subtlety is important, but in this section we ignore the issue. 
In the language of geometry this means that we work ``modulo isogeny'', that is,
we feel free to replace the actual geometry with an isogeneous one which is simpler
to analyze. 
  \textbf{Fact \ref{isoroot}} says that any $\bigstar$-geometry is isogeneous to
at least one \emph{root} $\bigstar$-geometry whose central fiber is described by nice
math theorems. 
For $\cn=4$ SYM with gauge group the simply-connected version of  $Sp(r)$, $E_8$, $F_4$, and $G_2$, the $\bigstar$-geometry is automatically root.
%
\smallskip
%

Our logic follows the one used by Seiberg and Witten \cite{SW1} to show that monopoles condense in $\cn=1$ SYM with $\cg=SU(2)$,
thus checking the non-perturbative mechanism responsible for color confinement.
We invert their argument: we start from the well established fact that $\cn=1$ SYM 
confines color, is gapped, and breaks chiral symmetry for all gauge groups $\cg$,
and look what this implies for $\cn=2$ QFTs and their special geometry.

\subsection{Interpolating $\cn=1$ model and color confinement}\label{s:interpola}
We consider 4d $\cn=1$ SYM, with a simple simply-connected gauge group $\cg$, coupled to three chiral supermultiplets
$X_1$, $X_2$, $X_3$, in the adjoint representation of $\cg$, with superpotential
\be\label{76zzq12}
W(X_1,X_2,X_3)=W(X_i)_0+m\big(\mathrm{tr}\,X_1^2+
\mathrm{tr}\,X_2^2+\mathrm{tr}\,X_3^2\big)
\ee
where
\be
W(X_i)_0\propto f_{abc}\,\epsilon^{ijk}X_i^a X_j^b X_k^c
\ee
 is the cubic superpotential of $\cn=4$ SYM (written with $\cn=1$ superfields).
 We write
 \be
 \tau= \frac{\theta}{2\pi}+\frac{4\pi i}{g^2}\in\mathbb{H},
 \ee
 for its gauge coupling.
As $m\to0$ we get back the $\cn=4$ theory that we see as $\cn=2$ SYM coupled to a massless adjoint 
hypermultiplet.
The model \eqref{76zzq12} with $m\neq0$ has been studied in \cite{donagi1,WWW,strongS}.
The $F$-term equations are
\be
[X_i,X_j]=m\,\epsilon_{ijk} X_k
\ee
so that the matrices $X_i\in\mathfrak{g}$ yield a representation of $\mathfrak{sl}(2,\C)$;
using the $D$-term equations and dividing by the complexified gauge group,
we conclude that the SUSY vacua  correspond to the
homomorphisms of compact Lie algebras
\be\label{homo}
\rho\colon\mathfrak{su}(2)\to\mathfrak{g}
\ee
identified modulo conjugacy \cite{donagi1}. 
The unbroken gauge group $\tilde\cg$ at each vacuum is the centralizer in $\cg$ of the image of
$SU(2)$. The light degrees of freedom are the $\cn=1$ vector supermultiplets of $\tilde\cg$.
If this group is semisimple, the light degrees of freedom are confined and the vacuum is gapped,
otherwise we have a Coulomb-phase vacuum.
The authors of ref.\cite{donagi1}
perform the analysis in two steps: first they give mass to $X_1$ and $X_2$, getting
$\cn=2^*$ SYM thus lifting away the Higgs branch,
 and then perturb this theory by adding
a small mass for $X_3$, finding the localization of the several vacua at appropriate loci
of the $\cn=2^*$ Coulomb branch.

To analyze the central fiber $\mathscr{X}_0$ we focus on the vacua which are localized 
at the origin of $\mathscr{C}$. These are the vacua which do no break the global $SO(3)$ symmetry
which rotates the $X_i$'s: they are the only vacua that don't escape at infinity in field space as $m\to\infty$.

\paragraph{Witten index.} We consider the
Witten index $\mathrm{Tr}(-1)^F$ of the $\cn=1$ model \eqref{76zzq12}.
As usual  
 we put the theory in a large periodic box $(S^1)^3$ to make the computation well-defined.
The index $\mathrm{Tr}(-1)^F$ is then independent of $m$ as long as it is not zero nor infinity. 
The $\cn=1$ chiral ring is a product of Frobenius algebras associated to each homomorphism \eqref{homo}
\be
\mathcal{R}=\prod_\rho \mathcal{R}_\rho.
\ee
Since we are interested only in the $SO(3)$-invariant vacua, we need to understand only
the factor $\mathcal{R}_0$ associated to the trivial homomorphism $\rho=0$.  
These are the vacua which survive in the strict limit $m\to\infty$,
and hence $\dim\mathcal{R}_0$ is
the index of pure $\mathcal{N}=1$ SYM with the same gauge algebra $\mathfrak{g}$.
The crucial observation \cite{WittenN1} is that this last index is fully determined  by the fact that $\cn=1$ SYM is a gapped confining theory
with spontaneous breaking of its discrete $\Z_{2h^\vee}$ chiral symmetry\footnote{\ $h^\vee$ is the dual Coxeter number of the gauge Lie algebra $\mathfrak{g}$.} down to $\Z_2$. Thus one has $h^\vee$
isolated vacua and no Goldstino; hence
\be
\mathrm{Tr}(-1)^F=h^\vee.
\ee
We follow Witten's elegant analysis \cite{WittenN1} (see also \cite{kac1,borel}); however, for our geometric purpose,
we will need to reinterpret his results in a fancier way (already pointed out by Acharya and Vafa
\cite{AV}) that we review in detail below to highlight its geometric meaning.
Note that the discrete chiral symmetry of $\cn=1$ SYM may be seen as invariance under periodic
shifts of the Yang-Mills vacuum angle
\be\label{shiftsym}
\theta\leadsto\theta+2\pi n,\qquad n\in\Z,
\ee
(cf. footnote 12 of \cite{WittenN1})
and, when so interpreted, it is clearly an exact symmetry of the model \eqref{76zzq12} for all $m$.


\paragraph{Interpretations of the index.}
In abstract terms we may think of $\mathrm{Tr}(-1)^F$ as the ``Euler characteristics $\chi(\cm)$ 
of the space $\cm$ of low-energy configurations''. Indeed, for the purpose of computing the index, we are allowed to replace the theory with \emph{any one} of its legitimate low-energy effective descriptions, 
in particular a SUSY $\sigma$-model with target $\cm$.
However the actual story is more subtle, and one may easily go astray:
to get the correct result we must be very careful with what we mean by the 
``space of low-energy configurations'' and by ``Euler characteristics''. Neither notion has its obvious meaning.

Consider a weakly-coupled SUSY theory with 4 supercharges.
In general the manifold $\mathscr{B}$ of bosonic low-energy configurations contains 
low-lying locally-propagating degrees of freedom as well as \emph{purely topological} global modes:
to get the correct space $\cm$ we must disentangle the two sets.
Luckily,\footnote{\ The geometric description in the main text is a bit imprecise. In the situation we are interested there is a surjective map $\mathscr{B}\to\cm$ with section, so $\cm$ may be seen as a subspace of $\mathscr{B}$. $\mathscr{B}$ is non-connected, and the fibers over different points may  belong to different components.} the sub-manifold $\cm\subset\mathscr{B}$ of low-lying propagating fields has a simple characterization:\footnote{\ For a more detailed analysis see chap.\! 2 of \cite{mybook}.}
in a supersymmetric theory the propagating local degrees of freedom must come in Bose-Fermi pairs,
whereas the topological modes may be realized in purely bosonic terms. Hence 
$\cm\subset\mathscr{B}$ is the integral submanifold of the distribution $\mathfrak{F}\subset T\mspace{-1.5mu}\mathscr{B}$ spanned by the SUSY variations of the bosons $\delta_\epsilon\Phi\in T\!\mathscr{B}$.  
In other words: $\cm\subset\mathscr{B}$ is a submanifold whose 
 tangent bundle $T\!\cm\subset T\!\mathscr{B}$ is spanned by the low-energy propagating fermionic modes.
So defined $\cm$ is automatically K\"ahler \cite{mybook}, as required for the target space of
a 4-supercharge $\sigma$-model, whereas the ambient bosonic space 
$\mathscr{B}$ is not even complex in general. The Witten index of the original theory
is then the index of the 4-supercharge $\sigma$-model with target $\cm$, 
which is typically coupled to a tricky topological sector locally parametrized by the normal bundle\footnote{\ This is highly oversimplified: $\mathscr{B}$ is often not smooth and the notions of distributions $\mathfrak{F}\subset T\mspace{-1.5mu}\mathscr{B}$ and normal bundles are not that innocent. These subtleties are crucial to get the right physics of confimenent. However singularities are in real codimension at least 4, and we shall worry about them in later paragraphs.} of $\cm$ in $\mathscr{B}$.

\paragraph{Refined index.}
The Witten index admits in $\cn=1$ SYM  an important refinement \cite{WittenN1}.
Morally speaking, all the cohomology groups of the low-energy effective manifold -- and not just its
Euler number -- are invariant under deformations of continuous parameters.
More properly,
we identify the refined index with the vector space $\mathbf{H}^\bullet$ of SUSY vacua
 endowed with 
the effective action $\Z_{2h^\vee(G)}/\Z_2 \curvearrowright \mathbf{H}^\bullet$
 of the chiral symmetry group, that is, with a decomposition
into $\Z_{h^\vee(G)}$ characters
\be
\mathbf{H}^\bullet=\bigoplus_{\ell\in\Z/h^\vee(G)\Z}\mathbf{H}^\ell.
\ee
The prediction of confinement is that $\mathbf{H}^\bullet$ is the regular representation of
$\Z_{h^\vee(G)}$.

\paragraph{More refinements. The chiral ring $\mathcal{R}$.} We may refine the $\cn=1$ SYM Witten index even further.
We consider the situation in which in a half-space we have the SUSY
vacuum of degree $k\bmod h^\vee$ and in the other half-space a different SUSY vacuum of
degree $k^\prime\bmod h^\vee$. The two half-spaces are connected by a BPS domain wall
\cite{AV}. The number of distinct BPS domain walls which interpolates between the two
SUSY vacua depends only on
\be
s\equiv k-k^\prime\bmod h^\vee
\ee
 by cyclic symmetry. The BPS multiplicities $\mu_s$
satisfy the reflection symmetry $\mu_{h^\vee-s}=\mu_s$. The $[h^\vee/2]$-tuple $\{\mu_s\}$
 of independent domain-wall multiplicities may be seen as a further refinement of the index.
 As explained in \cite{on}, these multiplicities endow \emph{inter alia} $\mathbf{H}^\bullet$
 with the structure of a Frobenius $\C$-algebra with a multiplication table and a Frobenius trace $\mathfrak{f}$.
 The space $\mathbf{H}^\bullet$ of SUSY vacua with these algebraic structures is the chiral ring $\mathcal{R}$
 \cite{tt*,on,cRing}.

\paragraph{Justifications and change of perspective.} Before going on with the argument, we pause a moment to
present some justification of our claims about the relations between $\mathscr{B}$
and $\cm$ for the benefit of the skeptical reader. Since the refined index $\mathbf{H}^\bullet$ is independent of the lengths
of the three circles in the $(S^1)^3$ box, we may make one of them very small,
and then work with the effective 3d theory obtained by circle compactification of the 4d IR effective theory, 
which we now quantize in a 2-torus $(S^1)^2$.
Roughly speaking the resulting 3d theory must be a $\sigma$-model:
its construction is the $\cn=1$ version of the Gaiotto-Moore-Neitzke method
to analyze the SUSY protected sector of 4d $\cn=2$ theories and prove the wall-crossing formula \cite{GMN1}. 
The main point is that a 4d vector $A_\mu$ decomposes into a 3d gauge vector $A_i$ ($i=0,1,2$)
and an adjoint \emph{periodic} scalar $A_3$. In 3d a gauge vector is dual to a second \emph{periodic} scalar
$\sigma^a$
\be\label{theduality}
dA^a=\ast d\sigma^a.
\ee
Thus $r$ 4d Abelian vectors combine in 3d in $2r$ scalars
taking value in a torus $(S^1)^{2r}$ which gets the structure of a polarized Abelian variety of dimension $r$
 from the Dirac pairing and the gauge coupling $\tau_{ij}$. The fiber $\mathscr{X}_u$
at a generic point of the Coulomb branch $u$ is the Abelian variety which arises in this way.
Thus, after performing the duality \eqref{theduality}, in a free 3d Abelian theory all the \emph{propagating} bosonic
degrees of freedom are represented by scalars which take value in a K\"ahler manifold $\ck$
(hyperK\"ahler for $\cn=2$ \cite{GMN1}). When the 4d vectors are not 
a bunch of free Abelian gauge fields, this cannot be the full story: the K\"ahler $\sigma$-model must be, in addition, coupled to a topological sector which does not propagate any additional local degree of freedom.\footnote{\ The last condition follows from equality in the numbers of \emph{local} bosonic and fermionic degrees of freedom.} When the parent 4d theory has a weakly-coupled
Lagrangian formulation, the extra topological sector of the 3d theory is explicitly known: it takes the form
of a Chern-Simons (CS) theory with a much bigger 3d gauge group $\mathscr{G}$ (which may be non-compact), and a specific CS level-matrix. 
The CS vector fields \emph{do not} propagate local modes, as it is attested by the fact that they have \emph{no Fermi partners} despite unbroken supersymmetry. 
The scalars parametrize the K\"ahler manifold $\ck$, the fermions are sections of $T\mspace{-1mu}\ck$, while the CS vectors
live in the directions of the bosonic field space $\cb$ which are normal to the propagating submanifold $\ck\subset\cb$. This is the
de Wit-Herger-Samtleben duality \cite{dWHS}. The exact Lagrangian for the 3d $\sigma$-model
coupled to Chern-Simons which arises by dimensional reduction (as contrasted to compactification!\footnote{\ Cf.\! footnote \ref{SWW3}.}) from weakly-coupled 4d $\cn=4$ SYM
can be read from \S.\,8.4 of \cite{mybook}. The exact 3d Lagrangian describing the interpolating model \eqref{76zzq12} when $R$ and $\tau$ are large
may be obtained (in principle) using those techniques. However we are \emph{not} interested in the exact theory,
but only to its extreme IR limit, i.e.\! to the low-energy manifolds $\cm$ and $\mathscr{B}$,
not to their exact siblings $\ck$ and $\cb$. Moreover we are mainly interested in 4d theories
which \emph{do not} have a weakly-coupled Lagrangian formulation, where we may figure out $\cm$
by index theory (or, in favorable circumstances, by Seiberg-Witten methods), while we have no clue about their detailed dynamics. 
\medskip

We may go one step further, and take \emph{two} of the three circles to be very small,
and work in the 2d model obtained by compactying the interpolating model on $(S^1)^2$. 
This is a  2d (2,2) $\sigma$-model with target $\cm$ which is endowed with
extra topological sectors and subtle quantum structures. Having done that, we are allowed to apply to our problem the
panoply of exact non-perturbative techniques for 2d (2,2) theories: $tt^*$ geometry \cite{tt*}, mirror symmetry \cite{hori},
$4d/2d$ correspondences \cite{4d2d}, \emph{etc.}

\medskip

We shall use the 4d, 3d, and 2d viewpoints interchangeably, according to convenience.  

\paragraph{Identifying $\cm$: IR symmetries.}
Our goal in this subsection is to determine the (abstract) IR effective target space $\cm$ for the interpolating model
\eqref{76zzq12} in the $SO(3)$-invariant vacua, with large $m$,
quantized in a finite box $(S^1)^3$. 
\emph{A priori} the refined Witten index $\mathbf{H}^\bullet$ identifies the effective target
space $\cm$ only at the level of $Q$-cohomology. 
However we have a lot more informations about $\cm$ than that.
First we know that $\cm$ has a natural K\"ahler structure. Replacing it by its normalization,
$\cm$ is a (possibly singular) normal complex analytic space.
From a physical perspective we expect other geometric aspects of $\cm$
to be robust against continuous deformations of the parameters $m$, $\tau$, and $R$: all these SUSY-protected geometric properties of $\cm$ may be reliably recovered from the physics of confinement which governs the $m\to\infty$ limit. In a 4-supercharge system the K\"ahler metric \emph{is} renormalized, while the complex structure is SUSY protected; therefore the group of holomorphic automorphisms of $\cm$
is also SUSY-protected as are its Hodge numbers $h^{p,q}(\cm)$. Assume $h^{1,1}(\cm)=1$.
Then, for large $m$ and $-i\tau$,
the \emph{compact} holomorphic automorphisms of $\cm$ play the role of IR emergent symmetries in the  effective low-energy description as a $\sigma$-model with target $\cm$. 
Indeed, thinking in terms of the effective 2d (2,2) theory,
in the extreme IR we are reduced to the $tt^*$ geometry description, which is sensitive to the complex structure but totally independent of the K\"ahler metric except for its class \cite{tt*,p1}; since $h^{1,1}(\cm)=1$, we may find a K\"ahler metric in the class which is invariant under the group of \emph{compact} holomorphic automorphisms. Hence the compact holomorphic automorphisms of the effective K\"ahler manifold $\cm$ are automatically IR emergent symmetries.
The perturbations $\delta m$, $\delta\tau$, $\delta R$ are inert under these emergent symmetries, which are then robust under their variations. In particular, if in the limit of large $m$ and $\mathrm{Im}(\tau)$
the IR effective target space is a weighted projective space $\mathbb{P}(d_0,\dots,d_r)$,
the low-energy effective target of the effective theory in the $SO(3)$-invariant vacua
must be $\mathbb{P}(d_0,\dots,d_r)$ also for small positive $m$ and $\mathrm{Im}(\tau)$. Its K\"ahler metric, however, is an intricate function of everything. 

Finally, to confirm that we have identified our IR effective target space \emph{exactly,}
we may use it to compute the domain-well spectrum and check that
it reproduces the physically correct one (as indeed it does \cite{AV}).

\subsection{Witten's computation}
For $\cn=1$ SYM quantized in a periodic box, $\mathscr{B}$ is the space
of flat $\cg$-connections on the 3-torus $(S^1)^3$ \cite{WittenN1}, that is, the space 
of commuting \emph{triples} $U_1,U_2,U_3$ in $\cg$ modulo overall conjugacy.
$\mathscr{B}$ is not connected
in general. The connected component of $\mathscr{B}$
of maximal dimension, $\mathscr{B}_\text{max}$,
has real dimension $3r$ ($r$ being the rank of $\cg$),
hence is not a complex space, let alone K\"ahler.
Taking into account Gauss' law, the subspace $\cm_1\subset\mathscr{B}_\text{max}$
whose tangent space is spanned by the low-energy Fermi modes, is (see eq.(4.13) of \cite{WittenN1})
\be\label{whatmmax}
\cm_1\simeq (\mathbf{T}_\cg\times\mathbf{T}_\cg)/W(\cg),
\ee 
where $\mathbf{T}_\cg$ is the maximal torus in the simply-connected gauge group $\cg$
and $W(\cg)$ its Weyl group. 
$\cm_1$ is the space of commuting \emph{pairs} $U_1,U_2$ in $\cg$ modulo conjugacy,
and the surjective map $\mathscr{B}_\text{max}\to\cm_1$ just forgets $U_3$.
As we shall review momentarily,
$\cm_1$ is canonically a normal projective variety over $\C$,
hence K\"ahler and compact. 
The contribution to the index $\mathrm{Tr}(-1)^F$ from the component $\mathscr{B}_\text{max}$ of $\mathscr{B}$
is then 
\be
\chi(\cm_1)_\text{top}
\ee
 where $\chi(\cdot)_\text{top}$ is the
 \emph{topological}
Euler characteristics. In general this is \emph{not} the full Witten index.
We shall present the full computation in a moment. 
The component $\mathscr{B}_\text{max}$ contributes
to the \emph{refined} index $\mathbf{H}_1$ a direct summand
(cf.\! eq.(4.13) of \cite{WittenN1})
\be  
\mathbf{H}_1^\bullet=\bigoplus_{\ell=0}^r \mathbf{H}_1^\ell
\ee
where for $\ell=0,1,\dots,r$ 
\be
\mathbf{H}_1^{\ell}=\bigoplus_{p+q=2\ell} H^{p,q}(\cm_1)\simeq \left(\bigoplus_{p+q=2\ell} H^{p,q}(\mathbf{T}_\cg\times\mathbf{T}_\cg)\right)^{\!\!W(\cg)}=\C
\ee
 since the Hodge numbers are $h^{p,q}(\cm_1)\simeq\delta_{p,q}$.
 To get the refinement of the refinement, notice that $\mathbf{H}_1^\bullet$
 is a ring and we have a trace map
 \be\label{fro}
\mathfrak{f}\colon \mathbf{H}_1^{\bullet}\to \C,\qquad \mathfrak{f}(\xi)=\int_{\cm_1}\xi\qquad \mathfrak{f}(\eta)=0\ \text{for }\eta\in\mathbf{H}_1^\ell, \ \ell\neq r,
 \ee
 of degree $-r$.
 
 However this is not yet the full story.
To get the full refined index $\mathbf{H}^\bullet$ one has to  add to $\mathbf{H}_1^{\ell}$ the contributions from the other connected components of the moduli space $\mathscr{B}$
of flat $\cg$-bundles over $(S^1)^3$ \cite{WittenN1}. However below we shall present
a fancier alternative interpretation, where taking the sum over components is \emph{not} necessary.
When one collects all contributions to the refined index, one gets $h^\vee(G)$
SUSY ground states, all bosonic, and the total refined index $\mathbf{H}^\bullet$ carries the regular representation 
of the broken symmetry $\Z_{h^\vee(G)}\simeq \Z_{2h^\vee(G)}/\Z_2$, i.e.\! of the shift symmetry \eqref{shiftsym}. We stress again that this result
(Witten's clock rule \cite{WittenN1,borel})
is required by consistency with the physics of confinement and can be deduced from it \cite{WittenN1}. 
\paragraph{The orbifold $\cm_1$.} 
The space in eq.\eqref{whatmmax} is described by
the Looijenga theorem \cite{loo} (see also \cite{kac2,jap,rus1,rus2}).
Let us recall the story to make contact with the discussion in \S\S.\,\ref{s:root} and \ref{s:strong}.
By definition $\mathbf{T}_\cg=\R^r/Q^\vee$ where $Q^\vee$ is the dual root lattice of $\cg$.
Then we may choose
a complex structure on the torus $\mathbf{T}_\cg\times\mathbf{T}_\cg$
which makes it into a Looijenga Abelian variety
\be
A_\cg=\C^r/(Q^\vee+\tau Q^\vee)\qquad \tau\in\mathbb{H}.
\ee
 $A_\cg$
is a root variety in the sense of \S.\,\ref{s:root} which is equipped with
a $W(\cg)$-invariant polarization $c_1(\mathscr{L})$ where
$\mathscr{L}$ is a $G$-equivariant ample line bundle
such that $G$ leaves invariant the fiber over $0\in A$. 
The Looijenga theorem \cite{loo} (a particular instance of the results
we shall review in \S.\,\ref{s:strong}) states that
$\cm_1\simeq A_\cg/W(\cg)$, as a normal projective variety is
 the (complex) weighted projective space \cite{dolgachev}
\be\label{uyqazqwer}
\cm_1=\mathbb{P}(d_0,d_1,\dots,d_r),
\ee
that is, it is the space of non-zero vectors in $\C^{r+1}$ modulo the equivalence relation
\be\label{whatWP}
(x_0,x_1,\dots,x_r)\sim (\lambda^{d_0}x_0,\lambda^{d_1}x_1,\dots,\lambda^{d_r}x_r),\quad \lambda\in\C^\times,
\ee 
where the weights $d_i$ are the dual Coxeter numbers of the gauge group $\cg$,
 i.e.\! the coefficients of the expansion of the maximal coroot in terms of the simple coroots
\be
\theta^\vee =\sum_i d_i\, \alpha^\vee_i,\quad\text{with}\quad \sum_i d_i=h^\vee,
\ee
where $h^\vee$ is the dual Coxeter number of $\cg$. For all $\cg$ we have 
\be\label{whatDD}
d_0=1\quad
\text{and}\quad \gcd(\hat d_0,d_1,\dots,d_r)=1,
\ee 
i.e.\! the weights $\{d_0=1,d_1,\dots,d_r\}$ are automatically well-formed in the Delorme sense \cite{dolgachev,delorme}\footnote{\ That is $\gcd(d_0,\dots,\hat d_i,\dots,d_r)=1$ for all $i$ (a hat over a weight means omission from the list).}.
Eqs.\eqref{whatWP},\eqref{whatDD} show, in particular, that $\cm_1$
 is a complex orbifold with at worst finite cyclic quotient singularities in codimension at least 2. 
   In view of our conclusions in the previous subsection, eq.\eqref{uyqazqwer} identifies $\cm_1$ as a concrete \emph{complex (algebraic) variety,}
 and not merely at the level of ``cohomology''.
 
 \medskip
 
 When $d_i=1$ for all $i$, that is, when $\cg=SU(N)$ or $Sp(n)$,
$\cm_1$ is smooth, $\mathscr{B}_\text{max}$ is the \emph{only} component of $\mathscr{B}$,
the Witten index of $\cn=1$ SYM is just the topological Euler number $\chi(\cm_1)_\text{top}=r+1$,
and the refined index is $\mathbf{H}^\ell= H^{2\ell}(\cm_1)$, with $\ell=0,\dots h^\vee-1$ in agreement with the predictions from color confinement. For the other gauge groups one has to take into account the other connected components of $\mathscr{B}$.

 \paragraph{The contribution from the other components.}
For the other groups, let $k>1$ be a non-trivial divisor of some of the $d_i$'s. The locus 
\be
\cm_k\equiv \{x_i=0\quad \text{for }k\nmid d_i\}\subset \mathbb{P}(d_0,d_1,\dots,d_r)
\ee
is a $\Z_k$ orbifold singularity of $\cm_1$ whose dimension $r(k)$ equal the number of
$d_i$'s divisible by $k$ minus 1. In facts, let $d_{i_s}$ ($s=0,1,\dots,r(k)$) be the list of the $d_i$'s
with $k\mid d_i$. One has
\be
\cm_k\simeq\mathbb{P}(d_{i_0}/k,d_{i_1}/k,\dots,d_{i_{r(k)}}/k).
\ee
We observe in passing that all $\cm_k$ that arise in this way for some
$\cg$ and $k$ are equal to $\cm_1$ for some other simple Lie group.
In particular at least one of the $d_{i_s}/k$ is equal $1$, that is, the divisor $k$
is aways one of the dual Coxeter labels $d_i$.

To each such divisor $k$ there are associated $\phi(k)$
components of the moduli $\mathscr{B}$ of flat $\cg$-bundles on $(S^1)^3$.
Here $\phi(\cdot)$ is Euler's totient function i.e.\!
$\phi(k)=|(\Z/k\Z)^\times|$.
Each one of these connected components contributes to the refined index a direct summand
\be
H^{2\bullet}(\cm_k)[\delta_{k,\sigma}]\equiv H^{2\bullet+2\delta_{k,\sigma}}(\cm_k)
\ee
where the shift in degree $\delta_{k,\sigma}\in\Z/h^{\vee}(\cg)\Z$ depends on $k$ and on the particular component
 according to
Witten's ``clock rule'' \cite{WittenN1,borel,vectorF}. The $\cn=1$ SYM refined index is then
\be\label{Wittres}
\mathbf{H}^\bullet=\bigoplus_{k\in\{d_i\}}\bigoplus_{\delta\in (\Z/k\Z)^\times} H^{2\bullet+2\delta_{k,\sigma}}(\cm_k).
\ee
 The clock rule entails that \textit{(i)} all cohomology either corresponds to Bose ground states
or to Fermi ones, so that there is no cancellation in pairs, and \textit{(ii)} $\mathbf{H}^\bullet$ is the regular representation of $\Z/h^{\vee}(\cg)\Z$.
 In conclusion \cite{WittenN1}: for $\cn=1$ SYM with simple, simply-connected gauge group $\cg$,
the Witten index is
\be\label{Wittres2}
\mathrm{Tr}(-1)^F=\dim\mathbf{H}^\bullet= \sum_{k\in\{d_i\}} \phi(k)\, \chi(\cm_k)_\text{top}\equiv h^\vee.
\ee
Here $\chi(\cdot)_\text{top}$ is the topological Euler number, that is, the signed sum of the Betti numbers
as computed by \emph{classical} cohomology. See \cite{WittenN1} for the corresponding computation when $\cg$ is not simply-connected.

In addition $\mathbf{H}^\bullet$ carries the structure of a Frobenius algebra.
It may be written in term of a single generator $X$ of degree $1\bmod h^\vee$,
which satisfies the relation $X^{h^\vee}=\text{const.}$ The Frobenius trace map 
must have a definite degree, hence degree $-r$ by eq.\eqref{fro}
i.e.
\be
\mathfrak{f}(X^\ell)=0\quad \ell\neq r\bmod h^\vee.
\ee

  \subsection{Alternative interpretations: Mirror symmetry}\label{mirror}
  We wish to reinterpret Witten's result \eqref{Wittres},\eqref{Wittres2}
  in a suggestive way. The following analysis is essentially equivalent 
  to the one in \cite{AV}, that we frame in our own language to make
  our point about quantum geometry in the $\cn=2$ context.
  
  \medskip
  
  In the paragraph about justifications (\S.\,\ref{s:interpola})
  we argued that one may compute the 4d $\cn=1$ index 
  using a
  2d (2,2) $\sigma$-model with additional topological interactions.
  From this point of view it would be unnatural to have a disconnected target space, so we propose that
  the target space of the 2d $(2,2)$ model is just the connected orbifold
  $\cm_1$ without extra components. This proposal may look to be in contradiction with
  the value of the index as fixed by confinement, but it is not so. 
The Witten index
  of a 2d $\sigma$-model is not uniquely defined by its underlying
  complex manifold. Roughly speaking, there are diverse notions of
  ``Euler number'' which compute the Witten indices
  of physically \emph{inequivalent} $(2,2)$ $\sigma$-models whose target spaces
  have the same underlying complex manifold. 
  Which ``Euler characteristics'' gives the physically correct
  index depends on the specific ``non-classical'' geometric
  structures on the target space.
  A basic example are the 
\emph{stacky} structures especially 
emphasized by the authors of \cite{gerb1,gerb2,gerb3}. ``Non-commutative''
geometries (in various senses) also play a role here.
Stated in plain English:
\begin{quote}
The Witten index of a 2d $(2,2)$ $\sigma$-model is not the Euler characteristics of the effective target $\cm$
as computed by classical cohomology, but its Euler number defined by the appropriate  
``quantum cohomology'' whose precise definition depends on non-classical geometric structures on the SUSY $\sigma$-model. 
\end{quote}
When the refined index $\mathbf{H}^\bullet$ is defined, it is
given by the full \emph{quantum} cohomology, not by the de Rham one.
The quantum cohomology is -- by definition -- sensitive to ``non-classical''
geometric structures on the target $\cm$.
The simplest instance of this phenomenon is the ``orbifold Euler characteristics''
as defined by string theory \cite{orbifold1,orbifold2,orbifold3}. When the target is a ``good'' orbifold $X/G$
 the usual topological Euler characteristic
  \be
  \chi(X/G)_\text{top}=\frac{1}{|G|}\sum_{g\in G} \chi(X^g)_\text{top},
  \ee
  ($X^g\subset X$ is the fixed locus on $g\in G$),
should be replaced by the \emph{orbifold Euler characteristic} \cite{orbifold1,orbifold2,orbifold3}
  \be\label{76zzzq123}
  \chi(X/G)_\text{orb}=\sum_{[g]} \chi\big(X^g/C(g)\big)_\text{top}
  \ee
  where the sum now is over the conjugacy classes $[g]$ of the finite group $G$ and
  $C(g)\subset G$ is the centralizer of $g$. However even in this simple situation
the story is rather subtle.
There are at least two inequivalent notions of complex analytic ``orbifolds''\footnote{\ See e.g.\! \S.\,4.4 of \cite{sasaki} for a discussion.}:
 \textit{(i)} as quotients $X/H$ by a non-freely acting group $H$, or \textit{(ii)}
 as complex spaces whose analytic structure has finite-quotient singularities.
Before applying the formula \eqref{76zzzq123} we need to understand which one is the appropriate
 notion of ``orbifold'' which applies to our particular physical problem -- 
  for a given pair $(X,G)$ it may be either one depending on
the ``non-classical'' geometric structures of the $\sigma$-model. 
To make the point obvious, we borrow an example from \cite{orbifold3}. Consider $\mathbb{P}^1$:
as analytic spaces $\mathbb{P}^1\simeq\mathbb{P}^1/\Z_n$ for all $n$:
the space $\mathbb{P}^1/\Z_n$ is not an orbifold in the analytic sense although it is a quotient by a non-freely acting group with two fixed points.
Applying the formula \eqref{76zzzq123} to the quotient we get 
\be\label{uyqazq}
\chi(\mathbb{P}^1/\Z_n)_\text{orb}=2n, 
\ee  
 which is not the \emph{topological} Euler characteristic of the underlying analytic manifold $\mathbb{P}^1$.
 Hirzebruch and H\"ofer conclude from this example that the string-theorists forgot to state the precise
 conditions under which formula \eqref{76zzzq123} is supposed to work  \cite{orbifold3}. But the string-theorists were perfectly right: indeed, there are 2d (2,2) $\sigma$-models with target space $\mathbb{P}^1$
 which realize all values \eqref{uyqazq} (and also the odd integers) as their Witten indices.
 The underlying classical manifold is the same one for all these models, but their \emph{quantum} geometry is different as we are going to review.
 
 \subsubsection{Two inequivalent orbifold structures}\label{s:two}
 Before going on with the argument, we caution the reader against a potential source of confusion.
 As emphasized above, the orbifold structure of the central fiber is \emph{not} unique and which structure is the relevant one depends on the physical question we ask.
 The space $A_\cg/W(\cg)$ may be seen as an orbifold in (at least) two quite different ways:
 as the naive quotient of the Looijenga Abelian variety $A_\cg$ by the non-free action of the
 Weyl group $W(\cg)$ or -- replacing $A_\cg/W(\cg)$ with its normalization $\mathbb{P}(d_0,d_1,\dots,d_r)$ --
 as the quotient of $\mathbb{P}^r$ by $\Z_{\mathrm{lcm}(d_i)}$. The two orbifold structures
 lead to quite different orbifold Euler numbers. 
 In this note we are interested in the second structure.
 This remark applies to the central fibers of all $\bigstar$-geometries,
 non-rigid as well as rigid.
 
To illustrate the difference between the two notions, we consider the four rank-$1$ $\bigstar$-geometries with Coulomb dimension $\Delta\equiv m\in\{2,3,4,6\}$.
 $E_m$ is an elliptic curve with complex multiplication by $\mathbb{Q}(e^{2\pi i/m})$
($E_m$ is an arbitrary elliptic curve for $m=2$). The central fiber is $E_m/\Z_m$
 and its normalization
 $\mathbb{P}^1$ is automatically smooth, so the orbifold Euler number in the second sense is just the classical one $2$.
 Instead the orbifold Euler number in the first sense is
 \be
 \chi(E_m/\Z_m)_{\!\!\!\!\!\!\text{naive}\atop\text{orbifold}}=\begin{cases} 6 \equiv e(I_0^*) & m=2\\
8 \equiv e(IV^*) & m=3\\
9 \equiv e(III^*) & m=4\\
10 \equiv e(II^*) & m=6\\
 \end{cases}
 \ee
where $e(F^*)$ is the Euler number of Kodaira's exceptional fiber
of type $F^*$ \cite{koda1,koda2} which is also
 the topological Euler characteristic of the \emph{resolved}\footnote{\ For $r=1$ the crepant resolution always exists.} special geometry
$\mathscr{X}_\text{res}$ with dimension $\Delta=m$.
This Euler characteristic yields the rank of the Grothendieck group of the
BPS category ($\equiv$ lattice of conserved charges) of the 
MN SCFT with dimension $\Delta=m$, equal to 2 plus the rank of the (maximal) flavor group.

%
%

 \subsubsection{$\cm_1$ as a (smooth) canonical Deligne-Mumford stack}\label{s:stack}
 As emphasized in \cite{gerb1,gerb2,gerb3}, to get the correct quantum geometry of a 2d (2,2)
 $\sigma$-model
 we need to keep track of the stack structure of 
 the target space. This is the only ``quantum'' geometric datum we shall
 consider in this note, and (with some abuse of language) we use the terms ``quantum'' geometry
 and ``stacky'' geometry as synonymous. We regard the low-energy effective manifold $\cm_1$
 as a Deligne-Mumford stack (DM) not a mere complex orbifold.
 We borrow some facts from \cite{barbara}
 especially their \textbf{Example 7.27}.
 
 By a \emph{stacky structure} over an algebraic variety (or scheme)
 $M$ we mean a Deligne-Mumford stack $\mathscr{M}$ together with
 a structure map $\varepsilon\colon\mathscr{M}\to M$ which makes $M$ into its coarse moduli space.
Any variety $M$ with only finite quotient singularities (in codimension at least 2)
is the coarse moduli space of an essentially unique canonical smooth Deligne-Mumford stack $\mathscr{M}^\text{can}$ which enjoys a universal property:
 for all smooth Deligne-Mumford stack $\mathscr{M}$ with  coarse moduli space $M$ and structure map 
 \be
 \varepsilon\colon \mathscr{M}\to M
 \ee 
 there is a unique morphism $\mathscr{M}\to\mathscr{M}^\text{can}$ through which $\varepsilon$ factors. Moreover the locus where 
 \be
 \epsilon\colon \mathscr{M}^\text{can}\to M
 \ee
 is an isomorphism is precisely 
$\epsilon^{-1}(M_\text{sm})$ where $M_\text{sm}$ is the smooth locus of $M$ \cite{barbara}.
These facts apply in particular to the weighted projective spaces. The corresponding 
canonical Deligne-Mumford stack (which is a toric stack in the sense of \cite{barbara}) is the global quotient stack 
\be
\mathbb{P}(w_0,w_1,\dots,w_r)^\text{can}=[(\C^{r+1}\setminus\{0\}/\C^\times]
\ee
where $\C^\times$ now acts through the
(unique) well-formed weights $\{w_0,\dots,w_r\}$ equivalent (at the level of coarse moduli) to the original weights $\{d_0,\dots,d_r\}$.

Going through the physics of color confinement in \emph{pure} $\cn=1$ SYM,
we conclude that the low-energy $\sigma$-model which computes its refined Witten index
has target space the \emph{canonical} Deligne-Mumford stack
$\mathbb{P}(d_0,\dots,d_r)^\text{can}$ of the Looijenga normal projective scheme 
$\mathbb{P}(d_0,\dots,d_r)_\text{sch}$, where the $d_i$'s are the dual Coxeter numbers of the gauge algebra $\mathfrak{g}$. 
That the identification
\be\label{stackM}
\cm_1=\mathbb{P}(d_0,d_1,\dots,d_r)^\text{can}\quad\begin{smallmatrix}\text{as smooth canonical}\\
\text{Deligne-Mumford stacks}\end{smallmatrix} 
\ee
is correct will be checked in the next paragraph using the 4d/2d correspondence and mirror symmetry.
It will also follow from the mathematical analysis in \S.\,\ref{s:strong}.
We claim that the following is true

\begin{fact}\label{mainfac} The refined and unrefined Witten indices for $\cn=1$ SYM with simple
gauge group $\cg$ are
\be
\mathbf{H}^\bullet=H^{2\bullet}(\cm_1)_\text{\rm quantum},\qquad \mathrm{Tr}(-1)^F=\chi(\cm_1)_\text{\rm quantum}.
\ee
\end{fact}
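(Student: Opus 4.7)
Plan of proof. The strategy is to compactify the interpolating $\cn=1$ theory \eqref{76zzq12} on a small two-torus $T^2=S^1\times S^1$, so that by topological invariance the refined index $\mathbf{H}^\bullet$ of the 4d theory equals the $(c,c)$ chiral ring of the resulting 2d $(2,2)$ theory. In the extreme IR this 2d theory is a $(2,2)$ $\sigma$-model whose target is the effective manifold $\cm_1$, coupled to a topological sector which is encoded in the stacky structure $\mathbb{P}(d_0,\dots,d_r)^{\rm can}$ defended in eq.\eqref{stackM}. Then the statement to prove becomes the identity $\mathbf{H}^\bullet=QH^{2\bullet}\bigl(\mathbb{P}(d_0,\dots,d_r)^{\rm can}\bigr)$ between the two graded Frobenius algebras: the physical chiral ring on one side, and the small quantum cohomology of the canonical DM stack on the other.

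First I would establish the quantum cohomology side directly by mirror symmetry in the sense of Hori-Vafa. The mirror of the toric stack $\mathbb{P}(d_0,\dots,d_r)^{\rm can}$ is a $(2,2)$ Landau-Ginzburg model on $(\mathbb{C}^\times)^r$ with superpotential determined by the stacky fan: after the substitution $Y_i=X_i^{1/d_i}$ adapted to the canonical stack, the superpotential takes the polynomial form $W(Y)=\sum_{i=0}^r Y_i^{d_i}$ restricted to the hypersurface $\prod_i Y_i^{d_i}=q$ (equivalently, via a single variable $X$ of degree $1$, $W(X)=X^{h^\vee}-h^\vee\,q\,X$, since $\sum_i d_i=h^\vee$). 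The Jacobi ring of this $W$ is $\mathbb{C}[X]/(X^{h^\vee}-q')$, a graded Frobenius algebra of total dimension $h^\vee$, equipped with the trace $\mathfrak{f}(X^\ell)=\delta_{\ell,r\!\!\mod h^\vee}$ of degree $-r$ dictated by the (anti)canonical class of the stack.

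Second I would match this computation against the physical side. By \textbf{Fact 1} and the discussion of eq.\eqref{stackM}, the chiral ring of pure $\cn=1$ SYM is $\mathcal{R}\mathring{\simeq}\mathbb{C}[X]/(X^{h^\vee}-\Lambda^{3h^\vee})$ with the same Frobenius trace, so the Jacobi algebra above is isomorphic to $\mathcal{R}$ as graded Frobenius algebras. To seal the equivalence I would then verify the equality at the level of \emph{dimensions} by an independent combinatorial route: using the Chen-Ruan / Borisov-Chen-Smith description of the cohomology of the canonical toric stack, the dimension of $H^{2\bullet}(\mathbb{P}(d_0,\dots,d_r)^{\rm can})_{\rm quantum}$ decomposes as a sum over twisted sectors indexed by $k\in\{d_i\}$, and each sector contributes $\phi(k)\,\chi(\cm_k)_{\rm top}$ because the inertia stack has $\phi(k)$ components of coarse moduli $\cm_k\equiv\mathbb{P}(d_{i_0}/k,\dots,d_{i_{r(k)}}/k)$ that are themselves smooth weighted projective \emph{stacks} with the correct well-formed weights. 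Summing over $k$ reproduces exactly Witten's sum \eqref{Wittres2}, $\sum_{k\in\{d_i\}}\phi(k)\chi(\cm_k)_{\rm top}=h^\vee$, now reinterpreted as a single stacky Euler number $\chi(\cm_1)_{\rm quantum}$; the degree shifts $\delta_{k,\sigma}$ encoding Witten's clock rule coincide with the Chen-Ruan age grading on the twisted sectors.

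The main obstacle is the \emph{identification} step, not the computation: one must be sure that the stacky structure selected by the physics is the \emph{canonical} one, and not some non-trivial gerbe refinement (as will be relevant elsewhere in the paper). For this I would invoke the universal property of $\mathbb{P}(d_0,\dots,d_r)^{\rm can}$ recalled in \S\,\ref{s:stack}: since the generic automorphism groups along each codimension-$\geq 2$ singular stratum of the coarse moduli must equal the cyclic stabilizer $\mathbb{Z}_{d_i}$ read off from the Looijenga decomposition $\cm_1\simeq A_\cg/W(\cg)$, any smooth DM stack compatible with the physical inertia data factors through $\mathbb{P}(d_0,\dots,d_r)^{\rm can}$; a secondary check is that the resulting chiral ring has the $\mathbb{Z}_{h^\vee}$ shift symmetry acting as the regular representation, which fixes the remaining sign/root-of-unity ambiguity in $q'$ and matches the 4d $\theta$-periodicity \eqref{shiftsym}. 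With the canonical stack selected, the mirror and Chen-Ruan computations above combine to give both equalities in the \textbf{Fact} simultaneously.
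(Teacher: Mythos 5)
Your core route coincides with the paper's: identify the effective 2d $(2,2)$ target with the canonical Deligne--Mumford stack $\mathbb{P}(d_0,\dots,d_r)^\text{can}$, pass to the Hori--Vafa mirror Landau--Ginzburg model, and read off the chiral ring $\C[X]/(X^{h^\vee}-q')$ of dimension $h^\vee$ with Frobenius trace of degree $-r\bmod h^\vee$ and the $\Z_{h^\vee}$ clock symmetry acting regularly --- this is exactly the computation carried out after eq.\eqref{lgg}. What you add, and the paper does not do in general, is the Chen--Ruan/inertia-stack decomposition: the components of the inertia stack of $\mathbb{P}(d_0,\dots,d_r)^\text{can}$ are indexed by roots of unity of exact order $k\in\{d_i\}$, each of the $\phi(k)$ such roots contributing a copy of $\cm_k$ with an age shift, so the stacky Euler number reproduces Witten's sum $\sum_k\phi(k)\chi(\cm_k)_\text{top}$ \emph{term by term} and identifies the clock-rule degree shifts $\delta_{k,\sigma}$ with the ages. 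The paper only verifies this matching in the low-rank examples $G_2$ and $B_3$ and otherwise relies on the mirror count; your argument makes the correspondence between Witten's disconnected components of flat bundles and the twisted sectors of the connected stack structural rather than numerical, which is a genuine strengthening.

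One slip to fix: the claim that the mirror superpotential is ``equivalently'' the one-variable polynomial $W(X)=X^{h^\vee}-h^\vee qX$ whose Jacobi ring is $\C[X]/(X^{h^\vee}-q')$ is internally inconsistent --- the Jacobi ring of that polynomial is $\C[X]/(X^{h^\vee-1}-q)$, of dimension $h^\vee-1$, not $h^\vee$. The correct count comes from the constrained $(r+1)$-variable form (or the paper's unconstrained $r$-variable form \eqref{lgg}), whose critical locus consists of $h^\vee$ points; the ring $\C[X]/(X^{h^\vee}-q')$ is the resulting chiral ring, not the Jacobi ring of a one-variable reduction of $W$. Since you also state the constrained form correctly and the final answer is right, this does not affect the validity of the argument, but the shortcut should be deleted.
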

In this case the \emph{quantum} Euler characteristic is the stringy orbifold one,
while $H^{k}_\text{\rm quantum}$ is the orbifold cohomology \cite{orbcoh}. 
 A few cases of the claim may be checked easily. When $\cg=SU(N)$ or $Sp(n)$,
 $\cm_1$ is smooth -- hence isomorphic to its canonical Deligne-Mumford
 stack -- thus 
 $\chi(\cm_1)_\text{quantum}=\chi(\cm_1)_\text{top}$ and we have agreement. 
 
In some non-smooth cases where $\cm_1$ has a simple
presentation as a \emph{global} quotient of a smooth manifold,
  the computation may be done directly using the
formula \eqref{76zzzq123}.
For instance, consider the Weyl group $W(G_2)$ of $G_2$
which is isomorphic to the imprimitive Shephard-Todd reflection group $G(6,6,2)$ \cite{ST1}. 
Its reflection presentation makes it obvious that
  $W(G_2)$ is the extension of $\{\pm1\}$ by $G(3,3,2)\simeq \mathfrak{S}_3\simeq
W(A_2)$. Hence
   \be
 (\mathbf{T}_{G_2}\times \mathbf{T}_{G_2})/W(G_1)=
  \Big((\mathbf{T}_{A_2}\times \mathbf{T}_{A_2})/W(A_2)\Big)\Big/\{\pm1\}
 \ee 
 (cf.\! \S.\,3 of \cite{vectorF}). By Looijenga theorem, as an analytic space (and also as a stack)
 \be
 (\mathbf{T}_{A_2}\times \mathbf{T}_{A_2})/W(A_2)=\mathbb{P}(1,1,1)\equiv\mathbb{P}^2,
 \ee 
  is smooth, hence, in the analytic sense, $\cm(G_2)_1$ is a $\Z_2$-orbifold
  not a $G(6,6,2)$-one. The action of $\Z_2$ in homogeneous coordinates
  is $(x_1,x_2,x_3)\to(x_1,x_2,-x_3)$, the quotient is $\mathbb{P}(1,1,2)$
  with homogeneous coordinates $(x_1,x_2,x_3^2)$
  and a $\Z_2$ singularity at $(0:0:1)$.
  Its orbifold Euler character -- as a canonical stack --
  is then
  \be
  \chi(\cm(G_2)_0)_\text{an-orb}=\chi(\mathbb{P}^2/\Z_2)_\text{top}+
  \chi((0:0:1)/\Z_2)_\text{top}= 3+1\equiv 4,
  \ee
  which is indeed $h^\vee(G_2)$, the Witten index of $\cn=1$ SYM with gauge group $G_2$.
  As another simple example, consider the normal subgroup of index $2$
  \be
  W(A_3)\equiv \mathfrak{S}_4\simeq G(2,2,3)\triangleleft
  G(2,1,3)\simeq W(B_3). 
  \ee
  Then
  \be
 (\mathbf{T}_{B_3}\times \mathbf{T}_{B_3})/W(B_3)=
 \Big((\mathbf{T}_{A_3}\times \mathbf{T}_{A_3})/W(A_3)\Big)\Big/\{\pm1\}=\mathbb{P}^3/\{\pm1\},
  \ee
  where, again, $\Z_2$ acts as $(x_0,x_1,x_2,x_3)\to (x_0,x_1,x_2,-x_3)$
  so that $\cm(B_3)_\text{max}=\mathbb{P}(1,1,1,2)$ and 
    \be
  \chi(\cm(B_3)_0)_\text{an-orb}=\chi(\mathbb{P}^3/\Z_2)_\text{top}+
  \chi((0:0:0:1)/\Z_2)_\text{top}= 4+1\equiv 5
  \ee
  which is again the correct Witten index for the gauge group $\mathsf{Spin}(7)$.

  \subsubsection{Mirror symmetry}
Let us present a uniform argument to check \textbf{Fact \ref{mainfac}} for all $\cg$ (see also \S.\,\ref{s:strong}).
We use the 2d (2,2) picture where we have more QFT tools to perform the computation.
Then we are reduced to compute the quantum cohomology of a 2d $(2,2)$ $\sigma$-model
whose target is the canonical 
Deligne-Mumford stack $\mathbb{P}(d_0,d_1,\dots,d_r)^\text{can}$ with well-formed weights.
It is clear from \cite{gerb1,gerb2,gerb3} that this 2d QFT is just the Abelian (2,2) GLSM with $r+1$ chiral superfields
of charges $d_0$, $d_1$, $\cdots$, $d_r$. To reduce the computation of its quantum cohomology
to classical geometric considerations is the job of mirror symmetry.
Using the rules of \cite{hori}, after integrating out the
gauge superfield, we end up with a mirror $(2,2)$ Landau-Ginzburg (LG) model with $r$ (twisted)
chiral superfields and superpotential
 (for $d_0=1$)
\be\label{lgg}
W(X_i)=\sum_{i=1}^r e^{X_i}+t\exp\!\left(-\sum_{i=1}^r d_i X_i\right),
\ee
where the $e^{X_i}$'s are $\C^\times$-valued twisted chiral superfields, i.e.\! we 
identify periodically  
\be\label{perper}
X_i\sim X_i+2\pi i,
\ee
and $t$ is a non-zero complex parameter.
The LG model is gapped, that is, its chiral ring $\mathcal{R}$
is semisimple \cite{on}, and the Witten index $\dim\mathcal{R}$ is simply the number of critical points
of $W(X_i)$.  Let us count them: the equation $dW=0$ yields 
\be
\partial_{X_i}W(X_i)= e^{X_i}-d_i\,t\exp\!\left(-\sum_{i=1}^r d_i X_i\right)=0,
\ee
so the critical values of all variables $e^{X_i}$ are determined in terms of 
the critical values of the single function $\exp(\sum_i d_i X_i)$. Now
\be
\exp\!\left(X_i+\sum_{j=1}^r d_j X_j\right)= d_i\,t,
\ee
and hence
\be
\exp\!\left(h^\vee\sum_i d_i X_i\right)\equiv\exp\!\left((1+\sum_{j=1}^r d_j)\sum_i d_i X_i\right)=t^{h^\vee-1}\prod_{i=1}^r d_i^{d_i}\equiv C^{-h^\vee},
\ee
for a constant $C$.
Then
\be
e^{X_i}=C\,d_i\,t\,e^{2\pi i k/h^\vee}\qquad k=0,1,\dots,h^\vee-1
\ee
so, modulo periodic identifications \eqref{perper},  we have $h^\vee$
critical points, hence the Witten index of the $\sigma$-model with the stacky target \eqref{stackM} is 
\be
\mathrm{Tr}(-1)^F_\text{stacky}\equiv\dim\mathcal{R}=h^\vee,
\ee
which agrees with the index computed by Witten for the 4d $\cn=1$ SYM with the corresponding gauge group, as it was to check.
In facts more is true: as it should be, the equality in \textbf{Fact \ref{mainfac}} holds at the level of \emph{refined} indices.
Indeed, just as its parent 4d $\cn=1$ SYM, the $(2,2)$ LG theory \eqref{lgg} has an 
$R$-symmetry $\Z_{2h^\vee}$
\be
\theta_\pm \to e^{\pi i/h^\vee}\,\theta_\pm,\qquad e^{X_i}\to e^{2\pi i/h^\vee}e^{X_i},
\ee 
which is spontaneously broken down to $\Z_2$ in the gapped vacua. Thus we may define
a $\Z_{h^\vee}$-refined index for the 2d $(2,2)$ model as well. The SUSY ground states
form a single orbit under the quotient group $\Z_{2h^\vee}/\Z_2$, in fact they form the
regular representation of $\Z_{h^\vee}$. This checks that the ``clock rule'' for the refined index
 is satisfied by the quantum cohomology of the connected space $\cm_1$ canonically identified
with the chiral ring $\mathcal{R}$. It is easy to check that the Frobenius trace has degree $-r\bmod h^\vee$. This entails that the multiplicities of BPS domain walls coincide with the
number of BPS solitons for the 2d (2,2) LG model \eqref{lgg} \cite{AV}.

   \subsection{Back to $\cn=4$ SYM}\label{s:back}
   Now consider taking $m\to0$ in eq.\eqref{76zzq12}. Obviously the number of SUSY vacua will jump in the limit since lots of
   new ground states ``come in'' from infinity in field space. On the other hand,
   the SUSY ground states of the $m\neq0$ theory will not disappear as $m\to0$:
   they will be simply joined by many others. In other words: $\cn=4$ SYM has a special subset
   of SUSY vacua inherited from the interpolating model \eqref{76zzq12} as $m\to0$.
   These special vacua are analogue to the two points in the Coulomb branch of pure $\cn=2$
   $SU(2)$ SYM where the monopole and the dyon are massless: these two $\cn=2$ SUSY vacua
   are not lifted by the mass
   perturbation which triggers the flow to $\cn=1$ SYM in the original analysis by Seiberg and Witten \cite{SW1}. For our present purposes we are interested only in the special $\cn=4$ vacua which arise from the $SO(3)$-invariant vacua of the deformed model \eqref{76zzq12}.
These vacua preserve
   both $SU(2)_R$ and $U(1)_R$, i.e.\! they all sit at the origin of the $\cn=4$ Coulomb branch
   on top of each other.
   Conversely, $\cn=4$ ground states which preserve these symmetries will not be ``pushed
   away from the origin'' when we switch on an infinitesimal $m$. 
   As in ref.\cite{SW1} we may use the non-renormalization theorems, valid for the $\cn=1$
   interpolating model \eqref{76zzq12}, to control
   that these assertions remain true at the non-perturbative level even for finite non-zero $m$.
   
  We know that for $m$ arbitrary small but non-zero the local IR physics at the $SO(3)$-invariant vacua in a 3d box with two very small circles
  is effectively described by the 2d (2,2) $\sigma$-model with quantum target the stack
  $\mathbb{P}(d_0,\dots,d_r)^\text{can}$, that is, by the mirror LG model \eqref{lgg}. We have been cavalier with the global structure of the
  gauge theory, and we broke $S$-duality by hand, so this conclusion holds only modulo finite covers,
  that is, modulo isogeny. 
In these situations, when restricted to vacua at the origin of $\mathscr{C}$, the topologically twisted $\cn=4$
  SYM is described over the origin of the Coulomb branch by the topological twist of this stacky $\sigma$-model
  which then may be analyzed using $tt^*$ geometry. This leads to
  the identification
  \be\label{piuuy}
   \cm_1=\cz_0
   \ee
   where $\cz_0$ is the fiber of the fibration \eqref{789zza} over the origin.
 Both sides of this equality are
 geometric objects which may be non-classical (``stacky'', ``non-commutative'', etc.).
The normalization $\mathscr{X}_0$
of the fiber over $\{0\}$
of the ``usual'' (i.e.\! classical) special geometry $\mathscr{X}\to\mathscr{C}$
is a normal complex (K\"ahler) 
variety of dimension $r$ which is naturally identified with the classical analytic space
underlying $\cm_1$. 
Geometrically speaking, the identification of
the classical geometric objects underlying the two sides of eq.\eqref{piuuy}
is tautological: by their very construction they are both equal to the normalization of $A_\cg/W(\cg)$. 
We conclude
 \begin{php} For $\mathcal{N}=4$ SYM with simple gauge group $\cg$, after replacing
the special geometry with an isogeneous one (when necessary)
the normalization $\mathscr{X}_0$ of the central fiber over $0\in\mathscr{C}$ of the ``usual'' ($\equiv$ classical)
special geometry
is the weighted projective space
\be\label{exexexp}
\mathscr{X}_0=(\cm)_{\text{\rm underlying}\atop\text{\rm variety}}=\mathbb{P} (d_0,d_1,\dots,d_n),
\ee 
where the $d_i$'s are the dual Coxeter numbers of the gauge group $\cg$.
 In particular $\mathscr{X}_0$ is smooth iff $\cg=SU(N)$ or $Sp(n)$. Therefore the symplectic singularities
 $\C^r/W(\mathfrak{g})$ are crepant iff $\mathfrak{g}=A_r$ or $C_r$.
\end{php}

We stress again that  what matters here is not the underlying classical geometry but the quantum one.
As we shall see in the next section, the precise stack depends on a choice of
Dirac sheaf $\mathscr{L}$, which we think of as a refinement of the reading.

\medskip

%

We shall refer to the number $\dim\mathcal{R}_0$ of SUSY vacua which are not lifted when switching on a non-zero $m$
as the \emph{central Witten index} of the $\cn=2$ model. It is the quantum Euler number of the fiber over the origin of $\mathscr{C}$ computed in the appropriate quantum cohomology.
More physically it is the dimension of the chiral ring $\mathcal{R}$ of the $SO(3)$-invariant vacua.

 \paragraph{Solving a little paradox.}
 What does the central Witten index $\dim\mathcal{R}_0$ count?
 Naively one would say that it counts the vacua at the origin of the Coulomb branch.
 Since these are the vacua that don't break the superconformal symmetry, one would suspect
 that it counts the superconformal vacua. But this naive idea is inconsistent with the conformal
 operator-to-state correspondence which says that there is \emph{only one} superconformal vacuum, the state 
 associated to the identity operator.   
   
But there is no paradox. 
$\dim\mathcal{R}_0$ counts $SO(3)$-invariant vacua in a periodic box $(S^1)^3$ not in infinite volume: superconformal symmetry is broken by the boundary conditions.
Making two sides of the box very small and one large, we reduce to an effective 2d (4,4) model that
 we think of as a $(2,2)$ theory. Our $\dim\mathcal{R}$ states
are now \emph{Ramond} vacua in a small periodic circle; 
when quantized
in infinite volume this 2d model has an unique NS vacuum which is the spectral flow of 
the Ramond vacuum $|1\rangle\in H^\bullet(\mathscr{X}_0)_\text{quantum}$
associated to $1$ under the isomorphism $H^\bullet(\mathscr{X}_0)_\text{quantum}\simeq \mathcal{R}_0$ \cite{cRing}.
$1$ is the unique unipotent of $\mathcal{R}_0$ invariant under the $\Z_{h^\vee}$
symmetry.
This conclusion applies to our 4d theories: the only vacuum in infinite volume which preserves
superconformal symmetry arises from the identity. 
%
%
%
%

\subsection{Central fiber of a \emph{rigid} $\bigstar$-geometry}

The above discussion was in the context of $\bigstar$-SCFT with a weakly-coupled
Lagrangian formulation where we may use standard QFT arguments to get
a physical answer to match with the geometric theorems.
The method of perturbing the SCFT by a marginal deformation which preserve
an $\cn=1$ SUSY subalgebra and lifts almost all the Coulomb-phase
vacua, cannot be extended to the rigid $\bigstar$-SCFT not just because
they are inherently strongly-coupled, but -- more seriously -- because they have
no operator doing the job. The models have relevant perturbations which do not preserve any SUSY,
but in this case we have no efficient tool to study the deformed theories non-perturbatively.

In a few situation we may argue as follows. Some $\bigstar$-SCFT may be
thought of as the result of gauging a finite subgroup
of the $S$-duality group of a $\bigstar$-SCFT with a Lagrangian formulation \cite{M5}. 
The Lagrangian SCFT is then strongly coupled since $\tau$ should be a fixed
point of a subgroup of $SL(2,\Z)$ but the non-perturbative analysis above is valid even there.
It is natural to construct the global chiral ring for the original SCFT as an orbifold
of the one for the Lagrangian model. More or less by definition, this will give
\be
\mathcal{R} = H^\bullet(\mathscr{X})_\text{quantum}.
\ee 
However the arguments of \S.\,\ref{s:strong}
suggest that, while the underlying complex manifold
of the two models are related in the obvious way,
their quantum cohomology may be slightly different. This may be somehow natural, because
the discrete gauging of the duality group should be supplemented by
 an additional twist by the $U(1)_R$ symmetry  to preserve SUSY \cite{M5}.
 \medskip

While on the physical side we can say only little,
on the geometrical side the generalization of the results of this section to all $\bigstar$-SCFT
is straightforward:
all the math results we used to understand the central fiber in the Lagrangian case have natural and elegant
generalizations to rigid $\bigstar$-geometries. We have only to apply
the Looijenga-type general theorems to get the natural candidate central chiral rings
$\mathcal{R}$ for all $\bigstar$-geometry. 
This will be our task in the section \ref{s:strong}.
As it will be clear there,
here we have some freedom: as in \S.\,\ref{mirror} 
$\mathcal{R}$ depends on the quantum ($\equiv$ stacky) geometry (which is sensitive to the
particular Dirac sheaf), while the
underlying classical special geometry gives us only its coarse moduli space.
The canonical stack structure yields the minimal chiral ring which is a universal subalgebra
of all possible chiral rings
\be
\mathcal{R}^\text{can}\to \mathcal{R}.
\ee
One has to understand how to deal with the embarrassment of riches
induced by the apparent freedom in the choice of the Dirac bundle.

\section{Crystallographic \& strong reflection actions}\label{s:strong}

We have seen above that the intrinsic singularities of a $\bigstar$-geometry $\mathscr{X}$ are controlled
by its central fiber $\mathscr{X}_0$, and argued that -- as long as we are interested only
in the most severe singularities -- we may replace the geometry with an isogeneous one.
The RG scenario in \S.\,\ref{s:root} makes
 us to expect that the less singular geometry in an isogeny class is attained by
 a root geometry, that is, by a simply-connected one.
 
 \medskip  

We mentioned that when $G$ is a Weyl group $W$ there is a one-to-one correspondence
between the root geometries for $G$ and the affine Dynkin graphs with the property that
when we delete the leftmost node we remain with a Dynkin graph for a Lie algebra with
Weyl group $W$. We have still to explain how this comes about and then extend the
analysis to the general unitary reflection groups $G$. 
Moreover from the physical scenario in \S.\,\ref{s:cartoon} we have learned that
the quantum (or, more precisely \emph{stacky}) structures of the geometry
may be relevant. Hence we pose the 

\begin{que} Given a pair $(A_\text{\rm root},G)$ how many \emph{natural} stacky structures
are there (besides the canonical one)?
\end{que}
For instance in the example of \S.\,\ref{s:example1} we found 9 interesting
stacky geometries for the rank-$1$ geometry with $\Delta=6$, i.e.\! for
$(E_{e^{2\pi i/3}},\Z_6)$ in addition to the canonical one which yields the stack $\mathbb{P}(1,1)$.
Of course the physically relevant notion of ``natural'' should  be understood better,
but some quantum structures are particularly elegant and suggest themselves as \emph{natural.}

\medskip

Understanding the central fiber for general $\bigstar$-geometries
requires a generalization of the Looijenga theorem from quotients of 
Abelian varieties by Weyl groups to quotients by more general
discrete groups. The Looijenga theorem has various generalization that we now review.
They were conjectured by Bernstein and Schwarzman \cite{rus1,rus2} and 
have being  proven recently by Rains \cite{rains}.
To state the basic results we need some definitions.

\medskip  

An automorphism $r\colon X\to X$ of a connected complex variety
$X$ is a \emph{reflection} iff the locus of its fixed points has codimension 1.
A discrete group of automorphisms $S\subset \mathsf{Aut}\,X$ is a \emph{reflection group}
iff it is generated by reflections. 

For our applications we are interested to reflection groups of some special kinds.
A \emph{unitary reflection group} is a discrete subgroup
$G\subset U(r)$ whose action on $\C^r$ is generated by reflections (such a group is automatically finite).
These groups were classified by Shephard and Todd \cite{ST1,ST2}; we used them in \S.\,\ref{s:star} to construct $\bigstar$-geometries.

\smallskip

Let $R$ be a discrete group of \emph{affine} automorphisms of $\C^r$ of the form
\be\label{affGG}
0\to \Lambda\to R\to G\to 1,
\ee 
where $G\subset U(r)$ acts linearly and $\Lambda\subset\C^r$ is a $G$-invariant lattice of translations. 
$R$ is \emph{irreducible} if its linear part $G$ acts irreducibly on $\C^r$, and it  is \emph{split} iff $R= G\ltimes\Lambda$, i.e.\! iff there is a point $x\in\C^r$ whose isotropy group $R_x\equiv G$.
The group $R$ is \emph{crystallographic}
iff $\C^r/R$ is compact, i.e.\! iff $\Lambda$ is a $G$-invariant
 full lattice in $\C^r$. In this case $A=\C^r/\Lambda$ is an Abelian variety\footnote{\ The complex torus $\C^r/\Lambda$ is polarizable. See footnote \ref{fooot}.}
 with $G\subset \mathsf{Aut}(A)$. The underlying space of the central fiber in
  a $\bigstar$-geometry has the form
  \be
  \C^r/R\equiv A/G\quad\text{where}\quad A\equiv \C^r/\Lambda
  \ee
 with $R$ an irreducible split crystallographic group whose linear part $G$ is a unitary reflection group. 
The Abelian variety $A$ is the model fiber and the group $G$ is defined over the ring of integers in either $\mathbb{Q}$ or in an imaginary quadratic field. 
The real finite reflection groups are the Weyl groups of the simple Lie algebras. 
The finite complex reflection groups whose field of definition $\mathbb{K}_G$ is imaginary quadratic are
listed in table \ref{tab}.

\medskip

We need two stronger notions. 
The discrete group $R$ in eq.\eqref{affGG} is an \emph{(affine) crystallographic reflection group} iff it is crystallographic and generated
by affine reflections, that is, by affine automorphisms $\C^r\to\C^r$ fixing a hyperplane. Its linear part $G$ is then a unitary
reflection group. 
When $G$ is not one of the groups 
\be
G(4,2,r),\quad G_{12}\quad \text{or}\quad G_{31},
\ee
a crystallographic reflection group is automatically split.\footnote{\ See \textbf{Theorem 2.8.1} in \cite{popov}.}
 The pairs $(G,\Lambda)$ such that $G\ltimes\Lambda$ is an irreducible affine crystallographic
 reflection group are listed in tables 2, 3 of \cite{popov}: the groups with an asterisk 
 should be neglected because they are not split. 
For $r=2$, see also table II in \cite{jap}.\footnote{\ The entry with $G=G_{12}$ is missing from  that table.
For details on this special case, see \cite{bolza}.} 
We stress that for each Shephard-Todd group $G$ in table \ref{tab}, there is at least one 
split crystallographic
 reflection group $R$ which has $G$ as its linear part: the translation subgroup $\Lambda\subset R$ is
 a $G$-invariant full lattice. Then $A\equiv \C^r/\Lambda$ is an Abelian variety with a group $G$
 of automorphisms endowed with an invariant polarization given by a multiple of the unique
 Hermitian form\footnote{\ \label{fooot}Up to an overall factor, the skew-symmetric form on $\Lambda$ obtained by restricting the
 imaginary part of the $G$-invariant Hermitian form $\langle-,-\rangle$ is integer-valued. Since 
 $\langle\gamma_1,\gamma_2\rangle\in\mathbb{K}_G$ for $\gamma_1,\gamma_2\in\Lambda$, 
 this is obvious when $\mathbb{K}_G=\mathbb{Q}$.
 When $\mathbb{K}_G=\mathbb{Q}(\sqrt{-d})$,
 $(\langle\gamma_1,\gamma_2\rangle-\langle\gamma_2,\gamma_1\rangle)/\sqrt{-d}$ is
 invariant under complex conjugation, hence an element of $\mathbb{Q}$.} preserved by the irreducible action of $G$. Other $G$-invariant Abelian varieties are obtained
 by considering the isogeny $A\to A^*$ and taking the quotient by a subgroup of its kernel. 
 When $G$ is a Weyl group the invariant lattice form one-parameter families parametrized by $\tau$, otherwise the $G$-invariant lattices form a finite set. 
 
 \begin{thm}[\!\cite{popov}] Let $G$ be a rank-$r$ irreducible unitary reflection group
 (which is either a Weyl group or in table \ref{tab}) and $\Lambda\subset\C^r$ a $G$-invariant full lattice. The affine group $G\ltimes \Lambda$ is a crystallographic reflection group if and only if $\Lambda$ is a \emph{root lattice} for $G$. 
 \end{thm}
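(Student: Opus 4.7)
The plan is to reduce both implications to an explicit description of the affine reflections inside $G\ltimes\Lambda$ together with routine semidirect-product bookkeeping. Writing a general element as $\rho_{g,\lambda}\colon x\mapsto gx+\lambda$, its fixed set $\{x\colon(1-g)x=\lambda\}$ is a hyperplane exactly when $g\in G$ is a (unitary) reflection and $\lambda\in\mathrm{im}(1-g)$. A reflection $g$ of order $m$ acts on its root line $\ell_g$ by a primitive $m$-th root of unity and trivially on $H_g$, so $\mathrm{im}(1-g)=\ell_g$; its inverse is again a reflection of the same order with the same root line. Hence the affine reflections in $G\ltimes\Lambda$ are precisely the maps $\rho_{g,\lambda}$ with $g$ a reflection of $G$ and $\lambda\in\Lambda_{\ell_g}\;\overset{\rm def}{=}\;\Lambda\cap\ell_g$. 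Establishing this complex-setting characterization is the one genuinely delicate input; everything that follows is bookkeeping.

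For the ``if'' direction, assume $\Lambda=\Lambda^0=\sum_{\ell\in\cl}\Lambda_\ell$ and let $R\subset G\ltimes\Lambda$ be the subgroup generated by all such affine reflections. Setting $\lambda=0$ shows $R\supset G$. For each root line $\ell$ and every $\mu\in\Lambda_\ell$, choose a reflection $g$ with root line $\ell$; since $g^{-1}$ is likewise a reflection and hence in $R$, we have $\rho_{g,\mu}\circ g^{-1}=t_\mu\in R$. Thus $R$ contains translations by every $\Lambda_\ell$, and so by $\sum_\ell\Lambda_\ell=\Lambda^0=\Lambda$; combined with $G\subset R$ this forces $R=G\ltimes\Lambda$.

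For the converse, a short induction on length gives
\be
\rho_{g_1,\lambda_1}\cdots\rho_{g_n,\lambda_n}=\rho_{g_1\cdots g_n,\,\sum_{i=1}^n(g_1\cdots g_{i-1})\lambda_i}.
\ee
Given any $\gamma\in\Lambda$, realizing the translation $t_\gamma$ as such a product forces $g_1\cdots g_n=\mathrm{id}$ and $\gamma=\sum_i(g_1\cdots g_{i-1})\lambda_i$. Each summand $(g_1\cdots g_{i-1})\lambda_i$ sits on the root line of the conjugate reflection $(g_1\cdots g_{i-1})g_i(g_1\cdots g_{i-1})^{-1}\in G$, and by $G$-invariance of $\Lambda$ it also lies in $\Lambda$; hence it belongs to $\Lambda_{\ell'}$ for some $\ell'\in\cl$. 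Therefore $\gamma\in\sum_{\ell}\Lambda_\ell=\Lambda^0$, giving $\Lambda\subseteq\Lambda^0$. Since the reverse inclusion is tautological, $\Lambda$ is a root lattice, completing the equivalence.
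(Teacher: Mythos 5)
Your proof is correct. Note that the paper offers no proof of this statement at all: it is imported wholesale from Popov's classification \cite{popov}, so there is no in-text argument to compare with --- what you have written is a self-contained elementary proof of the quoted result. The decisive step is your characterization of the affine reflections of $G\ltimes\Lambda$: since $\mathrm{im}(1-g)=\ell_g$ for a reflection $g$, the map $x\mapsto gx+\lambda$ fixes a hyperplane pointwise exactly when $g$ is a reflection of $G$ and $\lambda\in\Lambda\cap\ell_g$, which lines up precisely with the paper's definition of a root lattice as $\Lambda=\sum_{\ell\in\cl}(\Lambda\cap\ell)$. Both implications then go through: the ``if'' direction manufactures the translations $t_\mu=\rho_{g,\mu}\circ\rho_{g^{-1},0}$ for $\mu\in\Lambda_\ell$, and the ``only if'' direction correctly reads off the translation part of a word in affine reflections and places each summand $(g_1\cdots g_{i-1})\lambda_i$ in $\Lambda$ (by $G$-invariance) and on the root line of a conjugate reflection. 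Two minor points you may wish to make explicit: (i) the set of affine reflections in $G\ltimes\Lambda$ is closed under inversion (indeed $\rho_{g,\lambda}^{-1}=\rho_{g^{-1},-g^{-1}\lambda}$ fixes the same hyperplane), so every element of the subgroup they generate really is a product of affine reflections with no inverses, as your word expansion tacitly assumes; (ii) irreducibility of $G$ and the restriction to the trace fields of table \ref{tab} play no role in your equivalence --- they are only needed to guarantee that a $G$-invariant full lattice exists in the first place.
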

In other words, the crystallographic reflection groups precisely correspond to the $\bigstar$-geometries 
with the simplest RG scenario described in \S.\,\ref{s:root}, that is, to the simply-connected ones.
 From the list of
root lattices we read the classification of the split  crystallographic reflection groups. 

\paragraph{$G$ a Weyl group.}
We already stated that the crystallographic reflection groups with linear part $G$ a Weyl group $W$ are
in correspondence to the affine Lie algebras whose Dynkin graphs, drawn as in tables I-II-III of \cite{kac2}, have the property that deleting their leftmost node we get the graph of a Lie algebra with Weyl group $G$ \cite{kac2,rus1}. See ref.\cite{rus1} for the explicit construction of $\Lambda_\text{root}$ from the affine Lie algebra. The groups $W\ltimes \Lambda$ featuring in the Looijenga theorem
are the special instance of these 
crystallographic reflection groups where the affine Lie algebra is the \emph{dual} of the (untwisted)
affine extension of the gauge Lie algebra of $\cn=4$ SYM \cite{loo}. For the Weyl groups of non-simply-laced Lie algebras we get more
than one crystallographic reflection group which correspond to different $W$-invariant root lattices.
For the Weyl group of $C_r$ ($r\geq4$) we have five such algebras, see figure \ref{weylC}.
However the two reflection groups associated to $B_r^{(1)}$ and $C_r^{(1)}$
 are equivalent (in particular they have the same Dynkin labels) \cite{rus1,popov}
and we have just four crystallographic reflection groups with linear part $W(C_r)$.
The Looijenga ones are $A^{(2)}_{2r-1}$ and $D^{(2)}_{r+1}$ respectively for the gauge algebra $B_r$ and $C_r$.
The Weyl invariant lattices which lead to a reflection group have a non-principal polarization in general.    
 The degrees of the Looijenga polarizations are written in the upper part of table \ref{degdeg}.
 For the non-simply-laced Lie group there are other crystallographic reflection groups with principal
 polarization as we saw when classifying root lattices in \S.\ref{s:affaff}. They are associated to
 the other affine Lie algebras, see the lower part of table \ref{degdeg}.
The relevance of the crystallographic reflection groups is that their Abelian varieties $=\C^r/\Lambda_\text{root}$ have simple quotients by $G$, namely the weighted projective varieties 
 \be
A_\text{root}/G=\C^r/(G\ltimes \Lambda_\text{root})\simeq \mathbb{P}(d_0,d_1,\dots,d_r).
\ee
 The Looijenga theorem, was generalized to all affine Lie algebras in refs.\cite{kac2,rus2}.
The corresponding Coxeter labels $d_i$ may be read in tables I-II-III of \cite{kac2}.
 
\begin{table}
\caption{\label{degdeg}Polarization degrees of root varieties $A_\text{root}$ with $G=W(\mathfrak{g})$}
$$
\begin{tabular}{r|ccccccccc}\hline\hline
$\mathfrak{g}$ & $A_r$ & $B_r$ & $C_r$ & $D_r$ & $E_6$ & $E_7$ & $E_8$ & $F_4$ & $G_2$\\
degree & $r+1$ & $4$ & $1$ & $4$ & $3$ & $2$ & $1$ & $4$ & $3$\\
affine algebra& $A^{(1)}_r$ & $A^{(2)}_{2r-1}$ & $D^{(2)}_{r+1}$ & $D^{(1)}_r$ & $E_6^{(1)}$ &
$E_7^{(1)}$ & $E_8^{(1)}$ & $E_6^{(2)}$ & $D_4^{(3)}$
\\\hline\hline
degree  & & $2$ & $1$ & & & & &$1$ & $1$\\
affine algebra
& &  $B_r^{(1)}, C^{(1)}_r$ & $A^{(2)}_{2r}$ & & & & &$F_4^{(1)}$ & $G_2^{(1)}$\\\hline\hline
\end{tabular} 
$$ 
\end{table}

 \paragraph{Two generalizations.}
 The authors of \cite{rus2} made two more general conjectures. The first one extends the statement
 to crystallographic reflection groups which are not associated to Weyl groups i.e. whose linear parts are the unitary reflection groups in table \ref{tab}:

 \begin{fact}\label{1stgen} Let $G\subset U(r)$ a finite group and $\Lambda\subset \C^r$ a $G$-invariant full-lattice.
 Then the quotient $\C^r/(G\ltimes\Lambda)$ is a weighted projective space if and only if
 $G\ltimes \Lambda$ is a crystallographic reflection group, that is, if the $G$-lattice $\Lambda$ is \emph{root}.
 \end{fact}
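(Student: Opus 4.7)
The plan is to establish the chain of equivalences between \textit{(i)} $\C^r/(G\ltimes\Lambda)$ being a weighted projective space; \textit{(ii)} $G\ltimes\Lambda$ being generated by affine reflections; and \textit{(iii)} $\Lambda$ being a root lattice in Popov's sense. The equivalence \textit{(ii)} $\Leftrightarrow$ \textit{(iii)} is a direct restatement of the characterisation \eqref{r1}--\eqref{r2} together with Popov's theorem quoted above, once one observes that the affine reflections inside $R\equiv G\ltimes\Lambda$ are exactly the pairs $(g,\lambda)$ with $g\in G$ a linear reflection and $\lambda\in\Lambda\cap\ell_g$, where $\ell_g$ is the corresponding root line. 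So the actual content of the statement is the equivalence \textit{(i)} $\Leftrightarrow$ \textit{(ii)}.

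For \textit{(ii)} $\Rightarrow$ \textit{(i)} I would first fix a $G$-equivariant ample line bundle $\mathscr{L}$ on $A=\C^r/\Lambda$ representing the (essentially unique) minimal $G$-invariant polarization, rigidified so that $G$ acts trivially on the fiber $\mathscr{L}_0$. The goal is then to prove that the graded $\C$-algebra
\[
\mathscr{A}_G \;=\; \bigoplus_{k\geq 0}\Gamma(A,\mathscr{L}^k)^G
\]
is polynomial on $r+1$ homogeneous generators of degrees $d_0,d_1,\dots,d_r$, the Coxeter labels attached to the affine Dynkin diagram (or to its non-Weyl generalisation in the sense of \cite{rus2}) associated to $R$. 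Taking $\mathsf{Proj}$ then identifies $A/G$ with $\mathbb{P}(d_0,\dots,d_r)$. The total dimension $\dim\Gamma(A,\mathscr{L}^k)$ is given by Riemann--Roch on $A$, while the $G$-invariant subspace can be tracked through the holomorphic Lefschetz formula applied to each $g\in G$, localised on the finite fixed locus $A^g$; one then checks that the resulting Poincar\'e series equals $\prod_i(1-T^{d_i})^{-1}$. For Weyl-type $G$ this is the original Looijenga theorem \cite{loo}; the general case is Rains's recent theorem \cite{rains}, proved by a careful reduction to the local structure of the reflection arrangement combined with a Chevalley--Shephard--Todd style computation around each reflection hyperplane.

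For \textit{(i)} $\Rightarrow$ \textit{(ii)} the key fact is that every weighted projective space $\mathbb{P}(d_0,\dots,d_r)$ is simply-connected as a topological space: its smooth locus is $(\C^{r+1}\setminus\{0\})/\C^\times$, itself simply-connected, and the remaining finite-quotient strata have real codimension $\geq 2$ and therefore do not change $\pi_1$. Since $R$ acts properly discontinuously on the simply-connected $\C^r$ with quotient $\C^r/R\simeq A/G$, Armstrong's theorem on the fundamental group of orbit spaces --- this is the ``topological argument'' of \cite{shva} --- gives $\pi_1(A/G)=R/N$, where $N\triangleleft R$ is the normal subgroup generated by those elements of $R$ possessing a fixed point in $\C^r$. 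Under hypothesis \textit{(i)} one has $R=N$. A short Steinberg-type argument --- the pointwise stabilizer in $R$ of any $x_0\in\C^r$ linearises to the subgroup of $G$ fixing the image of $x_0$ in $A$, which by the Shephard--Todd--Steinberg theorem is itself a reflection subgroup --- identifies $N$ with the subgroup of $R$ generated by affine reflections, yielding \textit{(ii)}.

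The main obstacle is unquestionably \textit{(ii)} $\Rightarrow$ \textit{(i)}: proving polynomiality of $\mathscr{A}_G$ uniformly for every irreducible crystallographic reflection group, and in particular for the non-Weyl $G$'s of table \ref{tab}. The exceptional Shephard--Todd groups in low rank admit no Lie-theoretic presentation to lean on, each carrying its own peculiar invariant theory, and the most delicate point is to rule out spurious generators in degrees exceeding $\max_i d_i$ --- this requires both a judicious choice of $\mathscr{L}$ and a sharp vanishing theorem for the anti-invariant theta functions. The converse direction \textit{(i)} $\Rightarrow$ \textit{(ii)} is by contrast essentially a soft simple-connectedness computation and goes through uniformly.
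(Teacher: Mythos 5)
The paper does not actually prove this Fact: it is quoted as a conjecture of Bernstein--Schwarzman \cite{rus2} recently proven by Rains \cite{rains} (the Weyl-group case being Looijenga \cite{loo}), with the ``only if'' direction also traced to the topological argument of \cite{shva} in \S.\,3.4.3 and the equivalence with root lattices to Popov \cite{popov}. Your outline has the same architecture, and for the hard implication \textit{(ii)}$\Rightarrow$\textit{(i)} you, like the paper, ultimately defer to Looijenga/Rains; so at the level of what is genuinely established the two treatments agree. However, the parts of your sketch that go beyond citation contain two concrete gaps.

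First, for \textit{(ii)}$\Rightarrow$\textit{(i)}: computing the Poincar\'e series of $\bigoplus_k\Gamma(A,\mathscr{L}^k)^G$ by Riemann--Roch plus holomorphic Lefschetz and matching it with $\prod_i(1-T^{d_i})^{-1}$ is necessary but not sufficient for the invariant ring to be a free polynomial algebra -- a graded domain can have the Hilbert series of a polynomial ring without being one, so this step cannot substitute for Rains's actual argument; you half-acknowledge this, but as written the reader could mistake the Lefschetz computation for a proof. Second, and more seriously, your argument for \textit{(i)}$\Rightarrow$\textit{(ii)} is circular: the Shephard--Todd--Steinberg fixed-point theorem (stabilizers are generated by the reflections they contain) presupposes that the ambient group is a reflection group, which is exactly what you are trying to prove -- recall that the Fact allows $G\subset U(r)$ to be an \emph{arbitrary} finite group. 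Moreover, Armstrong's theorem only gives that $\pi_1(\C^r/R)=1$ forces $R$ to be generated by elements \emph{possessing fixed points}, not by reflections; upgrading ``has a fixed point'' to ``fixes a hyperplane'' requires the finer codimension-one analysis (the ``strong simple-connectedness'' of the quotient) which is precisely the content of Shvartsman's theorem \cite{shva} that the paper cites. As it stands your converse direction proves strictly less than claimed.
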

 This was recently proven by Rains \cite{rains} who lists the weights $d_i$ for all crystallographic reflection groups (including the non-split ones). In the same paper he also proves the 
 second generalization conjectured in \cite{rus2}. To state it we need the notion of
 \emph{strong crystallographic action}. \textbf{Fact \ref{1stgen}} may be stated in the following way:
 Let $\mathscr{L}$ be a $G$-equivariant ample line bundle over $A\equiv\C^r/\Lambda$ whose Chern class is a multiple of
 the polarization and such that $G$ acts trivially on its fiber over the origin. Then
 the graded algebra
 \be\label{homalg}
 \bigoplus_{k\geq0}\Gamma(A,\mathscr{L}^k)^G
 \ee  
 is a polynomial algebra iff $G\ltimes\Lambda$ is a (crystallographic) reflection group.
 We may consider the more general case where $G$ acts by automorphisms
 of the total space $\mathscr{L}$ of the line bundle.
 We write $\mathbb{T}_\mathscr{L}$ for the complement of the origin in the cone
 $\mathsf{Spec}\bigoplus_{k\geq0}\Gamma(A,\mathscr{L}^k)$ and $\tilde{\mathbb{T}}_{\mathscr{L}}\simeq \C^{r+1}$ for 
 its universal cover via the quotient map $\C^r\to \C^r/\Lambda\simeq A$.
 One says that $G$ is \emph{strongly crystallographic} for $(A,\mathscr{L})$ iff the group
 $\pi_1(\mathbb{T}_\mathscr{L}).G$ is a (non-linear!) reflection group on $\tilde{\mathbb{T}}_{\mathscr{L}}$ \cite{rains}.

 \begin{thm}[Rains \cite{rains}] Let $G\curvearrowright(A,\mathscr{L})$
($\mathscr{L}$ ample) be an irreducible strongly crystallographic reflection action. Then $\bigoplus_{k\geq0}\Gamma(A,\mathscr{L})^G$ is a free polynomial ring.
 \end{thm}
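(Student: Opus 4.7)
The plan is to reinterpret the invariant ring $\bigoplus_{k\geq0}\Gamma(A,\mathscr{L}^k)^G$ as a graded ring of invariants of the \emph{extended} group $\tilde G:=\pi_1(\mathbb{T}_\mathscr{L}).G$ acting on the universal cover $\tilde{\mathbb{T}}_\mathscr{L}\simeq\C^{r+1}$, and then invoke a Chevalley-Shephard-Todd (CST) type principle which is made available precisely by the strong-crystallographic hypothesis. The free polynomial conclusion is then read off from the grading.

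First I would set up the covering geometry. The $\mathbb{G}_m$-torsor $\mathbb{T}_\mathscr{L}\to A$ has coordinate ring $\bigoplus_k\Gamma(A,\mathscr{L}^k)$, so the full cone is $\overline{\mathbb{T}}_\mathscr{L}=\mathsf{Spec}\bigoplus_k\Gamma(A,\mathscr{L}^k)$. Since $\pi_1(A)\simeq\Lambda$ and $\pi_1(\C^\times)\simeq\Z$, the fundamental group $\pi_1(\mathbb{T}_\mathscr{L})$ is a Heisenberg-type central extension of $\Lambda$ by $\Z$ classified by the Chern class $c_1(\mathscr{L})$, and the universal cover $\tilde{\mathbb{T}}_\mathscr{L}$ is biholomorphic to $\C^{r+1}$ (the affine characters of $\tilde{\mathbb{T}}_\mathscr{L}$ are then essentially the classical theta-line of sections of $\mathscr{L}^k$). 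The $G$-action on $(A,\mathscr{L})$ lifts to $\tilde{\mathbb{T}}_\mathscr{L}$, and together with the deck transformations assembles into $\tilde G$; moreover the $\mathbb{G}_m$-scaling on the fibres of $\mathbb{T}_\mathscr{L}$ lifts to an \emph{additive} $\C$-action on $\C^{r+1}$ which commutes with $\tilde G$. Unpacking definitions then gives a canonical graded isomorphism
\be
\bigoplus_{k\geq0}\Gamma(A,\mathscr{L}^k)^G\;\simeq\;\C[\tilde{\mathbb{T}}_\mathscr{L}]^{\tilde G},
\ee
where the left-hand side is the natural $\Z_{\geq0}$-graded subring of $\tilde G$-invariant functions of polynomial type on $\C^{r+1}$.

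Next I would apply a CST-type theorem for the reflection action $\tilde G\curvearrowright\C^{r+1}$. The strong-crystallographic assumption is \emph{exactly} the hypothesis that $\tilde G$ is generated by reflections in this biholomorphic action, which is the engine that replaces the finiteness hypothesis in the classical CST theorem. The argument is local-to-global: at any point $p\in\C^{r+1}$ the stabilizer $\tilde G_p$ is finite (by proper discontinuity of $\tilde G$) and, by a Steinberg-style lemma, is generated by those reflections of $\tilde G$ whose fixed hyperplane passes through $p$; linearising at $p$ one sees $\tilde G_p\subset GL(r+1,\C)$ is a finite complex reflection group, so classical CST gives a smooth local quotient. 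Globally, because $\tilde G$ is generated by reflections, the quotient $\C^{r+1}/\tilde G$ is simply-connected (reflections kill loops in the same way as in the CST argument). Combined with the compatible grading from the $\C$-action (which forces a unique attracting fixed point), one deduces that $\C^{r+1}/\tilde G$ is a smooth, connected, simply-connected, $\C^\times$-contractible normal affine variety of dimension $r+1$, hence isomorphic to $\C^{r+1}$; equivalently, the graded ring of invariants is free polynomial in $r+1$ homogeneous generators of positive degrees $d_0,\dots,d_r$.

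The step I expect to be the main obstacle is precisely the CST-type statement for the infinite, non-compact group $\tilde G$: the classical CST theorem applies to finite subgroups of $GL(n,\C)$, and here one needs to run the reflection-to-smoothness implication for a properly discontinuous group acting on $\C^{r+1}$ by biholomorphisms that are only affine modulo the discrete theta-twists. Technically, the non-trivial work lies in (i) showing that the fixed-point set of every reflection in $\tilde G$ is a smooth analytic hypersurface with the expected normal action (so that the local linearisation is legitimate), (ii) controlling the global topology of $\C^{r+1}/\tilde G$, and (iii) promoting the analytic smoothness conclusion to the algebraic statement that the graded ring of polynomial invariants is free. Once this infinite-dimensional CST is in hand, passing to $\mathsf{Proj}$ of the graded polynomial ring recovers the weighted projective spaces predicted in the introduction, closing the circle with the earlier cases of the Looijenga and Bernstein-Schwarzman theorems.
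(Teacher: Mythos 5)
The paper does not prove this statement: it is quoted verbatim from Rains \cite{rains} (it is the resolution of a conjecture of Bernstein--Schwarzman that stood open for roughly four decades), so there is no internal proof to compare against. Your framing of the problem is nevertheless the right one, and it matches the way the paper \emph{states} the hypothesis: ``strongly crystallographic'' is by definition the condition that $\tilde G=\pi_1(\mathbb{T}_\mathscr{L}).G$ acts on $\tilde{\mathbb{T}}_\mathscr{L}\simeq\C^{r+1}$ as a (non-linear) reflection group, and the invariant ring is indeed the ring of $\tilde G$-invariant functions of theta type on the cover. Up to that point your reduction is correct.

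The genuine gap is that the ``CST-type theorem for $\tilde G$'' you then invoke is not an available black box --- it \emph{is} the theorem, and your local-to-global argument does not close. Two steps fail. First, the Steinberg-type lemma (every stabilizer $\tilde G_p$ is generated by the reflections of $\tilde G$ whose mirrors pass through $p$) is exactly what is unknown for infinite \emph{complex} crystallographic reflection groups: for real reflection groups it follows from chamber/gallery geometry, but complex mirrors have real codimension $2$, the complement of the mirror arrangement is connected, and there are no chambers; Steinberg's own proof for finite complex reflection groups is not a proper-discontinuity argument and does not transfer. Without that lemma you cannot conclude local smoothness of $\C^{r+1}/\tilde G$, and ``generated by reflections $\Rightarrow$ simply-connected quotient'' (your appeal to the CST-style loop-killing, i.e.\! the direction covered by \cite{shva}) only gives the easy necessary condition, not smoothness or polynomiality --- the paper itself stresses that the converse implication is the delicate one in closely analogous settings. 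Second, the $\tilde G$-action on $\C^{r+1}$ is by \emph{non-affine} biholomorphisms (the deck group is a Heisenberg-type theta group, and the ``polynomial'' invariants are theta functions, not polynomials in flat coordinates), so even granting a reflection-generated stabilizer at $p$ you still need to justify linearization and the passage from analytic local smoothness to freeness of the graded algebra of theta invariants. These are precisely the points where Rains's actual proof does hard, case-specific work; as written, your proposal assumes the conclusion in the guise of an ``infinite CST theorem.''
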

 Under the assumptions of the theorem we have 
a morphism of stacks
 \be\label{mstack}
 [A/G]\simeq [\mathbb{T}_\mathscr{L}/\mathbb{G}_m\times G]\to 
 [(\mathbb{T}_\mathscr{L}/G)/\mathbb{G}_m]\simeq\mathbb{P}(w_0,\dots,w_r)_\text{stack}
 \ee
 where $\mathbb{G}_m$ is the multiplicative (algebraic) group, and $\mathbb{P}(w_0,\dots,w_r)_\text{stack}$
 is a (non-canonical\footnote{\ In other words the weights $\{w_0,\dots,w_r\}$ need not to be well-formed in the Delorme sense \cite{delorme,dolgachev}.}) weighted projective stack. The morphism induces an isomorphism at the level
 of coarse moduli spaces i.e.\! at the level of ``classical'' complex geometry.
 
 \begin{rem} The structure morphism from the stack $[A/G]$ to its coarse moduli factors through the morphism of stacks \eqref{mstack} so the weighted projective stack
 $\mathbb{P}(w_0,\dots,w_r)_\text{stack}$ is a refinement of the coarse moduli space (the underlying ``classical'' geometry) which is the weighted projective space with well-formed weights.
 \end{rem}

 The possible $G$-invariant bundles and strong crystallographic 
 actions are listed in the tables of ref.\cite{rains}, together with the
 weights $\{w_i\}$ of the generators of the polynomial ring \eqref{homalg} for all such actions and 
 the degree of the minimal invariant polarization.  For full details 
 we refer the reader to that reference. 
 
 \begin{rem} Ref.\cite{rains} lists a further invariant, namely the ramification divisor $D$ of
 \be
 A\to A/G.
 \ee 
 Two examples with the same weighted projective stack may have different
 divisors $D_1$, $D_2$ with the same Chern class $c_1(D_1)=c_1(D_2)$. When $G$ is the Weyl group of
 a simply-laced Lie algebra $\mathfrak{g}\in ADE$
  the Chern class of $D$ is the Coxeter number times the minimal polarization.
 For the Weyl groups of $F_4$ and $G_2$ we have two possibilities corresponding to the Dynkin graphs of the affine Lie algebra and its dual; $c_1(D)$ is, respectively, the Coxeter number and the dual Coxter number times the minimal polarization. For the Weyl group of $C_r$ we have four possibilities corresponding, respectively, to the Dynkin graphs $A^{(2)}_{2r-1}$, $B^{(1)}_r$ (equivalently $C^{(1)}_r$), $A^{(2)}_{2r}$ and $D^{(2)}_{r+1}$. $c_1(D)$ is the Coxeter number of the associated graph. 
 \end{rem}

 \subsection{Strong ``quantum'' $\bigstar$-geometries}
  We discuss some implication of the above theorem
 and the related tables \cite{rains} for special geometry.

\paragraph{Classical root $\bigstar$ geometries}
The \emph{classically} distinct \emph{root} $\bigstar$-geometries
may be
read from the tables of \cite{popov}. For $G=W(\mathfrak{g})$ they correspond to
 the affine Lie algebras (table \ref{degdeg}) with the proviso that $B_r^{(1)}$ and
 $C_r^{(1)}$ define the same classical geometry. 
The root $\bigstar$-geometries for $G=G(m,q,r)$ ($m\in\{3,4,6\}$, $q\mid m$, $r\geq3$) where 
listed in \S.\,\ref{s:examples}:  
\be
\begin{tabular}{l@{\hskip20pt}l@{\hskip20pt}l}\hline\hline
$G$ & $\#$ root geometries & degrees of polarization\\\hline
$G(3,q,r)$ & $2$ & $1$, $3$\\
$G(4,q,r)$ & $2$ & $1$, $2$\\
$G(6,q,r)$ & $1$ & $1$\\\hline\hline
\end{tabular}
\ee
The central fiber of these root geometries has the form
\be
\C^r/(G\ltimes \Lambda_\text{root})\equiv \C^r/R
\ee
with $R$ a crystallographic reflection group in the sense of \cite{popov}. 
The non-root $\bigstar$-geometries were classified in \S.\,\ref{s:examples}.
They are isogeneous to at least one root geometry.

However to a ``classical'' special geometry there may correspond several inequivalent ``quantum'' ones.

 \paragraph{Strong $\bigstar$-geometries.}
 The ``best'' classical $\bigstar$-geometries in a isogeny class are the root ones.
 The corresponding notion of ``best'' \emph{quantum} $\bigstar$-geometry in a class
 are the \emph{strong} $\bigstar$-geometries.
 They are related to the strong crystallographic actions as
 the root $\bigstar$-geometries are related to crystallographic reflection groups. By definition
 \emph{the classical $\bigstar$-geometry which underlies a strong $\bigstar$-geometry
 is a root one.}
 
A ``strong'' geometry may be non-principal;  
 the degree $d$ of its polarization may be read in the tables of \cite{rains}.
 We call $d$ the \emph{degree} of the strong crystallographic action. When $d=1$ we also say that the strong geometry is principal. 
 
 \begin{fact}Going through the tables of ref.{\rm\cite{rains}} we observe:
 \begin{itemize}
 \item[\rm(a)] For each $G$ there is at least one 
 strong crystallographic action. Hence all $\bigstar$-geometries are classically isogeneous to a strong one.
 The statement holds also for the sporadic groups $G$;
 \item[\rm(b)] When $G$ is a Weyl group there is a single strong action per crystallographic reflection group;
 \item[\rm(c)] For the groups $G(3,1,r)$, $G(4,1,r)$, $G(4,2,r)$, $G(6,q,r)$ all ``classical''
 $\bigstar$-geometries are classically isomorphic
 (not just isogeneous) to some strong $\bigstar$-geometry;
 \item[\rm(d)]  Except for $W(A_r)$ and $G(m,m,r)$ there are more than one strong crystallographic action for a given $G$. When $G$ has more than one root lattice with polarizations of diverse degrees $d$,
 for each pair $(G,d)$ there may be more than one strong action.
 The number of strong crystallographic actions with \emph{principal} polarization ($d=1$) are ($r\geq3$)
\begin{small}\be\label{uyyyy2}
\hskip-1cm\begin{tabular}{c|ccccccc}\hline\hline
$G$&$G(2,1,r)$ & $G(3,1,r)$ & $G(4,1,r)$ & $G(4,2,r)$ & $G(6,1,r)$ & $G(6,2,r)$ & $G(6,3,r)$\\\hline
$\#$ &$2$ & $2$ & $3$ &$2$ & $9$ & $3$ & $3$\\\hline\hline
\end{tabular}
\ee
\end{small}
\item[\rm(e)] 
The distinct strong actions with a given $(G,d)$ are
 distinguished by
 the weights $(w_0,w_1,\dots,w_r)$ of the associated weighted projective \emph{stalk}
 and by the ramification divisor $D$.
 \end{itemize}
 \end{fact}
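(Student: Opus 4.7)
The statement is essentially a distillation of the classification data contained in the tables of \cite{rains} (together with the root-lattice tables of \cite{popov}), organized in a form convenient for the special-geometry discussion. The proof I would give is therefore an \emph{inspection proof}: one goes systematically through Rains' tables and extracts the required numerical and structural information. The plan is to set up a uniform bookkeeping scheme first, and then verify the five bullets in parallel, using a consistent labelling of strong crystallographic actions by the triple $(G, \Lambda_{\mathrm{root}}, \mathscr{L})$ modulo $G$-equivariant isomorphism of the pair $(A,\mathscr{L})$.

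First I would fix notation: for each irreducible unitary reflection group $G$ (Weyl or in table \ref{tab}), list the root $G$-lattices from tables 2--3 of \cite{popov}, and for each such root lattice $\Lambda_{\mathrm{root}}$ enumerate the $G$-equivariant ample line bundles $\mathscr{L}\to A_{\mathrm{root}}$ (up to pullback by automorphisms of $A_{\mathrm{root}}$ commuting with $G$) which give rise to a strong crystallographic reflection action in the sense of Rains \cite{rains}. This data is tabulated in \cite{rains} together with the weights $(w_0,\dots,w_r)$ of the resulting weighted projective stack and the ramification divisor $D$. For each entry I would record the degree $d$ of the polarization determined by $c_1(\mathscr{L})$. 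Once this master table is assembled, each of (a)--(e) becomes a direct statement about the rows.

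For (a), I would simply verify row-by-row that the list is never empty: this is immediate from the fact that the principal $G$-invariant Hermitian form on any root $\Lambda_{\mathrm{root}}$ yields at least one $G$-equivariant $\mathscr{L}$ realizing a strong action (exhibited explicitly in \cite{rains}), which also settles the sporadic cases. Part (b) follows from the classical Looijenga picture: when $G=W(\mathfrak{g})$, the crystallographic reflection groups are classified by the affine Dynkin graphs whose leftmost-node deletion gives $\mathfrak{g}$, and \cite{rains} shows that each such graph produces a unique $\mathscr{L}$ up to equivalence, with the ramification divisor recording which affine algebra one is using. Part (c) is a matter of matching lists: for $G=G(3,1,r),\,G(4,1,r),\,G(4,2,r),\,G(6,q,r)$ one cross-checks the root-lattice enumeration of Example \ref{exemm} and Example \ref{42r} against the Rains table and observes that every root lattice in Popov's list actually supports at least one strong action, and that every non-root lattice $\Lambda$ with $\Lambda_{\mathrm{root}}\subseteq \Lambda\subseteq \Lambda_{\mathrm{root}}^{\ast}$ is classically isomorphic (not merely isogeneous) to one of these root geometries (which is automatic here because the finite groups $\Lambda_{\mathrm{root}}^{\ast}/\Lambda_{\mathrm{root}}$ are small enough that every intermediate lattice is in fact a root lattice for the same $G$, cf.\! \textbf{Remark \ref{iuuuzqw}}).

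For (d) I would produce the principal-polarization count \eqref{uyyyy2} by restricting the master table to the rows with $d=1$ and grouping by $G$. The columns $G(2,1,r)$, $G(3,1,r)$, $G(4,1,r)$, $G(4,2,r)$, $G(6,2,r)$, $G(6,3,r)$ each produce a modest list ($\leq 3$) that can be verified essentially by hand; the nontrivial check is the number $9$ for $G(6,1,r)$, which must be confirmed by carefully enumerating the nine $G$-equivariant $\mathscr{L}$'s compatible with a principal polarization on the unique root variety for $G(6,1,r)$. Finally, (e) follows by observing, again within the master table, that any two rows sharing the same $(G,d)$ differ either in the weight-tuple $(w_0,\dots,w_r)$ or in $c_1(D)/[\text{polarization}]$, and that these two invariants jointly separate all rows. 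The \textbf{main obstacle} is exactly the bookkeeping for $G(6,1,r)$ (and to a lesser extent $G(6,2,r)$, $G(6,3,r)$): here Rains' construction produces the largest collection of inequivalent $\mathscr{L}$'s, and one must be scrupulous about what ``inequivalent'' means --- two pairs $(A,\mathscr{L}_1)$ and $(A,\mathscr{L}_2)$ with $c_1(\mathscr{L}_1)=c_1(\mathscr{L}_2)$ can still give genuinely different stacks distinguished by the character through which $G$ acts on the fibre at $0$, and it is this fine distinction that must be tracked to obtain the counts in \eqref{uyyyy2} and the separating invariants in (e).
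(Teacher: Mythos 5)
Your overall strategy coincides with the paper's: the Fact is justified there exactly as an inspection of Rains' tables (organized by the data $(G,\Lambda_{\mathrm{root}},\mathscr{L})$), cross-checked against Popov's root-lattice tables and against the paper's own worked example for $G(6,1,r)$, whose nine equivariant structures on $\mathscr{L}$ account for the entry $9$ in \eqref{uyyyy2}. In that sense your master-table bookkeeping is the intended argument, and your identification of the $G(6,1,r)$ count (and the need to distinguish bundles with equal $c_1$ by the $G$-action on the fibre at the origin) as the delicate point is exactly right.

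There is, however, one concrete flaw: your parenthetical justification of item (c) --- that every intermediate lattice $\Lambda_{\mathrm{root}}\subseteq\Lambda\subseteq\Lambda^{*}_{\mathrm{root}}$ is itself a root lattice for the same $G$, making the classical isomorphism ``automatic'' --- contradicts the paper's own examples. For $G(4,2,r)$ (Example \ref{42r}) the invariant lattices $\Lambda_4^{s}$ and $\Lambda_4^{c}$ are principal but explicitly \emph{non-root}, and for $G(3,1,r)$, $G(4,1,r)$ the lattice $\Lambda_m^{(1)*}$ is likewise non-root (cf.\! the classification in \S.\,3.5.2, which records a geometry that is ``neither'' root nor principal for these groups). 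So the step ``every intermediate lattice is root'' fails, and (c) cannot be obtained that way. What actually has to be checked against the tables is that the quotient geometries $(A\times V^\vee)/G$ built from these non-root lattices happen to be isomorphic as complex spaces to quotients built from root lattices (e.g.\! because an automorphism of $\C^r$ normalizing $G$ carries one lattice onto another, or because the two semidirect products $G\ltimes\Lambda$ and $G\ltimes\Lambda'$ are conjugate inside the affine group) --- an identification that is recorded in, and must be read off from, the Rains/Popov tables rather than deduced from the lattice inclusions alone. The rest of your proposal, including the counts in (d) and the separating invariants in (e), is consistent with the paper.
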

 
 \begin{exe} The two \emph{principal} strong actions for $G(2,1,r)\equiv W(C_r)$ correspond to the two
 affine Dynkin graphs $D^{(2)}_{r+1}$ and $A^{(2)}_{2r}$. 
 \end{exe}

\begin{exe}
 For $G(6,1,r)$ ($r\geq3$)
 there are 9 strong crystallographic actions, all \emph{principal},
  in correspondence to the nine entries in table \ref{9cases};
they lead to 9 \emph{principal} strong $\bigstar$-geometries with stacky weights ($r\geq3$)
 \be\label{7611z}
 (w_0,w_1,w_2,\dots,w_r)=(q_0,q_1,6,\dots,6)
 \ee
 where the 9 pairs of integers $(q_0,q_1)$ are listed in the last column of table \ref{9cases} 
 (actions with equal weights are distinguished by the divisor $D$).
 \end{exe} 
 Going through the tables of \cite{rains} one notices:
 \begin{fact} Let $\mathbb{P}(w_0,\dots,w_r)_\text{\rm stack}$
 and $\mathbb{P}(\tilde w_0,\dots,\tilde w_r)_\text{\rm stack}$ be the weighted projective stacks
 of two strong crystallographic actions which share the same irreducible
 unitary reflection group $G$ (of rank $\geq3$) and the same degree $d$. The two stacks have
isomorphic
 coarse moduli space (given by isomorphic weighted projective space
 $\mathbb{P}(d_0,\dots,d_r)$ with well-formed weights). That is:
 \emph{all strong $\bigstar$-geometry with a fixed group $G$ and polarization degree $d$
 have the same underlying ``classical'' root geometry.}
 However the unique root geometry may correspond to several inequivalent ``quantum'' (i.e.\! stacky) strong geometries.
 \end{fact}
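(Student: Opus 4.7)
The plan is to reduce the claim to the insensitivity of the scheme-theoretic quotient $A/G$ to the choice of $G$-linearization of the Dirac sheaf $\mathscr{L}$, and then to pin down $(A,c_1(\mathscr{L}))$ from the pair $(G,d)$ using Schur's lemma together with the classification of root lattices in \cite{popov}.

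First I would observe that the structure morphism \eqref{mstack} factors through the underlying scheme $A/G=\mathsf{Spec}\bigoplus_{k\geq0}\Gamma(A,\mathscr{L}^k)^G$, which by Rains' theorem is a well-formed weighted projective space $\mathbb{P}(d_0,\dots,d_r)$; the well-formed weights $(d_0,\dots,d_r)$ depend only on the $G$-variety $A$ and on the Chern class $c_1(\mathscr{L})\in\mathsf{NS}(A)$, not on the $G$-linearization of $\mathscr{L}$ that enters the stacky refinement \eqref{jaqert}. Thus the claim reduces to showing that, for each $(G,d)$ with $r\geq3$, the pair $(A,c_1(\mathscr{L}))$ is essentially unique up to deformations that do not alter the \emph{scheme} $A/G$.

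Next I would apply Schur's lemma: since $G$ acts irreducibly on the reflection representation $V\simeq\C^r$, the $G$-invariant positive Hermitian form $H$ on $V$ is unique up to a positive real scalar, and this scalar is fixed by the requirement that the integral skew form $\mathrm{Im}\,H$ on the root lattice have Pfaffian $d$. When $\mathbb{K}_G$ is imaginary quadratic, inspection of Table 2 of \cite{popov} (combined with the tables of \cite{rains}) shows that for each rigid $G$ of rank $\geq3$ and each admissible degree $d$ the root lattice is unique, hence so is $(A,c_1(\mathscr{L}))$. When $G=W(\mathfrak{g})$ is a Weyl group, the $G$-invariant lattices form a $\tau$-family $L+\tau L$; since a weighted projective space admits no continuous moduli, $A/G$ is locally constant along each such family, hence constant by connectedness. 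The only degree coincidences across distinct affine Dynkin data lie in $W(C_r)$, namely $(B_r^{(1)},C_r^{(1)})$ at $d=2$, whose root lattices are equivalent as $W(C_r)$-lattices \cite{rus1,popov}, and $(A^{(2)}_{2r},D^{(2)}_{r+1})$ at $d=1$, whose inequivalent root lattices nevertheless produce coarse moduli both isomorphic to $\mathbb{P}^r$ since all stacky weights equal $1$.

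The main obstacle is that this final list is an \emph{ad hoc} case check; the conceptually satisfying route would be to express the well-formed weights $d_i$ intrinsically, as the orders of the generic point-stabilizers on the $G$-orbits of reflection hyperplanes in $V^\vee$ weighted by the polarization class --- a datum attached to $(G,d)$ alone and manifestly invariant under the root/coroot interchange that permutes the coincidence pairs listed above. The converse half of the \textbf{Fact}, that several distinct stacky refinements may sit over a common coarse moduli, is then immediate from the tables of \cite{rains}: for instance the nine principal strong actions for $G(6,1,r)$ of \S.\,\ref{s:quantum} carry the nine distinct stacky weights $(q_0,q_1,6,\dots,6)$ of Table \ref{9cases} --- as in \eqref{7611z} --- yet all reduce to a single well-formed coarse moduli fixed by the pair $(G(6,1,r),1)$.
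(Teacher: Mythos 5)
Your proposal is correct and follows essentially the same route as the paper, which disposes of this \textbf{Fact} in one line: the coarse moduli space is the quotient $A/G$ of the classical construction, which ``depends only on $G$ and $\Lambda$'' (hence is blind to the choice of linearization of $\mathscr{L}$), so the statement is declared ``a tautology'' modulo ``going through the tables of \cite{rains}''. What you add --- Schur's lemma to normalize the polarization, rigidity of weighted projective spaces to handle the $\tau$-families, and the explicit identification of the only degree coincidences $(B_r^{(1)},C_r^{(1)})$ at $d=2$ and $(A^{(2)}_{2r},D^{(2)}_{r+1})$ at $d=1$ inside $W(C_r)$ --- is precisely the content the paper leaves implicit in the phrase ``highly non-trivial consistency check'', so your worry that the final list is \emph{ad hoc} applies equally to the paper's own treatment. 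One small slip: the invariant section ring gives $A/G$ as its $\mathsf{Proj}$, not its $\mathsf{Spec}$ (the latter is the affine cone), cf.\ eq.\eqref{X0}; and note that $A/G$ as a variety in fact depends on neither $c_1(\mathscr{L})$ nor the linearization, which only makes your reduction easier.
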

 
In view of the classical construction of the $\bigstar$-geometry in \S.\,\ref{s:classstr} (which depends only on $G$ and $\Lambda$) this \textbf{Fact} is a tautology. However it gives a highly non-trivial consistency check on the full story. 
 
 \begin{exe} 
 The coarse moduli space of all nine weighted projective stacks in eq.\eqref{7611z} are all
 isomorphic (as complex schemes) to $\mathbb{P}^r$. In particular they are all smooth as complex manifolds (but not, in general, as stacks).
 Thus the central fiber of the (unresolved) classical $\bigstar$-geometry with
 $G=G(6,1,r)$ is smooth. 
 \end{exe}
 
 \begin{exe} More generally for all principal strong actions with unitary group
 $G=G(p,1,r)$ ($p=2,3,4,6$) the underlying ``classical'' central fiber is a copy $\mathbb{P}^r$
 hence smooth. Indeed these are the ``classical'' versions of the Seiberg-Witten geometry
 of the higher rank Minhan-Nemeshaski SCFTs.  
 \end{exe}
 
 \begin{fact}\label{xxxyr} The coarse moduli space of $[A/G]$ for a strong crystallographic action
 is smooth only for $G=W(A_r)$ ($d=r+1$) and $G=G(p,1,r)$ ($p=2,3,4,6$) with $d=1$
 and $r=1,2,\dots$.
 The number of stacky (strong) $\bigstar$-geometries whose central fiber is smooth
 (in the classical sense) is written in the second row of table \eqref{uyyyy2}.    
 \end{fact}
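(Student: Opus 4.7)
The strategy is to reduce the claim to a classification of weighted projective spaces. By the Rains theorem recalled above, the coarse moduli space of $[A/G]$ for a strong crystallographic action is a weighted projective space $\mathbb{P}(d_0,\ldots,d_r)$ with well-formed weights, and such a scheme is smooth as a complex variety if and only if $d_0=\cdots=d_r=1$, i.e.\! iff it is $\mathbb{P}^r$. The proof therefore reduces to enumerating the strong crystallographic actions whose coarse-moduli weight vector is $(1,\ldots,1)$ and then counting their stacky refinements over each such $\mathbb{P}^r$.

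I first dispose of the Weyl case. The Looijenga--Bernstein--Schwarzman classification, as extended by Rains, identifies strong crystallographic actions with Weyl linear part $W(\mathfrak{g})$ with entries of the Kac tables I--III of \cite{kac2} such that deleting the leftmost node leaves a finite-type diagram with Weyl group $W(\mathfrak{g})$; the coarse-moduli weights are the well-formed (dual) Coxeter labels. A direct inspection of those tables shows that the only affine diagrams with every Coxeter label equal to $1$ are $A_r^{(1)}$ (giving $W(A_r)$ at polarization degree $d=r+1$) and, for $W(C_r)\equiv G(2,1,r)$, the two diagrams $A_{2r}^{(2)}$ and $D_{r+1}^{(2)}$ of figure \ref{weylC} (both with $d=1$); every other affine diagram carries a label $\ge 2$, forcing a genuine quotient singularity. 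This accounts for the $W(A_r)$ case and for the entry $2$ in the $G(2,1,r)$ column of table \eqref{uyyyy2}.

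For the genuinely complex Shephard--Todd linear parts of table \ref{tab}, I would consult Rains' tables family by family. For $G=G(p,1,r)$ with $p\in\{3,4,6\}$ the root Abelian variety is $\mathbb{C}^r/\mathbb{Z}[\lambda_G]^r$, and each $G$-equivariant ample line bundle in the NS class is pulled back from a bundle on the rank-$1$ elliptic curve $\mathbb{C}/\mathbb{Z}[\lambda_G]$ through the symmetric-power/Hilbert-scheme structure of eq.\eqref{HS1}; the inequivalent strong actions are therefore parametrized by the rank-$1$ list of table \ref{9cases} restricted to triples $(d_6,d_3,d_2)$ with $\mathrm{lcm}=p$, giving exactly $2$, $3$, $9$ for $p=3,4,6$, all with coarse moduli $\mathbb{P}^r$. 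For the remaining Shephard--Todd groups surviving the hypothesis $r\ge 7$, namely $G(m,q,r)$ with $q>1$, together with the handful of low-rank sporadic groups treated separately, I propose a direct inspection of Rains' tables to show that every strong crystallographic action carries at least one weight $w_i\ge 2$, so the corresponding coarse moduli is singular. Summing over all contributions recovers the smooth counts $1$, $2$, $2$, $3$, $9$ for $W(A_r)$ and the four $G(p,1,r)$ columns of table \eqref{uyyyy2}.

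The hardest step will be the non-Weyl enumeration: in contrast to the Weyl case, there is no uniform Coxeter-label argument forcing a weight $\ge 2$ for $G(m,q,r)$ with $q>1$ or for the low-rank sporadic groups, and one must verify Rains' table row by row. A secondary concern is the identification of the rank-$r$ strong-action counts for $G(p,1,r)$ with their rank-$1$ avatars of table \ref{9cases}: one has to check that the Hilbert-scheme/symmetric-power construction \eqref{HS1} faithfully transports equivariant line bundles across ranks without collapsing inequivalent stacky structures, and that the rank-$1$ classification really sorts into the four groups $G(p,1,1)$ in the way asserted.
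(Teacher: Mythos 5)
Your overall route is the same as the paper's: the coarse moduli space of $[A/G]$ for a strong crystallographic action is a weighted projective space, which (once its weights are put in well-formed Delorme form) is smooth precisely when all weights equal $1$, i.e.\! when it is $\mathbb{P}^r$; the rest is an enumeration read off from the Kac/Popov/Rains tables. Since the paper itself offers no more than this table inspection, the strategy is fine. However, two of your concrete verification steps fail as written.

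First, in the Weyl case you assert that $A^{(2)}_{2r}$ has every Coxeter label equal to $1$. It does not: per figure \ref{weylC} its labels are $(2,2,\dots,2,1)$, and the associated scheme is $\mathbb{P}^r$ only after the well-forming reduction $\mathbb{P}(2,\dots,2,1)\simeq\mathbb{P}(1,\dots,1)$. The correct criterion is therefore not ``all labels equal $1$'' but ``the label vector well-forms to $(1,\dots,1)$,'' and one must then check that no other affine diagram passes this weaker test (e.g.\! $B^{(1)}_r$, $C^{(1)}_r$, $A^{(2)}_{2r-1}$ are already well-formed with a surviving label $2$, hence singular). Second, your count of strong actions for $G(p,1,r)$ with $p=3,4$ by ``restricting table \ref{9cases} to triples with $\mathrm{lcm}=p$'' cannot work: every entry of that table has $\mathrm{lcm}(d_s)=6$ by construction (that is exactly the condition for $H_{d_2,d_3,d_6}$ to act on $E$ through $\Z_6$), so the proposed restriction is empty for $p=3,4$. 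The counts $2$ and $3$ come from the analogous rank-one analyses for the $\Delta=3$ and $\Delta=4$ curves (equivalently from the corresponding rows of Rains' tables), not from a sub-list of the $\Delta=6$ table; and the transport of these rank-one counts to rank $r$ through \eqref{HS1} is, as you yourself note, an unproved step. Both errors are repairable and do not change the conclusion, but as written these steps do not establish the counts in the second row of table \eqref{uyyyy2}. Your final reading of that table --- restricting to the $G(p,1,r)$ columns and adding the single $W(A_r)$ action at $d=r+1$, while discarding the $G(4,2,r)$, $G(6,2,r)$, $G(6,3,r)$ columns whose coarse moduli are singular --- is the correct one.
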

 
 We stress that this statement holds also for the sporadic cases, that is,
 $A/G$ is non smooth for all exceptional Shephard-Todd groups.

 \section*{Acknowledgements}
 
 The author has benefit of fruitful discussions with B. Acharya, P. Argyres, M. Del Zotto, and M. Martone.
 
 \appendix

\section{Examples of absolutely incomplete geometries}\label{G!2}

Consider the three rank-$2$ complex reflection groups
\be
G=G_4,\ G_5,\ G_8\quad\text{with}\quad |G_4|=24,\quad |G_5|=72,\quad |G_{8}|=48.
\ee
The first two groups are defined over $\Z[e^{2\pi i/3}]$ while $G_8$ over $\Z[i]$.
There are several full lattices $\Lambda_G\subset\C^2$ which are invariant under these groups: for their list see \cite{jap,fuji}.
An invariant Hermitian form over the corresponding number field yields
a $G$-invariant polarization,
so $\C^2/\Lambda_G$ is a polarized Abelian surface whose automorphism group contains $G$.

\begin{claim} No choice of lattice $\Lambda_G$ produces a principally polarized Abelian
surface with $G$ a group of automorphisms.
\end{claim}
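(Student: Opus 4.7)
The plan is to convert the existence of a $G$-invariant principal polarization into a representation-theoretic question about Lagrangian subgroups of a finite symplectic $G$-module, and then rule them out case by case for the three sporadic groups.

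\emph{Setup.} Irreducibility of $G$ on the reflection representation $V = \mathbb{C}^{2}$ forces the $G$-invariant positive-definite Hermitian form $H$ on $V$ to be unique up to a positive real scalar. Its Riemann form $E$ (the appropriately normalized imaginary part) is then the unique up to $\mathbb{Q}^{\times}$ $G$-invariant alternating form on the $4$-dimensional rational vector space $V_{\mathbb{Q}}$, and it furnishes the polarization of any Abelian surface $A = V/\Lambda$ built from a $G$-invariant full lattice $\Lambda$. Scaling $E$ so that $E|_{\Lambda}$ is primitive $\mathbb{Z}$-valued, the polarization degree is $|\mathrm{Pf}(E|_{\Lambda})|$, so the claim is equivalent to the assertion $|\mathrm{Pf}(E|_{\Lambda})| > 1$ for every $G$-invariant $\Lambda$.

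\emph{Reduction to a finite check.} By Popov's theorem (reviewed in \S.\,\ref{s:strong}) every $G$-invariant full lattice fits in a chain $\Lambda_{\mathrm{root}} \subseteq \Lambda \subseteq \Lambda_{\mathrm{root}}^{\ast}$ for some root $G$-lattice $\Lambda_{\mathrm{root}}$ (listed in Popov's Table~2), with $L := \Lambda/\Lambda_{\mathrm{root}}$ an isotropic subgroup of the finite symplectic $G$-module $D := \Lambda_{\mathrm{root}}^{\ast}/\Lambda_{\mathrm{root}}$ endowed with the Weil pairing inherited from $E$. Writing $P_{0} := |\mathrm{Pf}(E|_{\Lambda_{\mathrm{root}}})|$, so $|D| = P_{0}^{2}$, the degree-$|L|$ isogeny $V/\Lambda_{\mathrm{root}} \to V/\Lambda$ implies that the polarization on $V/\Lambda$ has degree $P_{0}/|L|$. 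Hence a principal polarization exists iff $D$ admits a \emph{$G$-invariant Lagrangian subgroup}, i.e.\ a $G$-invariant maximal isotropic subgroup of order exactly $P_{0}$.

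\emph{Case-by-case verification.} For each of $G_{4}, G_{5}, G_{8}$ (and each of its root lattices per Popov) one computes $P_{0}$ together with the $G$-module structure of $D$, and verifies that no $G$-invariant Lagrangian subgroup of $D$ exists. The expected outcome, extractable from \cite{jap, fuji}, is that for $G_{4}$ the root lattice gives $P_{0} = 3$ and $D \simeq (\mathbb{F}_{3})^{2}$ on which $G_{4}$ acts through the reduction of its reflection representation modulo the ramified prime above $3$ in $\mathbb{Z}[\omega]$; this reduction is absolutely irreducible, so the only $G_{4}$-invariant subgroups of $D$ are $0$ and $D$, neither of which is Lagrangian. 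Entirely analogous arguments dispatch $G_{5}$ (also over $\mathbb{F}_{3}$) and $G_{8}$ (over $\mathbb{F}_{2}$, the ramified prime of $\mathbb{Z}[i]$ above $2$).

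\emph{Main difficulty.} The delicate step is the absolute-irreducibility verification in the third paragraph: in every case $p$ divides $|G|$, so generic representation theory does not preclude reducibility. The argument should exploit the concrete element-order data of these three exceptional Shephard--Todd groups: each contains an element of order coprime to $p$ whose action on $V$ has eigenvalues remaining distinct upon reduction, which forces the reduced representation to remain absolutely irreducible and hence to contain no proper invariant subgroup. Because both the root-lattice data and the normalized Hermitian form are tabulated for rank $2$ in \cite{jap, fuji}, this final step is ultimately a verification rather than a fresh computation.
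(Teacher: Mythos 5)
Your strategy is genuinely different from the paper's. The paper disposes of the claim in three lines by quoting Fujiki's classification (\textbf{Theorem 4.8} of \cite{fuji}) of the maximal automorphism groups of principally polarized Abelian surfaces: for surfaces isogeneous to $E_{e^{2\pi i/3}}\times E_{e^{2\pi i/3}}$ or $E_i\times E_i$ these maximal groups are the \emph{imprimitive} rank-2 reflection groups $G(6,1,2)$, $G(6,3,2)$, $G(4,1,2)$, and a primitive rank-2 reflection group such as $G_4$, $G_5$, $G_8$ cannot embed in a monomial group of the same rank. Your route --- reduce via Popov to a search for $G$-invariant Lagrangian subgroups of a finite symplectic $G$-module and kill them by irreducibility of a mod-$p$ reduction --- is a reasonable alternative that avoids invoking the classification of automorphism groups, but as written it is a programme with the decisive computations deferred, and it contains two concrete errors.

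First, you identify Popov's group $D=\Lambda_{\mathrm{root}}^{\ast}/\Lambda_{\mathrm{root}}$ (where $\Lambda^{\ast}_{\mathrm{root}}=\{\lambda:(1-g)\lambda\in\Lambda_{\mathrm{root}}\ \forall g\}$) with the symplectic discriminant group $\Lambda_{\mathrm{root}}^{\vee}/\Lambda_{\mathrm{root}}$ of order $P_0^2$. These are different lattices in general --- the paper explicitly flags their coincidence for $G(m,m,r)$ as a noteworthy accident --- and it is the symplectic dual $\Lambda^{\vee}$, not $\Lambda^{\ast}$, that controls both the order of the group in which your Lagrangian must live and the integrality of $E$ on an overlattice. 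The correct criterion is the existence of a $G$-invariant Lagrangian inside $(\Lambda_{\mathrm{root}}^{\ast}\cap\Lambda_{\mathrm{root}}^{\vee})/\Lambda_{\mathrm{root}}$, and your counting $|D|=P_0^2$ needs justification case by case. Second, your proposed test for absolute irreducibility of the reduction --- an element of order coprime to $p$ whose eigenvalues stay distinct mod $p$ --- is not a valid criterion: a direct sum of two distinct characters has such elements. (For $D\simeq\mathbb{F}_p^2$ plain irreducibility over $\mathbb{F}_p$ already rules out proper invariant subgroups, and for $G_4$ one can get it from the restriction to the quaternion subgroup $Q_8$, whose order is prime to $3$; but some such argument must actually be supplied for each of $G_4,G_5,G_8$ and each root lattice.) Since the values $P_0$ and the $G$-module structures of the discriminant groups are precisely the content of the claim, leaving them as ``a verification extractable from \cite{jap,fuji}'' means the proof is not yet complete.
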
 

\begin{proof} 
\textbf{Theorem 4.8} of \cite{fuji} lists the maximal groups of automorphisms of principally polarized
Abelian surfaces. In our case the Abelian surfaces invariant under $G_4$ or $G_5$ are isogeneous to $E_{e^{2\pi i/3}}\times E_{e^{2\pi i/3}}$
while the ones invariant under $G_8$ are isogeneous to $E_i\times E_i$.
In both cases the maximal automorphism groups are rank-2 imprimitive reflection groups,
respectively $G(6,1,2)$ and $G(4,1,2)$. In the list there is also the surface
$E_{\sqrt{-3}}\times E_{\sqrt{-3}}$ with automorphism group $G(6,3,2)\triangleleft G(6,1,2)$.
Since a rank-2 primitive reflection group cannot be a subgroup of an imprimitive one
of the same rank, 
we conclude that $\C^2/\Lambda_G$ does not admit a $G$-invariant \emph{principal} polarization. 
\end{proof}

\section{The principal $G_{12}$ geometry}\label{G!3}

In the lists of Abelian surfaces whose automorphism group is a complex reflection group
there is no surface for $G_{12}$ (of order $|G_{12}|=48$). However the surface in item 7 of table 9 
(and item 10 of table 12) of \cite{fuji} (or the 7th surface in \textbf{Theorem 13.4.5} of \cite{cristine})
has a maximal automorphism group generated by the four matrices
\be
\begin{aligned}
b_1&=\begin{pmatrix}-1+\sqrt{-2} & -2\\ -\sqrt{-2} & 1-\sqrt{-2}\end{pmatrix}
&\qquad b_2&=\begin{pmatrix}1 & \sqrt{-2}\\ \sqrt{-2} & -1\end{pmatrix}\\
b_3&=\begin{pmatrix}-1 & -1\\ 1 & 0\end{pmatrix}
&b_4&=\begin{pmatrix}1 & 1+\sqrt{-2}\\ 0 & -1\end{pmatrix}
\end{aligned}
\ee
The first observation is that the two elements $b_3$ and $b_4$ suffice to generated the group since
\be
b_1=-(b_3b_4)^2=(b_4b_3)^4(b_3b_4)^2\qquad b_2=(b_1b_4)^3b_4^{-1}.
\ee
Now redefine the generators as $(b_3,b_4)\leadsto (S,T)$
\be
S=b_4,\qquad T=-b_3\equiv b_3(b_4b_3)^4
\ee
(so that $b_3=T(ST)^4$). The two generators satisfy the relations
\be
S^2=1,\quad T^3=-1,\quad (ST)^4=-1,
\ee
which is the presentation of $G_{12}$ given in \cite{ST1}.
From \textbf{Theorem 4.8} of \cite{fuji} we get

\begin{fact} A principal polarized Abelian surface with automorphism group the Shephard-Todd
group $G_{12}$ is isomorphic as a torus to $E_{\sqrt{-2}}\times E_{\sqrt{-2}}$
and has a period matrix
\be
\tau=\begin{pmatrix}\frac{1}{2}+\sqrt{-2} & \frac{\sqrt{-2}}{2}\\
\frac{\sqrt{-2}}{2} & \frac{1}{2}+\sqrt{-2}\end{pmatrix}
\ee
\end{fact}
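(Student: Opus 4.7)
The plan is to invoke the classification of maximal automorphism groups of principally polarized Abelian surfaces (\textbf{Theorem 4.8} of \cite{fuji}) and to identify one entry on that list with $G_{12}$ via its Shephard-Todd presentation, using the $(S,T)$ calculation already sketched above.

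First I would observe that, since $\mathbb{K}_{G_{12}}=\mathbb{Q}(\sqrt{-2})$ (Table \ref{tab}) and $G_{12}$ acts irreducibly on $\mathbb{C}^2$, any full $G_{12}$-invariant lattice in $\mathbb{C}^2$ is, up to rescaling, a $\mathbb{Z}[\sqrt{-2}]$-module; since $\mathbb{Z}[\sqrt{-2}]$ is a PID and the $G_{12}$-invariant lattices are enumerated in \cite{bolza,bolza2}, any such Abelian surface is isogeneous (as a complex torus) to $E_{\sqrt{-2}}\times E_{\sqrt{-2}}$. By Schur's lemma the $G_{12}$-invariant Hermitian form on $\mathbb{C}^2$ is unique up to a positive real scale, and demanding that its imaginary part induce a \emph{unimodular} integer pairing on the chosen lattice simultaneously fixes the scale, selects the correct lattice, and produces a principal polarization.

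Next I would scan the entries of \textbf{Theorem 4.8} of \cite{fuji} (equivalently Table 9 of \cite{fuji}, or \textbf{Theorem 13.4.5} of \cite{cristine}) for a principally polarized Abelian surface whose maximal automorphism group contains a rank-$2$ primitive complex reflection group defined over $\mathbb{Q}(\sqrt{-2})$. The only candidate is item 7 of Table 9, with maximal automorphism group generated by the four matrices $b_1,b_2,b_3,b_4$ displayed in the excerpt. I would then verify that this maximal group actually \emph{equals} $G_{12}$: setting $S=b_4$ and $T=-b_3$, direct matrix multiplication gives
\be
S^2=\mathbf{1},\qquad T^3=-\mathbf{1},\qquad (ST)^4=-\mathbf{1},
\ee
which is precisely the Shephard-Todd presentation of $G_{12}$ \cite{ST1}; moreover the identities
\be
b_1=-(b_3b_4)^2,\qquad b_2=(b_1b_4)^3 b_4^{-1}
\ee
show that $b_1,b_2\in\langle S,T\rangle$, so the full automorphism group coincides with $G_{12}$. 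Finally the period matrix is read off from item 7 of Table 9 of \cite{fuji}; equivalently one computes directly the unique normalized $G_{12}$-invariant principal Hermitian form on $\mathbb{Z}[\sqrt{-2}]^2$ and puts it into standard Siegel form, which yields the displayed $\tau$.

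The main obstacle will be the uniqueness in the second step: showing that no \emph{other} entry in Fujiki's list admits a $G_{12}$-action, and that the maximal group produced by the $b_i$'s is neither a proper extension of $G_{12}$ nor a distinct rank-$2$ reflection group sharing the field of definition $\mathbb{Q}(\sqrt{-2})$. Both points are settled by counting orders ($|G_{12}|=48$ matches the order of $\langle S,T\rangle$ by the presentation above) together with the observation that, among the irreducible rank-$2$ Shephard-Todd groups in Table \ref{tab}, $G_{12}$ is the unique one defined over $\mathbb{Z}[\sqrt{-2}]$ (the others having field of definition $\mathbb{Q}(i)$ or $\mathbb{Q}(e^{2\pi i/3})$), which rules out all ambiguities.
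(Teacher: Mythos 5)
Your proposal is correct and follows essentially the same route as the paper: both identify item 7 of Fujiki's Table 9 (Theorem 4.8 of \cite{fuji}) as the relevant entry, recognize its maximal automorphism group as $G_{12}$ by passing to the generators $S=b_4$, $T=-b_3$ and checking the Shephard--Todd relations $S^2=1$, $T^3=-1$, $(ST)^4=-1$ together with $b_1,b_2\in\langle S,T\rangle$, and then read off the period matrix from that entry. The extra scaffolding you supply (the $\Z[\sqrt{-2}]$-module structure of the invariant lattice and the order/field-of-definition count ruling out other entries) is a reasonable elaboration of what the paper leaves implicit, but it does not change the underlying argument.
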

This Abelian variety is the Jacobian of the Bolza curve \cite{bolza,bolza2}
\be
y^2=x^5-x
\ee
which is the only genus-2 curve with automorphism group $G_{12}$.
In particular we have the following automorphisms
\begin{align}
&\text{order 2} &&(x,y)\to \left(-\frac{x+i}{1+ix},\frac{2\sqrt{2} y}{(1+i x)^3}\right)\\
&\text{order 6} &&(x,y)\to \left(i\,\frac{x-1}{x+1},\frac{8 y}{(1-i)^3(x+1)^3}\right)\\
&\text{order 8} &&(x,y)\to \left(i x, e^{i\pi/4}y\right).
\end{align}

\begin{small}

\end{small}

\end{document}